\newtheorem{theorem}{Theorem}[section]
\newtheorem{definition}[theorem]{Definition}
\newtheorem{lemma}[theorem]{Lemma}
\newtheorem{corollary}[theorem]{Corollary}
\newtheorem{thmx}{Theorem}
\newcommand{\R}{\mathbb{R}}
\newcommand{\eps}{\varepsilon}
\newcommand{\tw}{\tilde{w}}
\newcommand{\tR}{\tilde{R}}
\renewcommand{\r}{r}
\renewcommand{\P}{\mathbb{P}}
\DeclareMathOperator{\E}{\mathbb{E}}
\newcommand{\D}{\mathcal{D}}
\def \bX {\mathbf{X}}
\def \bY {\mathbf{Y}}
\def \bZ {\mathbf{Z}}
\newcommand{\poly}{\mathrm{poly}}
\newcommand{\MAX}{\mathsf{MAX}}
\newcommand{\OPT}{\mathsf{OPT}}
\newcommand{\Obj}{\mathsf{Obj}}
\newcommand{\ALG}{\mathsf{ALG}}
\newcommand{\free}{\mathsf{free}}
\newcommand{\bbig}{\mathsf{big}}
\newcommand{\rand}{\mathsf{rand}}
\title{Variable Decomposition for\\ Prophet Inequalities and Optimal Ordering}
\author{Allen Liu\thanks{Massachusetts Institute of Technology, email: cliu568@gmail.com} \and Renato Paes Leme\thanks{Google Research, email: renatoppl@google.com}  \and Martin P\'{a}l\thanks{Google Research, email: mpal@google.com} \and Jon Schneider\thanks{Google Research, email: jschnei@google.com} \and Balasubramanian Sivan\thanks{Google Research, email: balusivan@google.com}}
\date{}
\begin{document}

\maketitle
\thispagestyle{empty}

\begin{abstract}
We introduce a new decomposition technique for random variables that maps a generic instance of the prophet inequalities problem to a new instance where all but a constant number of variables have a tractable structure that we refer to as $(\eps, \delta)$-smallness. Using this technique, we make progress on several outstanding problems in the area:

\begin{itemize}
    \item We show that, even in the case of non-identical distributions, it is possible to achieve (arbitrarily close to) the optimal approximation ratio of $\beta \approx 0.745$ as long as we are allowed to remove a small constant number of distributions.  %
    \item We show that for \textit{frequent} instances of prophet inequalities (where each distribution reoccurs some number of times), it is possible to achieve the optimal approximation ratio of $\beta$ (improving over the previous best-known bound of $0.738$).
    \item We give a new, simpler proof of Kertz's optimal approximation guarantee of $\beta \approx 0.745$ for prophet inequalities with i.i.d. distributions. The proof is primal-dual and simultaneously produces upper and lower bounds.
    \item Using this decomposition in combination with a novel convex programming formulation, we construct the first Efficient PTAS for the Optimal Ordering problem.%
\end{itemize}

\end{abstract}

\newpage
\setcounter{page}{1}

\section{Introduction}

The problem of prophet inequalities is a classic problem in optimal stopping theory consisting of designing stopping times for sequences of independent random variables that always achieve (in
expectation) a certain fraction of the maximum value in the sequence. Formally, it
asks for the maximum constant $\alpha$ such that for any independent
random variables  $X_1, \hdots, X_n$ with known distributions, it is possible to design a stopping time $\tau$ such that $\E[X_\tau] \geq \alpha \cdot \E[\max_i X_i]$. 

Another way to phrase this problem is to imagine $n$ boxes: each is labelled
with a distribution $F_i$ and inside the box there is a random variable $X_i$ distributed independently according to $F_i$. A gambler is then presented with the boxes in some sequence. After opening each box, the algorithm inspects the value of $X_i$ and either takes it and ends the game or discards it (discarded items are forever lost) and continues opening further boxes. The goal is to design algorithms for the gambler that are competitive against a \emph{prophet} who sees the content of all the boxes without having to open them, and therefore just picks the box containing the maximum variable. We will be concerned with two main variations: (i) \emph{free order}, where the algorithm is allowed to select the order in which boxes are inspected and; (ii) \emph{random order} where boxes are presented in uniformly random order. This random order model is sometimes referred to as the ``prophet secretary'' model (see e.g. \cite{esfandiari2017prophet,EHKS18,CSZ19}).

An important special case is the case where the distributions in all boxes are identical. In this case, the free order and random order versions are identical. An upper bound for this case was given by  \cite{kertz1986}
who showed that no algorithm can obtain an approximation ratio better than the solution $\beta
\approx 0.745$ to the equation:
\begin{equation}\label{eq:kertz}
\int_0^1 [(\beta^{-1}-1) - y (\log y - 1)]^{-1} dy = 1
\end{equation}
\cite{CFHOV17} recently demonstrated a policy matching this upper bound, closing the i.i.d. question. This established $\beta \approx 0.745$ as the golden standard for prophet inequalities and a major open question is whether the same bound can be achieved in more complex settings, like non-identical distributions.

\paragraph{Main technique: small variables decomposition.} In this paper, we introduce a new decomposition technique that maps an instance of prophet inequalities to another instance where all but a constant number of random variables have a tractable structure, encoded by the notion of $(\eps,\delta)$-smallness. A random variable $X$ is $(\eps,\delta)$-small if $\Pr[X \geq \delta] \leq \eps$. These are variables with most of their probability mass concentrated near zero.

The decomposition works as follows: given any instance $\bX = \{X_1, X_2, \dots, X_n\}$ of random variables and a threshold $t > 0$ we consider a new set of random variables formed by $Z_i := \max(X_i, t) - t$. As we increase $t$, more and more of these random variables become small. Our algorithms then follow a similar template: if we ever see an item larger than $t$, we accept it; otherwise, we run an algorithm on these residues exploiting the fact that they are small.

This decomposition enables various new algorithmic results like establishing the tight approximation ratio for various settings (e.g. frequent instances and imperfect prophets) as well as an efficient PTAS for the Optimal Ordering Problem. Another byproduct is a simpler proof of $\beta \approx 0.745$ for i.i.d. settings where we simulateously establish the upper and lower bounds. Below, we describe the main building block together with our results. 

\paragraph{Building block: Prophet inequalities for small variables.} We start by establishing prophet inequalities when all variables are $(\eps,\delta)$-small:
\begin{thmx}[Restatement of Corollary \ref{cor:eps-delta-small}]\label{thm:iid_intro}
Given a set of $(\eps,\delta)$-small variables $\bX = \{X_1,\dots, X_n\}$, there exists a stopping rule $\tau$ such that
$\E[X_{\tau}] \geq (\beta-O(\eps)) \cdot \E[\max_i X_i] - \delta.$, where $\beta \approx 0.745$.
\end{thmx}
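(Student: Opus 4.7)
The plan is to reduce the $(\eps,\delta)$-small instance to a near-continuous, Poisson-process version of the i.i.d.\ prophet problem, on which one can run an analogue of the CFHOV time-varying threshold rule that attains the Kertz bound $\beta$.

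\emph{Truncation and Poissonization.} First I truncate each $X_i$ below $\delta$: replacing $X_i$ by $\tilde{X}_i := X_i \cdot \one[X_i \geq \delta]$ changes the prophet's value and any online rule's value by at most $\delta$, absorbing the additive $-\delta$ term in the statement. Thus I may assume $X_i \in \{0\} \cup [\delta,\infty)$ with $\Pr[X_i > 0] \leq \eps$. Define the tail intensity $M(t) := \sum_i (1 - F_i(t))$. Smallness gives $1 - F_i(t) \leq \eps$, so $\log F_i(t) = -(1 - F_i(t))(1 + O(\eps))$ and
$$\E[\max_i X_i] \;=\; \int_0^\infty \bigl(1 - \textstyle\prod_i F_i(t)\bigr)\,dt \;=\; (1 + O(\eps)) \int_0^\infty \bigl(1 - e^{-M(t)}\bigr)\,dt.$$
Up to $O(\eps)$ error, the prophet's value depends on the $F_i$'s only through the single intensity function $M(\cdot)$.

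\emph{Threshold policy and analysis.} Let $\phi: [0,M(\delta)] \to [\delta,\infty)$ be the Kertz-optimal threshold profile. Processing the variables in arbitrary order, I maintain the running cumulative mass $s_i := \sum_{j<i}(1 - F_j(\tau_j))$, set $\tau_i := \phi(s_i)$, and accept $X_i$ iff $X_i \geq \tau_i$. Smallness forces every increment of $s_i$ to be at most $\eps$, so $(s_i)$ traces out a mesh-$\eps$ discretization of $[0, M(\delta)]$. The probability of being alive at step $i$ is $\prod_{j<i} F_j(\tau_j) = e^{-s_i(1+O(\eps))}$, and writing $\E[X_\tau]$ as a Riemann sum indexed by $s_i$ converts it to an integral that matches the continuous-limit expression from the CFHOV analysis. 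Comparing against the identity for $\E[\max_i X_i]$ above reduces the worst-case ratio to the variational problem Kertz solves, whose optimum is $\beta$.

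The main obstacle is the heterogeneity of the $F_i$: the i.i.d.\ analysis is stated for a single CDF, whereas here each variable has its own. The mass reparameterization by $s$ flattens this into a single intensity measure $-dM$, after which the i.i.d.\ calculation transfers modulo two $O(\eps)$ error sources---the exponential approximation $\prod F_i \approx e^{-M}$, and the discretization error from replacing the continuous sweep by the discrete sequence $(s_i)$. Both are bounded by $O(\eps) \cdot \E[\max_i X_i]$ via smallness, giving the claimed $(\beta - O(\eps))$-approximation.
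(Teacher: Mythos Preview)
Your truncation step and the exponential approximation $\prod_i F_i(t)\approx e^{-M(t)}$ for the prophet are correct and appear in the paper in essentially the same form. The difficulty is in the second half: the assertion that the cumulative-mass clock $s_i=\sum_{j<i}(1-F_j(\tau_j))$ ``flattens'' the heterogeneous instance into the single intensity $M$, so that the i.i.d.\ Kertz calculation transfers.

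Write your algorithm's value as $\sum_i e^{-s_i}\,\tilde R_i(\tau_i)$ with $\tilde R_i(\tau)=\tau(1-F_i(\tau))+\int_\tau^\infty(1-F_i(x))\,dx$. The first summand equals $\tau_i\,\Delta s_i$ and does become a Riemann integral in $s$. But the tail piece $\int_{\tau_i}^\infty(1-F_i(x))\,dx$ depends on the shape of the individual $F_i$ above $\tau_i$, not just on the increment $\Delta s_i$ or on $M$. Two instances with identical $M(\cdot)$ but different allocations of that intensity among the variables will give your policy different values --- and the value also depends on the processing order, which you have left unspecified (``arbitrary order'' cannot be right, and random order would require its own averaging argument that you have not supplied). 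So there is no integral expressible in $(s,\phi,M)$ alone that your Riemann sum converges to, and the reduction to a one-dimensional variational problem is not justified as written. You also never say what $\phi$ is; in the paper the threshold profile is instance-dependent through $F=\prod_i F_i$, and choosing it is exactly where the Kertz ODE enters.

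The paper resolves precisely this heterogeneity issue by assigning each variable an independent uniform timestamp $t_i\in[0,1]$ and using a time-threshold $r(t)=F^{-1}(y(t))$, where $y$ solves $\exp\bigl(\int_0^t\log y\bigr)=-\beta\, y'$ with $y(0)=1$, $y(1)=0$. Because every variable has the \emph{same} arrival-time distribution, the per-variable contributions aggregate cleanly: the lower bound on $\ALG$ becomes $\int_0^1\bigl(\sum_i\tilde R_i(r(t))\bigr)\exp\bigl(-\int_0^t w(r(s))\,ds\bigr)\,dt$, and the inner sum $\sum_i\tilde R_i(r)=\tilde R(r)$ depends only on $w\approx M$ up to a $1+O(\eps)$ factor. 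The random timestamps are doing real work --- they are the device that averages out the individual tail shapes and produces an aggregate directly comparable to the prophet integral. Your deterministic mass-clock idea is in the right spirit, but it is missing this averaging mechanism; to close the gap you would essentially need to reintroduce the random-time framework, or else give a separate argument that the processing order is immaterial up to $O(\eps)$, which is nontrivial and is not in your sketch.
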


To prove Theorem \ref{thm:iid_intro}, we design what we call \textit{time based policies} for small prophets instances. In such a policy, we interpret each random variable arriving at a uniform random time in $[0, 1]$. Then, in order of arrivals, we choose whether to accept each variable based on whether it is above a time-based threshold (e.g. if variable $X$ arrives at time $t$, we select it if $X \geq r(t)$ for some function $r$). For small prophets, we can show that the performance of the optimal such time-based policy is exactly the Kertz bound.\\

Before discussing the new results we obtaim using Theorem~\ref{thm:iid_intro}, we note here that we obtain as a corollary of Theorem~\ref{thm:iid_intro} a simple $\beta \approx 0.745$-approximation for i.i.d. variables (not necessarily small). The proof obtained via this construction is primal-dual and simultaneously obtains the upper and lower bound on the approximation factor.

\paragraph{Imperfect prophets}
A major open problem in prophet inequalities has been to determine whether a $\beta$ approximation is possible for non-identical distributions. It remained open for a while for both free order and random order settings. For the random order case, this question was recently resolved by~\cite{CSZ19}, who show an upper bound on the approximation ratio of $\sqrt{3} - 1 \approx 0.732$, strictly less than the Kertz bound of $0.745$. However, we show that if we eliminate a constant number of variables, the gap between i.i.d. and non-i.i.d. disappears and we can obtain a $0.745$ approximation.

We demonstrate a policy that achieves the approximation ratio of $\beta - \varepsilon$ when compared to the expected max of all but a constant number of boxes (a constant depending only on $\varepsilon$ but not on $n$) . In fact, we prove something slightly stronger: that by removing $\poly(\eps^{-1})$ random variables from an instance, we can find a stopping time for the resulting instance that is $(\beta - \eps)$-competitive with the prophet benchmark.

\begin{thmx}[Restatement of Theorem \ref{thm:nearby}]
\label{thm:nearby_main}
Consider a collection of $n$ random variables $\bX = \{X_1, X_2, \dots, X_n\}$. Then, for any $\eps > 0$, there exists a subset $\bX'$ of $\bX$ containing $n - \Tilde{\Theta}(\eps^{-3})$ of these variables and a stopping rule $\tau$ for $\bX'$ such that if the items arrive in a uniform random order, $\E[X'_{\tau}] \geq (\beta - \eps)\E[\max_i X'_i]$. The subset $\bX'$ and a stopping rule $\tau$ achieving this guarantee can be constructed in polynomial time. 
\end{thmx}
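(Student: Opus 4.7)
My plan is to apply the variable decomposition from the introduction: for a well-chosen threshold $t$, discard a constant number of variables whose upper tails are too heavy to be $(\eps,\delta)$-small after shifting, and invoke Theorem~\ref{thm:iid_intro} on the residues of the remaining ones.

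I would first pick $t$ and $\delta$ via an averaging/pigeonhole argument. Writing $q_i(s) := \P[X_i \geq s]$ and $F(s) := \sum_i q_i(s)$, the goal is to find $t, \delta$ satisfying (i) $F(t) \geq \log(1/\eps)/\eps$, so that $\P[\exists i : X_i \geq t] \geq 1 - \eps$ and sufficiently many items will cross level $t$ in a short tail window; (ii) $F(t+\delta) = \tilde{O}(\eps^{-2})$, so that at most $\tilde{O}(\eps^{-3})$ indices have $q_i(t+\delta) > \eps$ (using the trivial bound $|\{i : q_i(s) > \eps\}| \leq F(s)/\eps$); and (iii) $\delta = O(\eps \cdot t)$. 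Such a pair exists by pigeonhole over a geometric grid of candidate thresholds, since $F$ is nonincreasing. The removed set $\bX \setminus \bX' := \{i : q_i(t+\delta) > \eps\}$ then has size $\tilde{\Theta}(\eps^{-3})$.

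For $i \in \bX'$, the residues $Z'_i := \max(X'_i, t) - t$ satisfy $\P[Z'_i \geq \delta] = q_i(t+\delta) \leq \eps$, so they are $(\eps,\delta)$-small. By Theorem~\ref{thm:iid_intro}, there is a time-based stopping rule $\tau$ on $(Z'_i)_{i \in \bX'}$ with $\E[Z'_\tau] \geq (\beta - O(\eps))\E[\max_i Z'_i] - \delta$. I would then lift $\tau$ to a policy on $\bX'$: when $X'_i$ arrives at time $s_i \in [0,1]$, accept iff $X'_i \geq t + r(s_i)$, where $r(\cdot) \geq 0$ is the threshold function underlying $\tau$. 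Upon stopping, $X'_\tau = t + Z'_\tau$, so $\E[X'_\tau] = \E[Z'_\tau] + t \cdot \P[\text{stop}]$. Writing $E := \{\exists i \in \bX' : X'_i \geq t\}$ and splitting on $E$ (noting $\max_i Z'_i = 0$ on $E^c$) gives $\E[\max_i X'_i] \leq t + \E[\max_i Z'_i]$, so the target inequality reduces to $\P[\text{stop}] \geq \beta - O(\eps) + \delta/t$, which simplifies to $\P[\text{stop}] \geq \beta - O(\eps)$ under our choice $\delta = O(\eps \cdot t)$.

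The main obstacle will be verifying the stopping-probability bound. The lifted rule may fail to trigger even on $E$, because the Kertz thresholds $r(s_i)$ could exceed the realized $Z'_i$ values. My plan here is to augment the policy with a safety net: on a carefully sized tail window of $[0,1]$ of width $\gamma = \Theta(\eps)$, the algorithm overrides the Kertz threshold and accepts any item with $X'_i \geq t$. Since $F(t)$ is large, a Chernoff argument shows that conditional on $E$, the expected number of items above $t$ is $\tilde{\Omega}(1/\eps)$ and concentrated, so with probability at least $\beta$ at least one such item lands in the window; this pushes $\P[\text{stop}]$ above $\beta - O(\eps)$. The deviation from the optimal Kertz schedule on the $\gamma$-window loses only $O(\gamma) = O(\eps)$ in the small-prophets bound, which is absorbed into the overall $O(\eps)$ slack. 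Polynomial-time constructibility of $\bX'$ and $\tau$ then follows from a grid search over candidate $(t,\delta)$ and the efficient construction underlying Theorem~\ref{thm:iid_intro}.
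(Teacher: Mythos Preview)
Your single-pass decomposition has a real gap in the safety-net step. You control $F(t)=\sum_{i\in\bX}\P[X_i\ge t]$, but the safety net must fire using only items of $\bX'$, so what you actually need is that $\sum_{i\in\bX'}\P[X_i\ge t]$ is $\tilde\Omega(1/\eps)$. Nothing in conditions (i)--(iii) guarantees this: the variables you remove are exactly those with $q_i(t+\delta)>\eps$, and these may account for essentially all of $F(t)$. For instance, take $K=\tilde\Theta(\eps^{-2})$ variables that equal $t+2\delta$ with probability $2\eps$ and are $0$ otherwise, together with many variables supported well below $t$; then $F(t)$ is large, but every variable contributing to it is removed, so $F_{\bX'}(t)$ is tiny and no item in $\bX'$ lands above $t$ in a window of width $\Theta(\eps)$. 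In that regime your target inequality $\P[\text{stop}]\ge\beta-O(\eps)+\delta/t$ fails (and, relatedly, $\delta=O(\eps t)$ need not be $O(\eps\,\MAX(\bX'))$ once $\MAX(\bX')\ll t$).

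The paper sidesteps this by not trying to keep the safety net inside $\bX'$ in one shot. It first proves the \emph{weak} statement where the stopping rule may use all of $\bX$ while the prophet is restricted to $\bX'$: there the safety net is precisely the removed ``big'' set $\bX_{\bbig}$, whose members satisfy $\P[X_i\ge t^*(1+\eps)]\ge\eps$ by construction, so enough of them land in the last $\eps$-fraction of arrivals with probability $1-O(\eps)$. To upgrade to the strong statement (policy also restricted to $\bX'$), the paper \emph{iterates}: set $\bX_0=\bX$ and repeatedly apply the weak version to get $\bX_0\supset\bX_1\supset\cdots$, each step peeling off $\tilde O(\eps^{-2})$ variables. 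If at some step $\MAX(\bX_{k+1})\ge(1-\eps)\MAX(\bX_k)$, then the weak guarantee on $\bX_k$ is already a strong guarantee and we output $\bX'=\bX_k$. Otherwise $\MAX(\bX_k)$ shrinks geometrically, and after $\tilde O(1/\eps)$ rounds some single discarded variable has expectation dominating $\MAX(\bX_{k_{\max}})$, making the instance trivial. The $\tilde\Theta(\eps^{-3})$ count is thus $\tilde O(1/\eps)$ rounds times $\tilde O(\eps^{-2})$ removals per round, not a one-shot Markov bound as in your sketch. Your pigeonhole for $(t,\delta)$ is also underspecified, but the missing iteration is the essential issue.
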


One way to interpret this result is that all instances of this problem are ``close'' to instances where it is possible to achieve the Kertz bound: the only obstructions to achieving the Kertz bound for non-identical distributions are some small (constant-sized) sets of distributions.

Theorem \ref{thm:nearby_main} has a number of interesting applications. For example, Theorem \ref{thm:nearby_main} immediately implies an asymptotically tight prophet inequality for the $\Tilde{\Theta}(\eps^{-3})$th order statistic (Corollary \ref{cor:kthlargest}). This is what we call the \emph{imperfect prophet} case, where the benchmark picks the $k$-largest variable instead of the largest. This is equivalent to the $k$-optimal price benchmark in \cite{goldberg2006competitive}.

\paragraph{Frequent Instances}
Theorem \ref{thm:nearby_main} also implies improved results for \textit{frequent prophets}. Frequent prophets, introduced in \cite{AEEHKL17}, are a subclass of prophet inequality instances where each distribution must be repeated at least some number of times; an instance is $m$-frequent if each distribution appears at least $m$ times. In \cite{AEEHKL17}, the authors show how to design a $0.738$-competitive policy for $\Theta(\log n)$-frequent instances. We improve on this by showing how to obtain a $(\beta - \eps)$-competitive policy for $O_{\eps}(1)$-frequent instances. 

\begin{thmx}[Restatement of Theorem \ref{thm:frequent}]
\label{thm:frequent_intro}
Consider an $m$-frequent collection of independent random variables. If $m = \Tilde{\Omega}(\eps^{-2})$ for some $\eps >0$, it is possible to construct a stopping time $\tau$ such that $\E[X_\tau] \geq (\beta - \eps) \E[\max_i X_i]$, where $\beta$ is the Kertz bound. A policy achieving this guarantee can be constructed in polynomial time. 
\end{thmx}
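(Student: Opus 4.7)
The plan is to derive Theorem \ref{thm:frequent_intro} from Theorem \ref{thm:nearby_main} by a lossless ``padding'' argument that exploits the $m$-frequent structure. Theorem \ref{thm:nearby_main} already produces a $(\beta - \eps)$-competitive stopping rule $\tau$ on a subset $\bX' \subseteq \bX$ obtained by removing $r = \Tilde{\Theta}(\eps^{-3})$ variables. I would extend $\tau$ to all of $\bX$ by passing on each discarded variable upon arrival, so the only remaining task is to show that, when $\bX$ is $m$-frequent with $m = \Tilde{\Omega}(\eps^{-2})$, the drop $\E[\max_\bX X_i] - \E[\max_{\bX'} X_i]$ is bounded by $O(\eps) \cdot \E[\max_\bX X_i]$. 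Combined with the $(\beta - \eps/2)$ guarantee from Theorem \ref{thm:nearby_main} on $\bX'$ and a final rescaling of $\eps$, this yields the claim.

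For the key calculation, enumerate the distinct distributions as $D_1,\ldots,D_k$, let $m_l \geq m$ denote the multiplicity of $D_l$, and let $Y_l$ be the maximum over the $m_l$ iid copies of $D_l$. By iid symmetry among the copies of $D_l$, for any fixed copy $X_j$ of $D_l$,
\[
\E\!\left[X_j \cdot \one[X_j = \max_i X_i]\right] \;=\; \frac{1}{m_l}\, \E\!\left[Y_l \cdot \one[Y_l = \max_{l'} Y_{l'}]\right].
\]
Writing $q_l := \E[Y_l \cdot \one[Y_l = \max_{l'} Y_{l'}]]$ (so that $\sum_l q_l = \E[\max_\bX X_i]$) and letting $r_l$ be the number of copies of $D_l$ that are removed, summing over removed variables gives
\[
\E[\max_\bX X_i] - \E[\max_{\bX'} X_i] \;\leq\; \sum_l \frac{r_l}{m_l}\, q_l.
\]

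Bounding this weighted sum by $O(\eps)\sum_l q_l$ is where I expect the main difficulty. The naive estimate uses only $\sum_l r_l \leq r$ and $m_l \geq m$, yielding the far too weak $r/m = \Tilde{O}(\eps^{-1})$. To overcome this, the argument must exploit additional structure. One natural route is to open up the proof of Theorem \ref{thm:nearby_main} so that the removal can be made to spread across distributions, guaranteeing $r_l \leq O(\eps)\, m_l$ for each $l$, which would immediately give $\sum_l \frac{r_l}{m_l} q_l \leq O(\eps) \sum_l q_l$. An alternative route is to pre-classify the distributions into a small ``dominant'' class---the at most $O(\eps^{-1})$ distributions $D_l$ whose $q_l$ constitutes a non-negligible share of $\E[\max_\bX X_i]$---and a ``light'' class; one then protects the dominant class by forcing all of its copies into $\bX'$ and folds a targeted accept rule for those distributions into the extended policy, after which the residual light-class contribution to $\sum_l \frac{r_l}{m_l} q_l$ is genuinely smaller than $\eps \sum_l q_l$. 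Either route, combined with Theorem \ref{thm:nearby_main}, delivers the advertised $(\beta - \eps)$-approximation of $\E[\max_\bX X_i]$ and is constructive in polynomial time.
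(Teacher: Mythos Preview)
Your high-level plan---invoke the ``remove a few variables'' theorem and then use the $m$-frequent structure to argue that $\MAX(\bX') \approx \MAX(\bX)$---is exactly the paper's strategy, but two things go wrong in the execution.

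First, you invoke the wrong theorem. You call Theorem~\ref{thm:nearby_main}, which removes $r = \tilde{\Theta}(\eps^{-3})$ variables and produces a stopping rule \emph{restricted to} $\bX'$; you then extend it to $\bX$ by passing on the discarded variables. But this is precisely the content of the weaker Theorem~\ref{thm:nearby_weak_intro}, which already gives a stopping rule on all of $\bX$ with $\OPT(\bX) \ge (\beta - O(\eps))\,\MAX(\bX')$ while removing only $k = \tilde{\Theta}(\eps^{-2})$ variables---a full factor of $\eps^{-1}$ fewer. That extra factor is the sole source of the difficulty you flag: with $k$ in place of $r$, your ratio $k/m$ is no longer $\tilde{O}(\eps^{-1})$.

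Second, the ``naive estimate'' you dismiss is exactly what the paper uses. Your own inequality $\sum_l (r_l/m_l)\,q_l \le (r/m)\sum_l q_l$ (which follows termwise from $r_l \le r$ and $m_l \ge m$) is, in the paper, packaged as Lemma~\ref{lem:freqcoupling}: a one-line symmetry argument showing $\MAX(\bX') \ge (1 - k/m)\,\MAX(\bX)$ for \emph{any} subset $\bX'$ of size $|\bX|-k$, with no control whatsoever over which distributions lose copies. The entire proof of Theorem~\ref{thm:frequent} in the paper is literally ``combine Lemma~\ref{lem:freqcoupling} with Theorem~\ref{thm:nearby_weak_intro}.'' Your two proposed routes---forcing the removal to spread across distributions, or a dominant/light classification with a targeted accept rule---are unnecessary detours, and neither is fleshed out enough to be a proof on its own.
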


Both the bounds in Theorems~\ref{thm:nearby_main} and~\ref{thm:frequent_intro} are tight. I.e., it is impossible to obtain better-than-$\beta$ approximation removing any constant (independent of $n$) number of variables in Theorem~\ref{thm:nearby_main} as the i.i.d. instance is a special case where it is known that we can't obtain better-than-$\beta$ approximation. Likewise the frequent prophets in Theorem~\ref{thm:frequent_intro} also have the i.i.d. instance as a special case.

\paragraph{An Efficient PTAS (EPTAS) for the Optimal Ordering problem} The Optimal Ordering Problem has the same setting as free-order prophet inequalities: a collection of random variables with known distributions that the algorithm can inspect in any order of its choice. Instead of competing with the prophet benchmark, however, we compete against the optimal online policy. The main complexity is finding the optimal order in which to inspect the variables. For each fixed order it is simple to derive the optimal policy by backwards induction. The Optimal Ordering problem was shown to be NP-hard by \cite{agrawal2020optimal} who also give an FPTAS if all distributions have support size at most $3$. If all the distributions have constant support, a PTAS is given in \cite{fu2018ptas}. \cite{chakraborty2010approximation} give a PTAS for the Posted Prices problem  without any assumption on the support of the distribution. This PTAS can be translated to a PTAS for the Optimal Ordering problem via the reduction in \cite{correa2019pricing}. The PTAS in~\cite{chakraborty2010approximation} has running time $O(n^{\poly(1/\eps)})$ and hence it is a PTAS but not an Efficient PTAS\footnote{An algorithm is a PTAS if its running time $f(n,\eps)$ is a polynomial in $n$ for any fixed $\eps>0$. In particular the exponent of $n$ may increase with $\eps$. An EPTAS (as defined in \cite{cesati1997efficiency}) is an algorithm with running time of the form $f(\eps) \poly(n)$, i.e. where $\eps$ can affect the constants, perhaps exponentially, but not the exponent of $n$. An EPTAS sits in between a FPTAS and a PTAS.} (EPTAS).

Using our decomposition technique, we give the first EPTAS with running time $O(\exp(\eps^{-O(1)}) \cdot \poly(n))$ for the Optimal Ordering problem without any assumption on the distributions. We only assume that we can query $\P(X \leq x)$ for every variable $X$ and any constant $x$. As a first step, we give an $O(\poly(n, 1/\eps))$-algorithm that achieves an $1-\eps$ approximation for the Optimal Ordering variables if all the variables are $\eps$-small. The algorithm involves a novel formulation of the Optimal Ordering Problem as a concave program coupled with a randomized rounding technique. We then use our variable decomposition technique to reduce the problem to an instance with at most a constant number of variables that are not small. Using this, we can guess order of such variables and then insert the remaining variables using a combination of the convex program and a dynamic program.

\subsection{Related Work}

The literature on prophet inequalities \citep{KS77} is vast; we provide here a high-level overview of the prophet inequalities landscape, primarily focusing on the case where a single random variable out of $n$ needs to get picked. 
Three important variants of the prophet inequality problem are adversarial order prophets, random order prophets and free order prophets. As the names imply, the random variables arrive respectively in an adversarial order, uniformly random order, and in an order of the algorithm's choice in these three settings. 
   
\paragraph{Adversarial order.} 
When the random variables are independent but not identical and they arrive in an adversarially chosen order,~\citet{KS78} showed that the gambler can obtain at least $\frac{1}{2}$ of the prophet benchmark. Later~\cite{samuel-cahn1984} showed that the same $\frac{1}{2}$-approximation can be obtained by a simple threshold policy, that posts a single threshold and accepts the first random variable to exceed the threshold\footnote{This constant $\frac{1}{2}$ cannot be improved, even if the algorithm was allowed to use adaptive strategies and even if we were in the $m$-frequent prophet setting for arbitrarily large $m$.}.
In the special case where the random variables are i.i.d. (and hence the adversarial, random and free orders coincide), 
~\cite{HK82} show that the gambler can obtain at least 
$1-\frac{1}{e}$ of the prophet benchmark and also show examples that prove that 
one cannot obtain a factor beyond $\frac{1}{1.342}\sim 0.745$.~\citet{kertz1986} 
later conjecture  that $\frac{1}{1.342}\sim 0.745$ is the best possible 
approximation. The first formal proof that one can go beyond $1-\frac{1}{e}$ 
was given by~\cite{AEEHKL17} and~\cite{CFHOV17}. In the former,~\citeauthor{AEEHKL17} give a $0.738$ approximation when $n$ is larger than a large constant $n_0$. Simultaneously and independently,~\cite{CFHOV17} show a 
$0.745$ approximation for this problem, thereby completely closing the gap between upper and lower bounds for the i.i.d. case. 

\paragraph{Random order.} The random order prophet inequality problem, or the prophet secretary problem, was first studied by~\cite{esfandiari2017prophet}, where they show a $1-\frac{1}{e}$ approximation for large $n$. They also show that with a single threshold it is impossible to get better than a 
$\frac{1}{2}$-approximation.~\cite{CFHOV17} were the first to show that one can obtain a $1-\frac{1}{e}$ approximation for any $n$, and they do this via non-adaptive thresholds.~\cite{azar2017prophet} were the first to beat the $1-1/e$ barrier for the random order prophet problem and show an approximation factor of $1-\frac{1}{e} + 0.0025$. The notion of time based policies that we crucially use in our paper (see Section~\ref{sec:time-based}) was introduced in~\cite{EHKS18} to obtain a $1-1/e$ approximation for random order prophets with matroid feasibility constraints (i.e., the set of random variables the gambler can feasibly pick has to be an independent set of an uunderlying matroid). These time based policies were used later by~\cite{singla2018combinatorial} for obtaining an alternative proof of $0.745$ approximation for the i.i.d. case (obtained originally by~\cite{CFHOV17}), although this proof works only for very large $n$. The time based policies were subsequently used by~\cite{CSZ19} to show a $0.669$ approximation factor which remains the best known factor to date for the random order prophet problem. In the same paper~\citeauthor{CSZ19} show that it is impossible to get larger than $\sqrt{3}-1$ approximation factor. When each distribution occurs at least $\Theta(\log n)$ times,~\cite{AEEHKL17} show that one can obtain a $0.738$ approximation for random order prophets.

\paragraph{Free order.} %
While the upper bound of $\sqrt{3}-1$ from~\cite{CSZ19} is not known to hold in the free order case, the $0.669$ bound established by~\citeauthor{CSZ19} in the same paper for the random order case continues to be the best known approximation factor for the free order case as well, for general $n$. There are a few special cases where better approximation factors are known. For small $n$,~\cite{BGPPS18} design better approximation factors through factor revealing LPs. When each distribution occurs at least a constant $m_0$ times,~\cite{AEEHKL17} show that one can obtain a $0.738$ approximation for free order prophets.

\section{Prophet inequalities setting}

In the prophet inequalities setting we have $n$ random variables $X_i$ taking non-negative real values with known
distributions $F_i$ arriving in random order. Upon the arrival, an algorithm
learn its realization and decides to either stop and obtain that value as reward or reject that variable and 
continue.

It is useful to think of the random variables as $n$ boxes each labeled with the
distribution $F_i$. Inside each box is a sample $X_i \sim F_i$. The distributions
are known in advance, but the boxes are given to the algorithm in random order.
The identity of the $i$-th box is only revealed when that box arrives. Upon arrival,
the algorithm can inspect the content of that box.

Given a collection of random variables $\bX = \{X_1, \hdots, X_n\}$ we will denote
by $\OPT(\bX)$ the reward of the optimal policy when the variables arrive in a uniform random order. This will be compared with a
\emph{prophet} who can see the value inside all the boxes in advance. The
reward of the prophet is given by $\MAX(\bX) = \E[\max_{i=1..n} X_i]$. The prophet inequality problem asks how good is the optimal online policy when compared with the prophet. In other words, what is the largest factor $\alpha$ for which the following inequality is true: 
$$\OPT(\bX) \geq \alpha \cdot \MAX(\bX)$$
We will often omit $\bX$ and just refer to $\OPT$ and $\MAX$ when clear from context. For bounding $\OPT$ we will often define a certain feasible policy $\ALG$ and then bound it with respect to $\MAX$. This will immediatly imply a prophet inequality since $\OPT \geq \ALG$ for any feasible online policy.

\subsection{Useful definitions}

We present certain definitions that will be useful throughout the paper:

\begin{definition}[$(\eps,\delta)$-Small Variables]\label{def:small}
We say that a random variable $X_i \geq 0$ with c.d.f. $F_i$ is $(\eps,\delta)$-small if
$$1- F_i(r) \leq \eps, \quad \forall r > \delta$$
or in other words, $1-\eps$ of its mass is in $[0,\delta]$. We say that a variable is
$\eps$-small if it is $(\eps, 0)$-small.
\end{definition}

\begin{definition}[Imperfect Prophet]\label{def:imperfect}
We say that for a set of random variables $\bX$, the Imperfect Prophet benchmark $\MAX_k(\bX)$ corresponds to the expectation of the $k$-th largest value of $X_i$.
\end{definition}

\begin{definition}[Free-order policy]\label{def:free_order}
The free-order optimum $\OPT_{\free}(\bX)$ is the reward of the optimal online policy that can choose the order in which the boxes are inspected. 
\end{definition}

\begin{definition}[$m$-frequent variables]\label{def:m_frequent}
We say that a set of random variables $\bX = \{X_1, \hdots, X_n\}$ is $m$-frequent if for any variable $X_i$ there are at least other $m-1$ variables with the same distribution.
\end{definition}

\subsection{Kertz upper bound}

Kertz shows that even if the random variables are i.i.d. (in which case the order in which boxes are inspected is irrelevant) the maximum possible factor in prophet inequalities is $\beta \approx 0.745$:

\begin{theorem}[\cite{kertz1986}]\label{thm:kertz_upper_bound}
Let $\beta$ be the solution to Kertz's equation \eqref{eq:kertz}. Then for every $\varepsilon$ there is a set of i.i.d. random variables $\bX$ such that $\OPT(\bX) \leq (\beta+\varepsilon) \cdot \MAX(\bX)$.
\end{theorem}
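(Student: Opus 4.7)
The plan is to exhibit, for every $\varepsilon > 0$, an explicit family of i.i.d. instances $\bX_n$ whose ratio $\OPT(\bX_n) / \MAX(\bX_n)$ tends to $\beta$ as $n \to \infty$. Because we are proving an upper bound, only one hard example (per $n$) is needed, so the argument reduces to constructing the right distribution and analyzing the optimal threshold policy in a continuous limit.

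First I would set up the optimal stopping recursion. Fix a candidate CDF $F$ and let $X_1,\dots,X_n$ be i.i.d.\ with $X_i \sim F$. For i.i.d.\ variables the optimal policy is a threshold policy: define $v_i$ by the backward recursion $v_{n+1}=0$ and $v_i = \E[\max(X_i,v_{i+1})] = v_{i+1} + \E[(X_i - v_{i+1})^+]$, so that $\OPT(\bX) = v_1$ and $\MAX(\bX) = \int_0^\infty (1-F(t)^n)\,dt$. The thresholds $v_i$ form a decreasing sequence, and the increment $v_i - v_{i+1}$ admits a closed form in terms of the tail $1-F$.

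Next I would pass to a continuous limit. Introduce the change of variable $y_i = F(v_i)^n$, which is the probability that all $n$ variables stay below the threshold $v_i$, so $y_i$ ranges monotonically in $[0,1]$ as $i$ runs from $n$ to $1$. Using $v_i - v_{i+1} = \int_{v_{i+1}}^{\infty} (1-F(t))\,dt$, one converts the discrete recursion into a recursion in $y$; after normalizing so that $\MAX = 1$, the step size becomes $O(1/n)$ and the recursion converges to an ODE of the form $\frac{dv}{dy} = h(y,v)$ on $[0,1]$. Integrating gives $v_1 = \int_0^1 h(y,v(y))\,dy$.

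I would then choose the witness distribution $F_n$ (a suitably truncated Pareto-like family) so that the limiting ODE is precisely the one whose solution encodes the integrand $[(\beta^{-1}-1) - y(\log y - 1)]^{-1}$ appearing in \eqref{eq:kertz}. With this choice, the limit of $v_1$ coincides with the Kertz integral evaluated at $\beta$, and the constraint $\MAX = 1$ forces the normalization $\int_0^1[(\beta^{-1}-1) - y(\log y -1)]^{-1}\,dy = 1$. Therefore $\OPT(\bX_n)/\MAX(\bX_n) \to \beta$, establishing the theorem after taking $n$ large enough that the discretization error is below $\varepsilon$.

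The main obstacle is identifying the worst-case distribution and justifying the limit. Morally, one views the upper bound as a calculus-of-variations problem over CDFs $F$, and the extremal $F$ must be characterized by an Euler--Lagrange condition that is exactly the Kertz ODE; choosing this extremal $F$ (with appropriate truncation at both tails so that $\MAX$ is finite and the recursion is well-defined) and then controlling the $O(1/n)$ error between the discrete backward induction and the continuous ODE uniformly in $y \in [0,1]$ is the technical heart of the argument.
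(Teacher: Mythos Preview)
Your outline is sound and follows the classical Kertz route: backward-induction thresholds $v_i$, a change of variable to $y_i = F(v_i)^n$, passage to an ODE in the $n\to\infty$ limit, and a choice of $F$ that makes the ODE collapse to the Kertz integrand. The paper, however, takes a genuinely different path (Section~\ref{sect:small}, especially Section~3.6 and Theorem~\ref{thm:upper_bound}). Rather than starting from the discrete recursion and searching for the extremal $F$ via a variational argument, it works from the beginning in the continuous-time ``small variables'' regime and \emph{reverse-engineers} the distribution from the solution $y(t)$ of the differential equation~\eqref{eq:prob_a} that already appeared in the lower-bound analysis: one sets $r^*(t) = -\int_t^1 (y'(s))^{-1}\,ds$ on $[q,1]$, puts an atom at a large value $H$ so that $r^*\equiv H$ on $[0,q]$, and defines $F_q$ by $F_q(r^*(t)) = y(t)$. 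One then verifies directly that $r^*$ satisfies the optimality equation $r'(t) = \int_{r(t)}^\infty \log F(u)\,du$, computes $\OPT_q$ and $\MAX_q$ in closed form, and shows the ratio tends to $\beta$ as $q\to 0$ using only the identities $y'' = y'\log y$ and $y'(0) = -1/\beta$. Your approach has the advantage of being self-contained and not relying on the lower-bound machinery; the paper's approach is explicitly primal--dual, avoids any Euler--Lagrange/variational step and any discrete-to-continuous error control (since it works in the $n\to\infty$ limit throughout), and makes transparent that the same ODE solution $y$ drives both the optimal policy and the worst-case instance.
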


\section{Prophet inequalities for small variables}\label{sect:small}

Our first step is to prove a prophet inequality for small variables (Definition \ref{def:small}). We first argue that the Kertz upper bound (Theorem \ref{thm:kertz_upper_bound}) is still valid when restricted to $\varepsilon$-small distributions. Then we give a policy for small prophets achieving $\beta-O(\eps)$.

\begin{lemma}\label{lemma:kertz_small}
For any $\eps,\delta>0$ there is a set $\bX = \{X_1, \hdots, X_n\}$ of $\eps$-small variables such that $\OPT(\bX) \leq (\beta + \delta) \cdot \MAX(\bX)$ where $\beta$ is the Kertz bound.
\end{lemma}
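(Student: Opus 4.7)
The idea is to start from a Kertz-extremal i.i.d.\ instance (guaranteed by Theorem \ref{thm:kertz_upper_bound}) and convert it into an $\eps$-small instance by zeroing out everything below a chosen quantile, then verify that this operation perturbs both $\OPT$ and $\MAX$ by only $o_n(1)$.

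Concretely, fix $\delta > 0$. By Theorem \ref{thm:kertz_upper_bound}, for any sufficiently large $n$ we may choose an i.i.d.\ instance $\bX = \{X_1,\dots,X_n\}$ with common distribution $F$ such that $\OPT(\bX)/\MAX(\bX) \leq \beta + \delta/2$; rescale so $\MAX(\bX) = 1$. Choose a threshold $x_0 > 0$ satisfying $\Pr_{X \sim F}[X \geq x_0] = \eps$ (if $F$ has an atom at this quantile, split it by a standard randomization), and set $Y_i := X_i \cdot \one[X_i \geq x_0]$. The $Y_i$ are i.i.d., and each is $\eps$-small: $\Pr[Y_i > r] = \eps$ for $r \in (0, x_0)$ and $\Pr[Y_i > r] = \Pr[X_i > r] \leq \eps$ for $r \geq x_0$.

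The two inequalities to establish are $\OPT(\bY) \leq \OPT(\bX)$ and $\MAX(\bY) \geq \MAX(\bX) - o_n(1)$. The first is immediate from the pointwise bound $Y_i \leq X_i$: any stopping rule for the $\bY$ instance can be simulated on the $\bX$ stream (since $Y_i$ is a deterministic function of $X_i$) and collecting the $X$-value at the $\bY$-stopping time yields at least the $Y$-value. For the second, let $E$ be the event that some $X_i \geq x_0$; on $E$, the maximizing indices of the $X$'s and $Y$'s coincide and $\max_i Y_i = \max_i X_i$, so
\[
\MAX(\bY) \;\geq\; \MAX(\bX) \;-\; \E\bigl[\max_i X_i \cdot \one_{E^c}\bigr] \;\geq\; \MAX(\bX) \;-\; x_0 \cdot (1-\eps)^n.
\]
Combining with the crude lower bound $\MAX(\bX) \geq x_0 \cdot (1 - (1-\eps)^n)$ gives $x_0 \leq (1-(1-\eps)^n)^{-1}$, so the shortfall is $O((1-\eps)^n)$, vanishing as $n \to \infty$.

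Putting these together, $\OPT(\bY)/\MAX(\bY) \leq (\beta + \delta/2)/(1 - o_n(1))$, which drops below $\beta + \delta$ once $n$ is sufficiently large. The only subtlety is splitting the atom of $F$ at the chosen quantile, but this is a routine measure-theoretic adjustment; the heart of the argument is simply that truncating below a low-probability threshold costs almost nothing in expected max while trivially only decreasing the gambler's achievable reward.
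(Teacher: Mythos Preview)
Your argument is correct in spirit and takes a genuinely different route from the paper. The paper proves this lemma via the \emph{splitting} operation of Lemma~\ref{lemma:iid_splitting}: starting from a Kertz-hard i.i.d.\ instance with c.d.f.\ $F$ (after placing a tiny atom at $0$), it replaces the $n$ copies of $F$ by $nk$ copies of $F^{1/k}$; this keeps $\MAX$ \emph{exactly} the same and can only decrease $\OPT$, and for $k$ large enough $F^{1/k}$ is $\eps$-small. Your approach instead \emph{truncates} below the $\eps$-quantile: $Y_i = X_i\cdot\one[X_i \ge x_0]$. Truncation is arguably more elementary (no coupling needed), but the paper's splitting lemma is reused elsewhere (e.g.\ to deduce the $\beta$ bound for general i.i.d.\ instances), so it earns its keep; also, splitting preserves $\MAX$ exactly and needs no asymptotic analysis in $n$.

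There is one genuine soft spot you should patch. You invoke Theorem~\ref{thm:kertz_upper_bound} as if it yields, for every sufficiently large $n$, an $n$-variable instance with ratio $\le \beta + \delta/2$. As stated, the theorem only gives \emph{some} instance of \emph{some} size. Your argument actually needs $n \to \infty$ so that $(1-\eps)^n \to 0$; with $n$ fixed (and $\eps$ given), the truncation loss $x_0(1-\eps)^n$ need not be small. This is easily fixed --- e.g.\ by appealing to the known fact that the Kertz-extremal instances approach $\beta$ as $n \to \infty$ (as in the paper's own Theorem~\ref{thm:upper_bound}), or, ironically, by the paper's splitting lemma, which lets you blow up the number of variables in any hard instance without worsening the ratio --- but you should say so explicitly rather than fold it into ``by Theorem~\ref{thm:kertz_upper_bound}.''
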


The proof is based on the following observation. This and other omitted proofs can be found in the appendix.

\begin{lemma}\label{lemma:iid_splitting}
Let $\bX$ be a set of $n$ i.i.d. variables with c.d.f. $F$ and $\bY$ be a set of $nk$ variables with c.d.f. $F^{1/k}$. Then $\MAX(\bX) = \MAX(\bY)$  and $\OPT(\bX) \geq \OPT(\bY)$.
\end{lemma}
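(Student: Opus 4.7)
The equality $\MAX(\bX) = \MAX(\bY)$ is immediate from a c.d.f.\ computation: the maximum of $n$ i.i.d.\ samples from $F$ has c.d.f.\ $F^n$, while the maximum of $nk$ i.i.d.\ samples from $F^{1/k}$ has c.d.f.\ $(F^{1/k})^{nk}=F^n$. So the two maxima are identically distributed and in particular share their expectation.

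For $\OPT(\bX)\geq\OPT(\bY)$, the plan is a coupling-plus-simulation argument. Since both instances consist of i.i.d.\ variables, the value of $\OPT$ is unchanged if we fix the arrival order rather than randomize it, so I will work with the canonical order in each. For each $X_i$, introduce $k$ auxiliary sub-variables $Z_{i,1},\ldots,Z_{i,k}$ sampled from i.i.d.\ $F^{1/k}$ conditioned on $\max_j Z_{i,j}=X_i$. Marginalizing over $X_i$, the $nk$ sub-variables $\{Z_{i,j}\}$ are unconditionally i.i.d.\ with c.d.f.\ $F^{1/k}$, so reading them off in the canonical order $Z_{1,1},\ldots,Z_{1,k},Z_{2,1},\ldots,Z_{n,k}$ produces a valid realization of $\bY$. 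Let $\pi_Y$ be an optimal online policy for $\bY$ on this ordering, so that its expected reward equals $\OPT(\bY)$.

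I would then define an online policy $\pi_X$ for $\bX$ that faithfully simulates $\pi_Y$: when $X_i$ arrives, draw $Z_{i,1},\ldots,Z_{i,k}$ from the conditional distribution given $X_i$, then advance $\pi_Y$ through the freshly exposed block of positions $(i-1)k+1,\ldots,ik$ (with all previously sampled blocks as prefix). If $\pi_Y$ stops at some position inside block $i$, then $\pi_X$ accepts $X_i$; otherwise $\pi_X$ rejects and proceeds to $X_{i+1}$. Since $X_i=\max_j Z_{i,j}$ dominates each individual $Z_{i,j}$ pointwise, $\pi_X$'s reward is at least $\pi_Y$'s reward on every realization, giving $\OPT(\bX)\geq \E[\mathrm{reward}(\pi_X)]\geq \E[\mathrm{reward}(\pi_Y)]=\OPT(\bY)$.

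The main thing that needs care is checking that the online simulation is distributionally faithful: at the moment $X_i$ arrives, the newly sampled block $(Z_{i,1},\ldots,Z_{i,k})$ must have precisely the joint law that $\pi_Y$ implicitly assumes, given all history. This reduces to the observation that the sub-variable blocks for distinct $i$'s are independent under the coupling, so the only relevant conditioning for block $i$ is on $X_i$ itself, which is exactly what we sample from. A minor technical nuisance is handling ties when $F$ is not continuous (so the $\arg\max$ of a block is not unique), but this is easily resolved by breaking ties uniformly at random, or by a standard continuous perturbation followed by a limit.
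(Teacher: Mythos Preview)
Your proposal is correct and follows essentially the same approach as the paper: both prove the $\MAX$ equality via the c.d.f.\ identity $(F^{1/k})^{nk}=F^n$, and both prove $\OPT(\bX)\geq\OPT(\bY)$ by coupling each $X_i$ to a block of $k$ sub-variables drawn from $F^{1/k}$ conditioned on their maximum equalling $X_i$, then simulating the optimal $\bY$-policy block by block and accepting $X_i$ whenever the simulated policy stops inside block $i$. Your discussion of distributional faithfulness and tie-breaking is a bit more careful than the paper's, but the argument is the same.
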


This lemma allows us to convert an upper bound of  $\OPT(\bX) \leq (\beta + \delta) \cdot \MAX(\bX)$ for c.d.f. $F$ to an upper bound of type $\OPT(\bY) \leq (\beta + \delta) \cdot \MAX(\bY)$ for c.d.f. $F^{1/k}$. If we add a tiny probability mass at zero and take $k$ to be large enough, the distribution $F^{1/k}$ becomes $\eps$-small. A formal proof is given in the appendix.\\

Our main result in this section is an algorithm achieving the optimal bound. 

\begin{theorem}\label{thm:small_prophets}
Given a (non-i.i.d.) set of $\eps$-small variables $\bX$, then  $$\OPT(\bX) \geq (\beta-O(\eps)) \cdot \MAX(\bX),$$ where $\beta$ is the Kertz bound.
\end{theorem}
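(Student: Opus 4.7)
The plan is to exhibit a time-based policy whose analysis reduces to a functional depending only on an aggregate ``intensity'' $\Phi(r) := \sum_i \Pr[X_i \geq r]$, and then invoke the i.i.d.\ Kertz analysis which is itself a functional in $\Phi$. Model the random order by independent uniform arrival times $T_i \in [0,1]$, and accept the first variable with $X_i \geq r(T_i)$ for a nonincreasing threshold $r:[0,1]\to[0,\infty)$ to be chosen.

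The first step is to leverage $\eps$-smallness to pass to the aggregate $\Phi$. Writing $p_i(r) := \Pr[X_i \geq r]$ and $V(r) := \sum_i \E[X_i \mathbf{1}\{X_i \geq r\}]$, a direct calculation gives
\begin{equation*}
\MAX(\bX) \;=\; \int_0^\infty \Bigl(1 - \textstyle\prod_i(1-p_i(r))\Bigr) dr, \qquad \E[\ALG] \;=\; \int_0^1 \Pr[\text{alive at } t] \cdot V(r(t))\, dt,
\end{equation*}
where $\Pr[\text{alive at } t] = \prod_i(1 - \int_0^t p_i(r(s))\, ds)$. Since $p_i(r) \leq \eps$ uniformly, the bound $\sum_i p_i(r)^2 \leq \eps\,\Phi(r)$ together with two-sided estimates on $\log(1-x)$ yields $\prod_i(1-p_i(r)) = e^{-\Phi(r)(1+O(\eps))}$, and an analogous estimate for the survival probability. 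A short calculation then shows both $\MAX(\bX)$ and $\E[\ALG]$ equal the corresponding expressions with $e^{-\Phi(\cdot)}$ in place of the products, up to multiplicative $(1 \pm O(\eps))$ factors.

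The second step is to observe that $V$ is also a functional of $\Phi$: integration by parts gives $V(r) = r\Phi(r) + \int_r^\infty \Phi(t)\, dt$, which holds for \emph{any} independent variables, i.i.d.\ or not. Consequently the ratio $\E[\ALG]/\MAX(\bX)$ depends, up to $O(\eps)$, only on $\Phi$ and the schedule $r(\cdot)$, and the expressions are structurally identical to those arising in the i.i.d.\ case (where $\Phi(r) = n(1-F(r))$). Hence the Kertz-optimal time-based threshold, defined directly in terms of $\Phi$ rather than any underlying CDF, achieves the ratio $\beta$ for arbitrary nonincreasing $\Phi$, and in particular for the $\Phi$ induced by $\bX$. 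Matching this choice of $r$ with the multiplicative approximations above gives $\E[\ALG] \geq (\beta - O(\eps))\MAX(\bX)$.

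The main obstacle is verifying that the Kertz analysis in the i.i.d.\ setting genuinely depends only on $\Phi$, not on the extra structure of $F$ or $n$; I would address this by writing the Kertz ratio as the optimum of a variational problem — consistent with the paper's promised primal-dual proof — whose input is the function $\Phi$ alone, and check that the argument carries through unchanged. A secondary technical point is that the multiplicative error $e^{O(\eps \int_0^1 \Phi(r(s))\, ds)}$ must be $1 + O(\eps)$, which follows because $\int_0^1 \Phi(r(s))\, ds$ equals $-\log\Pr[\text{no variable accepted}] + O(\eps)$ and hence is automatically $O(1)$ for the Kertz policy. Combined with the matching upper bound of Lemma~\ref{lemma:kertz_small}, this pins down the optimal ratio $\beta - O(\eps)$ for the small regime.
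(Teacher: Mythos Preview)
Your proposal is correct and follows essentially the same route as the paper: both use a time-based threshold policy, reduce the analysis to the aggregate intensity $\Phi(r)=\sum_i \Pr[X_i\ge r]$ (the paper's $\tilde w$, approximated by $w=-\log F$), and then invoke the Kertz ODE to choose $r(\cdot)$. The paper carries out your ``main obstacle'' explicitly (Lemmas~\ref{lemma:yFr} and~\ref{lemma:differential_equation}) by deriving and solving the ODE $\exp(\int_0^t \log y)= -\beta y'$, confirming that the argument depends only on $\Phi$.

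One execution difference worth noting: rather than bounding the survival-probability error multiplicatively via $\int_0^1 \Phi(r(s))\,ds = O(1)$, the paper sidesteps this by having the policy accept only with probability $(1-\eps)^2$ when the threshold is met; this makes $1-(1-\eps)^2 x \ge e^{-x}$ hold pointwise for $x\in[0,\eps]$ and yields the clean bound $\ALG \ge (1-\eps)^3 \int_0^1 R(r(t))\exp(-\int_0^t w(r(s))\,ds)\,dt$ without any global control on $\int \Phi$. Your argument that $\int_0^1 \Phi(r(s))\,ds$ is a universal constant for the Kertz schedule is also valid (it equals $-\log(1-\beta)$), but the paper's trick is cleaner and avoids the apparent circularity in your justification.
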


The proof of this theorem will be given in the following subsections. Before presenting the proof we would like to point out two consequences.
Firstly, we can obtain the optimal algorithm for i.i.d. (but not necessarily small) variables as a corollary, providing an alternate proof of the result by \cite{CFHOV17}:

\begin{corollary}
If $\bX$ is a set of i.i.d. variables (not necessarily small), then $$\OPT(\bX) \geq \beta \cdot \MAX(\bX)$$
\end{corollary}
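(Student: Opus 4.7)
The plan is to reduce the general i.i.d. case to the $\eps$-small case (Theorem~\ref{thm:small_prophets}) via the splitting Lemma~\ref{lemma:iid_splitting}. The natural attempt is: given $n$ copies of $F$, consider $nk$ copies of $F^{1/k}$; by Lemma~\ref{lemma:iid_splitting} this preserves $\MAX$ and can only decrease $\OPT$, so a prophet inequality for the split instance transfers back, and for large $k$ the split variables ought to be ``small.'' The wrinkle is that $Y \sim F^{1/k}$ satisfies $\Pr[Y>0] = 1 - F(0)^{1/k}$, which does not tend to $0$ unless $F$ itself already has mass at zero. So my first move is to perturb $F$ by adding a tiny atom there.

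Fix $p \in (0,1)$ and let $\tilde{X}_i := B_i X_i$ with $B_i \sim \mathrm{Bern}(1-p)$ independent, so that $\tilde{X}_i$ has CDF $\tilde{F}(x) = p\,\mathbf{1}[x \geq 0] + (1-p) F(x)$ with $\tilde{F}(0) \geq p > 0$. Given $\eps > 0$, choose $k$ large enough that $p^{1/k} \geq 1 - \eps$; then $Y \sim \tilde{F}^{1/k}$ has $\Pr[Y > 0] \leq 1 - p^{1/k} \leq \eps$, so $Y$ is $\eps$-small. Let $\bY$ be $nk$ i.i.d. copies of $Y$. Lemma~\ref{lemma:iid_splitting} applied to $\tilde{\bX}$ and $\bY$ gives $\MAX(\bY) = \MAX(\tilde{\bX})$ and $\OPT(\tilde{\bX}) \geq \OPT(\bY)$, and Theorem~\ref{thm:small_prophets} gives $\OPT(\bY) \geq (\beta - O(\eps)) \MAX(\bY)$. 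Chaining,
\[
\OPT(\tilde{\bX}) \;\geq\; (\beta - O(\eps)) \cdot \MAX(\tilde{\bX}).
\]

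To finish I pass from $\tilde{\bX}$ back to $\bX$ by sending $\eps, p \to 0$. For the OPT side, any policy for $\tilde{\bX}$ can be simulated on $\bX$: at step $i$ draw a fresh $B_i \sim \mathrm{Bern}(1-p)$, feed $B_i X_i$ to the policy, and stop at $X_i$ whenever the policy would stop. Since $X_i \geq B_i X_i$ pointwise, the simulated reward on $\bX$ is at least the reward of the policy on $\tilde{\bX}$, giving $\OPT(\bX) \geq \OPT(\tilde{\bX})$. For the MAX side, the crude bound $\MAX(\bX) - \MAX(\tilde{\bX}) \leq \E\bigl[\sum_i (X_i - \tilde{X}_i)\bigr] = n p\,\E[X_1]$ yields $\MAX(\tilde{\bX}) \to \MAX(\bX)$ as $p \to 0$ (assuming $\E[X_1] < \infty$; otherwise $\OPT(\bX) \geq \E[X_1] = \infty$ and the corollary is trivial). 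Putting everything together, $\OPT(\bX) \geq \beta \cdot \MAX(\bX)$.

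The only real difficulty is recognizing the need for the mass-at-zero perturbation: without it, the splitting produces variables whose support does not shift toward zero as $k$ grows---only the internal distribution within the existing support concentrates---so $\eps$-smallness remains out of reach. Once the perturbation is in place, each remaining step (splitting, applying Theorem~\ref{thm:small_prophets}, and the two limit passages in $\eps$ and $p$) is routine.
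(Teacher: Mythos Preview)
Your proof is correct and follows essentially the same approach as the paper's: perturb $F$ to put mass at zero, split via Lemma~\ref{lemma:iid_splitting} into $nk$ copies of $\tilde F^{1/k}$ to obtain $\eps$-small variables, apply Theorem~\ref{thm:small_prophets}, and let the parameters go to zero. The paper's version is terser and leaves the passage from the perturbed instance back to $\bX$ implicit, whereas you spell out both the simulation argument for $\OPT(\bX) \geq \OPT(\tilde{\bX})$ and the continuity of $\MAX$ in $p$; this extra care is welcome but does not constitute a different method.
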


\begin{proof}
Consider an i.i.d. prophet instance with $n$ random variables $X_1, X_2, \dots, X_n$ each with c.d.f. $F$. As in the proof of Lemma \ref{lemma:kertz_small}, perturb $F$ slightly so that $F(0) > 0$ and choose $k$ such that $F^{1/k}(0) \geq 1-\eps$. Consider the $nk$ variables $Y_1, Y_2, \dots, Y_{kn}$. They are $\eps$-small, so by Theorem \ref{thm:small_prophets}, $\OPT(\bY) \geq \beta \cdot (1-O(\varepsilon)) \cdot \MAX(\bY)$. Lemma \ref{lemma:iid_splitting} then implies that $\OPT(\bX) \geq  (\beta-O(\varepsilon)) \cdot \MAX(\bX)$. This is true for any $\eps$ and hence $\OPT(\bX) \geq \beta \cdot \MAX(\bX)$
\end{proof}

Finally, we show that Theorem \ref{thm:small_prophets} extends to $(\eps, \delta)$-small prophets at the cost of an additive $\delta$. This fact will be useful in Section \ref{sect:imperfect}.

\begin{corollary}\label{cor:eps-delta-small}
Given a (non-i.i.d.) set of $(\eps,\delta)$-small variables $\bX$, then  $$\OPT(\bX) \geq (\beta-O(\eps)) \cdot \MAX(\bX) - \delta$$
\end{corollary}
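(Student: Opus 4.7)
The plan is to reduce the $(\eps,\delta)$-small case to the (already-solved) $\eps$-small case via a simple shift-and-truncate transformation, paying only an additive $\delta$ loss.

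First, for each $X_i$ define the new variable $Y_i := (X_i - \delta)^+ = \max(X_i - \delta, 0)$. By the definition of $(\eps,\delta)$-smallness we have $\Pr[X_i > \delta] \leq \eps$, and $Y_i > 0$ holds only on the event $\{X_i > \delta\}$, so $\Pr[Y_i > 0] \leq \eps$ and therefore $Y_i$ is $\eps$-small in the sense of Definition~\ref{def:small}. The collection $\bY = \{Y_1, \dots, Y_n\}$ is thus a valid input to Theorem~\ref{thm:small_prophets}.

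Second, I would relate $\MAX(\bY)$ to $\MAX(\bX)$. Since $x \mapsto (x-\delta)^+$ is monotone, $\max_i Y_i = (\max_i X_i - \delta)^+ \geq \max_i X_i - \delta$. Taking expectations, $\MAX(\bY) \geq \MAX(\bX) - \delta$.

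Third, apply Theorem~\ref{thm:small_prophets} to obtain a stopping rule $\tau$ for $\bY$ (under uniform random arrival order) with $\E[Y_\tau] \geq (\beta - O(\eps))\,\MAX(\bY)$. Use the same stopping rule on $\bX$: this is legitimate because $Y_i$ is a deterministic function of $X_i$, so the policy's decision at time $i$ depends only on the values of $X_1,\dots,X_i$. Since $X_i \geq Y_i$ pointwise, the reward obtained on $\bX$ satisfies $\E[X_\tau] \geq \E[Y_\tau] \geq (\beta - O(\eps))\,\MAX(\bY) \geq (\beta - O(\eps))(\MAX(\bX) - \delta) \geq (\beta - O(\eps))\,\MAX(\bX) - \delta$, where the last inequality uses $\beta - O(\eps) \leq 1$. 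Since $\OPT(\bX) \geq \E[X_\tau]$, this gives the claimed bound.

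There is essentially no obstacle here: the whole content is the observation that truncating below at $\delta$ turns an $(\eps,\delta)$-small variable into an $\eps$-small one while losing at most $\delta$ in the benchmark and nothing in the algorithm's reward.
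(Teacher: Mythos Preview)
Your proof is correct and follows essentially the same approach as the paper. The only cosmetic difference is the choice of transformation: the paper uses $\tilde{X}_i = X_i \cdot \mathbf{1}\{X_i > \delta\}$ (zeroing out small values) whereas you use $Y_i = (X_i - \delta)^+$ (shift-and-truncate); both yield $\eps$-small variables, both satisfy $\tilde{X}_i, Y_i \leq X_i$ pointwise, and both lose at most an additive $\delta$ in the prophet benchmark, so the remainder of the argument is identical.
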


\begin{proof}
  Apply Theorem \ref{thm:small_prophets} to the variables $\tilde{X}_i = X_i \cdot {\bf 1} \{X_i > \delta\}$.
\end{proof}

\subsection{Time Based Policies}
\label{sec:time-based}
The policy we will construct in the proof of Theorem \ref{thm:small_prophets} has the form of a time based policy, which we describe below.
Let $\pi$ be the random permutation of the boxes, i.e., the $i$-th
box inspected by the algorithm is $F_{\pi(i)}$. It is useful to think of the random arrivals of
boxes in terms of timestamps, i.e., that each box arrives uniformly at random in
a time $t \in [0,1]$. We will assign timestamps to boxes that are i.i.d.
uniform and consistent with $\pi$. This can be done in the following manner:

\begin{enumerate}
  \item Sample timestamps $t'_1, \hdots, t'_n$ i.i.d. from the uniform distribution over
    $[0,1]$. Sort the numbers and let $t'_{(i)}$ be the $i$-th smallest sampled
    timestamp.
  \item Given a permutation $\pi$, assign timestamp $t'_{(i)}$ to
    box $\pi(i)$ by setting $t_{\pi(i)} = t'_{(i)}$.
\end{enumerate}

\begin{lemma} The variables $t_1, \hdots, t_n$ are i.i.d. uniform.
\end{lemma}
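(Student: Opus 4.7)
The plan is to reduce the claim to the standard fact that, for i.i.d. samples from a continuous distribution, the order statistics and the rank permutation are independent, and the rank permutation is uniform on $S_n$. Since the setup constructs $t_{\pi(i)} = t'_{(i)}$ where $\pi$ is itself a uniformly random permutation independent of the $t'$s, this boils down to showing that assembling order statistics according to an independent uniform permutation reproduces i.i.d. uniform samples.

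Concretely, first I would rewrite the construction as $(t_1, \ldots, t_n) = (t'_{(\pi^{-1}(1))}, \ldots, t'_{(\pi^{-1}(n))})$ and note that $\pi^{-1}$ is again uniform on $S_n$ and independent of $(t'_1, \ldots, t'_n)$, hence independent of the order statistics $(t'_{(1)}, \ldots, t'_{(n)})$. So it suffices to prove the following: if $V_1 \leq \cdots \leq V_n$ are the order statistics of $n$ i.i.d.\ uniform $[0,1]$ variables and $\sigma$ is an independent uniformly random permutation, then $(V_{\sigma(1)}, \ldots, V_{\sigma(n)})$ is i.i.d.\ uniform on $[0,1]$.

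For this key step, I would use the classical decomposition of i.i.d.\ continuous samples. Given i.i.d.\ uniforms $U_1, \ldots, U_n$, define the rank permutation $R$ by $R(i) = \lvert \{j : U_j \leq U_i\}\rvert$. A standard computation (conditioning on the unordered multiset of values, which has no ties almost surely) shows that $R$ is uniform on $S_n$ and independent of $(U_{(1)}, \ldots, U_{(n)})$, and $U_i = U_{(R(i))}$ by definition. Therefore the joint law of $(U_1, \ldots, U_n)$ coincides with the joint law of $(V_{\sigma(1)}, \ldots, V_{\sigma(n)})$ constructed from an independent pair (order statistics, uniform permutation). Applying this with $V_i = t'_{(i)}$ and $\sigma = \pi^{-1}$ gives that $(t_1, \ldots, t_n)$ has the same joint distribution as $(t'_1, \ldots, t'_n)$, namely i.i.d.\ uniform.

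I do not expect any real obstacle here; the only care needed is making the independence of $\pi^{-1}$ from the order statistics explicit and citing (or briefly justifying) the decomposition lemma for i.i.d.\ continuous samples. If a self-contained argument is preferred, one can alternatively compute directly: for any Borel sets $A_1, \ldots, A_n \subseteq [0,1]$, condition on $\pi$ and use that the $n!$ possible orderings of the $t'_i$ are equally likely, which collapses the sum over $\pi$ and factors the probability into $\prod_i \lvert A_i \rvert$.
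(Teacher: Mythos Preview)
Your proposal is correct and follows essentially the same idea as the paper's proof, just with far more detail. The paper dispatches the lemma in a single sentence---``the variables are obtained by sampling i.i.d.\ uniform random variables and then applying a random permutation, hence the result must be also i.i.d.\ uniform''---whereas you carefully justify that sentence via the standard order-statistics/rank decomposition; both arguments rest on the same symmetry, and yours simply makes explicit the step the paper leaves to the reader.
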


\begin{proof}
  The variables are obtained by sampling i.i.d. uniform random variable and then
  applying a random permutation, hence the result must be also i.i.d. uniform.
\end{proof}

The timestamps encode the permutation in which the boxes arrive, so from this
point on we will reason solely in terms of timestamps. Note that in the original problem there is no notion of time, only arrival order, but the construction above (sample timestamps, sort them and assign timestamps in the order of arrival) allows us to think in terms of time arrivals.

\paragraph{Time based threshold} The policy we will consider is parametrized by a decreasing threshold function $r:[0,1] \rightarrow \R_+$. If a variable arrives at time $t$, we will pick that variable with probability $(1-\eps)^2$ if $X_i \geq r(t_i)$ and not pick it otherwise.

\subsection{Notation and Useful Inequalities}\label{sec:small_notation}

We start by establishing some notation and some useful inequalities. We will use the notation $\bar F_i(r)$
to denote $1-F_i(r)$. The c.d.f. of $\max_i X_i$ is $F(r) := \prod_i F_i(r)$. For any threshold $r$ note that:
\begin{equation}\label{eq:Ri}
\tR_i(r) := \E[X_i \cdot {\bf 1}\{X_i \geq r\}] = r \bar F_i(r) + \int_r^\infty \bar F_i(s) ds
\end{equation}
In various places of the proof, it will be useful to analyze the sum of the above quantity over all variables and the following notation will come in handy:
\begin{equation}\label{eq:tw}
  \tw(r) = \sum_i \bar F_i(r)  
\end{equation}
\begin{equation}
  \tR(r) := \sum_i \E[X_i \cdot {\bf 1}\{X_i \geq r\}] = r \tw(r) + \int_r^\infty \tw(s) ds 
\end{equation}
Whenever $\bar F_i(r)$ is small it will be convenient to approximate $\tw(r)$ by:
\begin{equation}\label{eq:w}
  w(r) :=  -\sum_i \log (1-\bar F_i(r))  = -\log F(r) 
\end{equation}
For $\varepsilon \leq 0.6$ it holds that $x \leq -\log(1-x) \leq
(1+\eps) x$ for $x \in [0,\eps]$. Since all variables are $\varepsilon$-small, we have:
\begin{equation}\label{eq:wapprox}
\tw(r) \leq w(r) \leq (1+\eps) \cdot \tw(r)
\end{equation}
It will be equally convenient to define an approximation of $\tR$ using $w$ as follows:
\begin{equation}\label{eq:defR}
R(r) = r w(r) + \int_r^\infty w(s) ds = -\int_r^\infty s w'(s) ds
\end{equation}
By equation \ref{eq:wapprox} we have:
\begin{equation}\label{eq:Rapprox}
\tR(r) \leq R(r) \leq (1+\eps) \cdot \tR(r) 
\end{equation}

We will use $\ALG$ to denote the performance of the policy that picks a variable with probability $(1-\eps)^2$ whenever $X_i \geq r(t_i)$. In the next subsection we will establish bounds on $\ALG$ and $\MAX$. We will use $\ALG$ as a lower bound for $\OPT$.

\paragraph{Warning on differentiability} Throughout this section we will assume that the c.d.f. $F$ and $F^{-1}$ are both differentiable in their respective domains. This can be done without loss of generality by the following argument: replace each variable $X_i$ by $X_i \cdot M_i$ where $M_i$ is a random variable with $\mathcal{C}^\infty$ c.d.f. supported $[0,1]$ which is in $[1-\delta,1]$ with probability $1-\delta$ for some very small $\delta$. By the standard ``mollifier'' argument, $X_i \cdot M_i$ will have $\mathcal{C}^\infty$ c.d.f. and inverse c.d.f. The results can be proven for such variables and then ported to the original variables by taking $\delta$ to zero.

\subsection{Bounding the Prophet and the Algorithm}

\begin{lemma}[Prophet bound]\label{lemma:prophet_bound}
  The prophet benchmark can be written as:
  $$\MAX = \int_{0}^\infty R(r) e^{-w(r)} w'(r) dr$$
\end{lemma}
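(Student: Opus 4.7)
The plan is to view both sides of the identity as integrals against the density of $\max_i X_i$. Since the $X_i$ are independent, the c.d.f.\ of $\max_i X_i$ is $F(r)=\prod_i F_i(r)=e^{-w(r)}$, so its density is $F'(r)=-w'(r)\,e^{-w(r)}$ (non-negative because $w$ is decreasing on the support). Hence the standard mean formula yields
\[
\MAX \;=\; \int_0^\infty r\,F'(r)\,dr \;=\; \int_0^\infty r\,(-w'(r))\,e^{-w(r)}\,dr.
\]
It therefore suffices to show that the right-hand side of the Lemma can be rewritten in this form.

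First I would invoke the alternate expression already recorded in \eqref{eq:defR}, namely $R(r)=-\int_r^\infty s\,w'(s)\,ds$, which comes from a single integration by parts on the definition $R(r)=rw(r)+\int_r^\infty w(s)\,ds$; the required boundary condition $sw(s)\to 0$ as $s\to\infty$ follows because $w(s)\le (1+\eps)\tw(s)=(1+\eps)\sum_i \bar F_i(s)$ decays fast enough when each $X_i$ has finite mean. Substituting this into the right-hand side of the lemma turns it into the double integral
\[
\int_0^\infty \!\!\int_r^\infty \bigl(-s\,w'(s)\bigr)\bigl(-w'(r)\bigr)e^{-w(r)}\,ds\,dr.
\]

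The core computation is then a Fubini swap over the region $\{0\le r\le s<\infty\}$, giving
\[
\int_0^\infty s\,(-w'(s))\left[\int_0^s (-w'(r))\,e^{-w(r)}\,dr\right]ds.
\]
The inner integrand is precisely $F'(r)$, so it telescopes to $F(s)-F(0)$. Under the smoothness/mollifier assumption at the end of Section~\ref{sec:small_notation} we may take $F(0)=0$, and the outer integral collapses to exactly $\int_0^\infty s\,(-w'(s))\,e^{-w(s)}\,ds=\MAX$, completing the proof.

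The substantive part of the argument is the Fubini step and the rewriting of $R$; the rest is purely routine. The main thing to be careful about is bookkeeping at the two boundaries: verifying $sw(s)\to 0$ at infinity (so the integration-by-parts formula for $R$ is legitimate) and handling the would-be $F(0)R(0)$ contribution that Fubini produces. Both of these are nuisance terms rather than genuine obstacles, since the paper's standing smoothness convention (perturbing each $X_i$ by a $\mathcal{C}^\infty$ factor) guarantees the required regularity and that $F(0)=0$.
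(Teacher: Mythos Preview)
Your argument is correct and is essentially the paper's proof in a different wrapper: the paper writes $\MAX=\int_0^\infty(1-F)\,dr$ and integrates by parts twice, and your ``density formula plus Fubini'' is exactly that pair of integrations by parts recast as a double integral, both hinging on the same relation $R'(r)=rw'(r)$ (equivalently $R(r)=-\int_r^\infty s\,w'(s)\,ds$) and on the same boundary vanishing that you flag. One cosmetic caveat: in your substitution you quietly replaced $w'(r)$ by $-w'(r)$, which actually cancels a sign slip in the lemma as printed (since $w'\le 0$, the literal right-hand side is nonpositive), so your computation lands on the intended identity.
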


\begin{proof} Since $F$ is the c.d.f. of $\max_i X_i$ we have:
  \begin{equation}\label{eq:pb}
  \MAX = \int_{0}^\infty (1-F(r)) dr = \int_{0}^\infty \left(1-e^{-w(r)}\right) dr =
  - \int_{0}^\infty r w'(r) e^{-w(r)}  dr
  \end{equation}
  where the last equality follows by integration by parts. Integrating by parts
  again we obtain the result in the statement.
\end{proof}

We now lower bound the performance of the time-based threshold policy:

\begin{lemma}[Policy bound]\label{lemma:policy_bound}
Given a non-increasing threshold function $r:[0,1] \rightarrow \R_+$ the performance of the policy that picks a variable with probability $(1-\eps)^2$ whenever $X_i \geq r(t_i)$ is lower bounded as follows:
  $$\ALG \geq (1-\eps)^3 \int_0^1 R(r(t))  \exp\left( - \int_0^t w(r(s)) ds \right) dt$$
\end{lemma}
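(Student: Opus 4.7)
The plan is to decompose $\ALG$ as a sum over boxes, integrating over the uniform arrival time of each one, and then to lower-bound the probability that the algorithm is still ``free'' when each box arrives. Concretely, since the timestamps $t_i$ are i.i.d.\ uniform on $[0,1]$, I would write
\[
\ALG \;=\; (1-\eps)^2 \sum_i \int_0^1 \tR_i(r(t)) \cdot q_i(t) \, dt,
\]
where the factor $(1-\eps)^2$ comes from the extra Bernoulli acceptance, $\tR_i(r(t)) = \E[X_i \mathbf{1}\{X_i \geq r(t)\}]$ is the expected contribution of box $i$ when it arrives at time $t$, and $q_i(t)$ is the probability that no box $j \neq i$ with $t_j < t$ has already triggered a pick. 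By independence of the timestamps and variables, and the fact that, conditional on $t_i = t$, the remaining $t_j$ are still i.i.d.\ uniform,
\[
q_i(t) \;=\; \prod_{j \neq i}\bigl(1 - a_j(t)\bigr), \qquad a_j(t) := (1-\eps)^2 \int_0^t \bar F_j(r(s))\, ds.
\]

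The key step is to lower-bound this product. By $\eps$-smallness each $a_j(t) \in [0,\eps]$, so the inequality $1-x \geq \exp(-(1+\eps)x)$ from Section~\ref{sec:small_notation} applies to each factor. Extending the product to include $j=i$ only weakens the bound, giving
\[
q_i(t) \;\geq\; \exp\!\left(-(1+\eps)(1-\eps)^2 \int_0^t \tw(r(s))\, ds\right) \;\geq\; \exp\!\left(-\int_0^t w(r(s))\, ds\right),
\]
where the last step uses $(1+\eps)(1-\eps)^2 \leq 1$ together with $\tw \leq w$ from \eqref{eq:wapprox}.

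Substituting back into $\ALG$, summing $\sum_i \tR_i(r) = \tR(r)$, and converting $\tR$ to $R$ via $\tR \geq R/(1+\eps) \geq (1-\eps) R$ from \eqref{eq:Rapprox}, the three multiplicative losses combine to $(1-\eps)^2 \cdot (1-\eps) = (1-\eps)^3$, yielding the claimed inequality. I expect the most delicate part to be the bookkeeping of the $\eps$-slack: the $(1+\eps)$ factor produced by linearizing $-\log(1-x)$ must be absorbed into the existing $(1-\eps)^2$ from the randomized acceptance so that it is exactly the $\tR \to R$ conversion that contributes the third $(1-\eps)$. The smallness hypothesis enters twice here — ensuring each $a_j(t) \leq \eps$ so that the logarithmic inequality applies, and controlling the $\tw$ vs.\ $w$ gap — and both invocations must line up for the final factor to come out as $(1-\eps)^3$ rather than something looser.
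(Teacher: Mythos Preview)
Your proposal is correct and follows essentially the same route as the paper: decompose $\ALG$ by box and arrival time, lower-bound the ``still free'' probability by a product over boxes, linearize each factor using $\eps$-smallness, sum to get $\tw$, and then convert $\tw\to w$ and $\tR\to R$. The only cosmetic difference is in the $\eps$-bookkeeping: the paper applies the inequality $1-(1-\eps)x \ge e^{-x}$ (absorbing one $(1-\eps)$ factor into the exponent to land directly on $e^{-x}$), whereas you use $1-x \ge e^{-(1+\eps)x}$ and then cancel via $(1+\eps)(1-\eps)^2 \le 1$; both routes arrive at the same $(1-\eps)^3$ prefactor.
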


\begin{proof}
We prove the bound in three steps:\\

\emph{Step 1:} Let's look at a single variable $i$ and compute the probability
it is not picked before time $t$. This is at least
$$1-(1-\eps)^2 \cdot \int_0^t \bar F_i(r(s)) ds \geq \exp \left(
-(1-\eps) \cdot \int_0^t \bar F_i(r(s)) ds  \right)$$
using the fact that for $x \in [0,\eps]$ we have $1-x(1-\eps) \geq
e^{-x}$.\\

\emph{Step 2:}  The probability that no variable is picked before time $t$ is at least
$$\prod_i  \exp \left( -(1-\eps) \cdot \int_0^t \bar F_i(r(s)) ds  \right) = 
\exp \left( -(1-\eps) \cdot \int_0^t \tw(r(s)) ds \right) \geq  \exp \left(-
\int_0^t w(r(s)) ds\right).$$\\

\emph{Step 3:} Now if variable $i$ arrives in interval $[t,t+dt]$ the reward is $(1-\eps)^2 \tR_i(r(t))$ (see equation \ref{eq:Ri}) times the probability that none of the other variables have already been previously picked. This probability has been lower bounded in the previous step. Integrating, we obtain that the expected reward from variable $i$ is at least:
$$  (1-\eps)^2 \int_0^1 \tR_i(r(t))  \exp\left( - \int_0^t w(r(s)) ds \right) dt $$
Summing over all $i$ and using that $\sum_i \tR_i(r(t)) = \tR(r(t)) \geq (1-\varepsilon)  R(r(t))$ (see equation \ref{eq:Rapprox}) we get the bound in the lemma.
\end{proof}

\subsection{Relating the bounds to Kertz's equation}

Our next step is to reduce the problem to solving a differential equation:

\begin{lemma}\label{lemma:yFr}
  If $y:[0,1]\rightarrow [0,1]$ is a function satisfying 
\begin{equation}\label{eq:prob_a}
 \exp\left(  \int_0^t \log y(s) ds \right) = -\beta \cdot y'(t), \quad y(0) = 1
  \quad \text{and} \quad y(1) = 0 
\end{equation}
then the threshold policy that sets $r(t) = F^{-1}(y(t))$ is a
  $(1-\eps)^3 \beta$-approximation to the prophet benchmark, i.e. $\OPT \geq \ALG \geq (1-\eps)^3 \beta \cdot
  \MAX$.
\end{lemma}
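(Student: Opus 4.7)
The idea is that the ODE satisfied by $y$ is engineered precisely so that a change of variable $r = r(t) = F^{-1}(y(t))$ makes the integrand of the prophet bound (Lemma~\ref{lemma:prophet_bound}) and the integrand of the algorithm bound (Lemma~\ref{lemma:policy_bound}) identical up to the factor $\beta$. So the proof reduces to carefully performing this change of variable and reading off the inequality.

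\emph{Step 1: translate the quantities along $r(t)$.} Since $F(r(t)) = y(t)$, I get
$w(r(t)) = -\log F(r(t)) = -\log y(t)$ and $e^{-w(r(t))} = y(t)$. Differentiating $F(r(t)) = y(t)$ gives $f(r(t))\, r'(t) = y'(t)$, and combining with $w'(r) = -f(r)/F(r)$ yields the key identity $w'(r(t))\, r'(t) = -y'(t)/y(t)$. The boundary conditions $y(0)=1,\ y(1)=0$ say exactly that $r(t)$ sweeps the support of $\max_i X_i$ as $t$ traverses $[0,1]$ (decreasingly, which matches the requirement that $r$ be non-increasing, since $y'(t)<0$ is forced by the ODE).

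\emph{Step 2: rewrite the policy bound using the ODE.} From $w(r(s)) = -\log y(s)$, the exponential in Lemma~\ref{lemma:policy_bound} becomes
\[
\exp\!\left(-\int_0^t w(r(s))\,ds\right) = \exp\!\left(\int_0^t \log y(s)\,ds\right) = -\beta\, y'(t),
\]
where the last equality is exactly the hypothesis \eqref{eq:prob_a}. Plugging in:
\[
\ALG \;\geq\; (1-\eps)^3 \int_0^1 R(r(t)) \cdot (-\beta\, y'(t))\,dt \;=\; -(1-\eps)^3\,\beta \int_0^1 R(r(t))\, y'(t)\,dt.
\]

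\emph{Step 3: rewrite the prophet bound via the same substitution.} Starting from Lemma~\ref{lemma:prophet_bound} (equivalently $\MAX = -\int_0^\infty R(r) w'(r) e^{-w(r)}\,dr$), substitute $r=r(t)$; using $dr = r'(t)\,dt$ and the identities from Step~1,
\[
\MAX \;=\; \int_0^1 R(r(t))\cdot w'(r(t))\, r'(t)\cdot e^{-w(r(t))}\,dt \;=\; \int_0^1 R(r(t)) \cdot \bigl(-y'(t)/y(t)\bigr)\cdot y(t)\,dt \;=\; -\int_0^1 R(r(t))\, y'(t)\,dt.
\]
Combining with Step~2 gives $\ALG \geq (1-\eps)^3 \beta \cdot \MAX$, which is the desired inequality (and $\OPT \geq \ALG$ by definition).

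\emph{Expected main obstacle.} The actual content is really just bookkeeping of signs and of the change of variable, so I do not anticipate a true obstacle. The only care needed is verifying that the substitution is valid: $y'(t)<0$ (forced by the ODE) makes $r(t)$ monotonically decreasing, and the boundary conditions $y(0)=1,\ y(1)=0$ align exactly with the integration limits $r=\infty$ and $r=0$ of the prophet integral. The smoothness caveat at the end of Section~\ref{sec:small_notation} lets me assume $F$ and $F^{-1}$ are differentiable, so the chain-rule computations in Step~1 are fully justified, and the boundary terms from the two integrations by parts used to derive Lemma~\ref{lemma:prophet_bound} vanish under the usual finite-first-moment assumption.
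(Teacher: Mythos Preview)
Your proposal is correct and follows essentially the same route as the paper: change variables $r=r(t)=F^{-1}(y(t))$ in the prophet integral, use $e^{-w(r(t))}=y(t)$ and $w(r(s))=-\log y(s)$ to rewrite both $\MAX$ and the lower bound on $\ALG$ as integrals of $R(r(t))$ against $-y'(t)$ and $\exp\bigl(\int_0^t\log y(s)\,ds\bigr)$ respectively, and then invoke \eqref{eq:prob_a} to identify them up to the factor $\beta$. The chain-rule identity $w'(r(t))\,r'(t)=-y'(t)/y(t)$ that you spell out is exactly what the paper's substitution $y(t)=e^{-w(r(t))}$ encodes, so the two arguments are the same up to the level of detail shown.
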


\begin{proof}
  Since $r(t)$ goes from the top to the bottom of the support of $F$ when we
  vary $t$ from $0$ to $1$, we can apply the change of variables $r = r(t)$ in
  the prophet bound in Lemma \ref{lemma:prophet_bound} obtaining:
  \begin{equation}\label{eq:prophet_rewritten} 
    \MAX = \int_{0}^1 R(r(t)) e^{-w(r(t))} w'(r(t)) r'(t) dt
  \end{equation}
Now, substituting $y(t) = e^{-w(r(t))}$ and $\hat{R}(t) = R(r(t))$, we get:
  \begin{equation}\label{eq:compare_bounds}  
  \begin{aligned}
   & \MAX & = & \int_{0}^1 \hat{R}(t) [-y'(t)] dt \\
   & \ALG & \geq  (1-\eps)^3 & \int_{0}^1 \hat{R}(t)  \exp\left(  \int_0^t \log y(s) ds \right) dt
  \end{aligned}
  \end{equation}
  The conditions in equation \eqref{eq:prob_a} together with the expression above
  directly imply that the policy is a $(1-\eps^3)\beta$-approximation.
\end{proof}

The final step is to find a function satisfying the conditions in equation
\eqref{eq:prob_a} with $\beta$ equal to the the Kertz bound. It turns out that
that equation \eqref{eq:prob_a} can be transformed in the equation defining the
Kertz bound in equation \eqref{eq:kertz}. The proof of the following lemma can be found in the appendix.

\begin{lemma}\label{lemma:differential_equation}
  There is a solution to equation \eqref{eq:prob_a} with $\beta$ equal to the
  Kertz bound.
\end{lemma}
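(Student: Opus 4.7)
My approach is to reduce the integral equation \eqref{eq:prob_a} to a first-order separable ODE and then observe that the resulting separation directly yields the defining integral \eqref{eq:kertz} for the Kertz constant.

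\textbf{Step 1: Differentiate to obtain a local ODE.} First I differentiate both sides of equation \eqref{eq:prob_a} in $t$. The left-hand side gives $\exp\!\left(\int_0^t \log y(s)\,ds\right)\cdot \log y(t) = -\beta\, y'(t)\log y(t)$ (using the equation itself to replace the exponential), while the right-hand side gives $-\beta\, y''(t)$. This produces the clean second-order ODE
\begin{equation*}
y''(t) = y'(t)\log y(t).
\end{equation*}
Evaluating the original equation at $t=0$ gives the initial condition $y'(0) = -1/\beta$, together with $y(0) = 1$.

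\textbf{Step 2: Reduce to a first-order ODE in $y$.} Set $v(t) = y'(t)$. The ODE becomes $v'(t) = v(t)\log y(t)$, so using $y$ as the independent variable (which is legitimate on any interval where $y'<0$), we get $dv/dy = \log y$. Integrating yields $v = y\log y - y + C$, and plugging in $y=1$, $v=-1/\beta$ determines $C = 1 - 1/\beta$. Thus
\begin{equation*}
y'(t) = -\bigl[(\beta^{-1}-1) - y(\log y - 1)\bigr].
\end{equation*}
One checks that the right-hand side is strictly negative for $y \in (0,1]$ whenever $\beta < 1$, so $y$ is strictly decreasing wherever the solution exists in that range, which justifies the change of variable.

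\textbf{Step 3: Separate variables and match to Kertz.} Separating variables and integrating from $t=0$ (where $y=1$) to $t=1$ (where we want $y=0$) gives
\begin{equation*}
1 = \int_0^1 dt = \int_0^1 \frac{dy}{(\beta^{-1}-1) - y(\log y - 1)},
\end{equation*}
which is precisely Kertz's equation \eqref{eq:kertz}. Hence the value of $\beta$ that makes the trajectory hit $0$ exactly at $t=1$ is the Kertz constant.

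\textbf{Step 4: Verify the reduction is reversible.} Since differentiating \eqref{eq:prob_a} can lose a constant of integration, I need to confirm that the $y$ produced above solves the original integral equation, not merely the differentiated form. Both sides of \eqref{eq:prob_a} equal $1$ at $t=0$ (using $y'(0) = -1/\beta$), and by the derivation in Step~1 they have equal $t$-derivatives wherever the ODE holds, so they agree on $[0,1]$. The main subtlety is existence/uniqueness on the closed interval $[0,1]$: the right-hand side of the first-order ODE is smooth away from $y=0$, so local existence is standard, and the Kertz integral being exactly $1$ guarantees the trajectory reaches $y=0$ at $t=1$ without blowing up earlier. I expect this step (ensuring the separation is integrable at the endpoint $y\to 0$, where $y(\log y -1)\to 0$ but the integrand stays bounded by $1/(\beta^{-1}-1)$) to be the only real technicality.
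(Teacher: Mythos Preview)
Your proposal is correct and takes essentially the same approach as the paper: both obtain the second-order relation $y'' = y'\log y$, reduce it to the first-order ODE $y' = y(\log y - 1) - (\beta^{-1}-1)$, and then separate variables to land on Kertz's integral \eqref{eq:kertz}. The only cosmetic difference is direction: the paper starts from the first-order ODE and verifies the integral equation afterward, while you start from the integral equation and reduce to the ODE, but the substance is identical.
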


Taken together, the previous lemmas imply a proof to Theorem \ref{thm:small_prophets}.

\subsection{Constructing the worst case distribution}

The machinery developed in this section also allows us to derive a worst case distribution for i.i.d. prophets in a simple way. As a consequence of Lemma \ref{lemma:kertz_small}, the worst case instance occurs for small variables. In particular, we can simply specify the c.d.f. $F$ of the $\max_i X_i$ and then the instance will be composed of $n$ variables with c.d.f. $F^{1/n}$. We will study the case where $n \rightarrow \infty$ and $\eps \rightarrow 0$.

For the lower bound (positive result) in Lemma \ref{lemma:yFr} we showed that the policy $r(t) = F^{-1}(y(t))$ is a $\beta$-competitive with respect to the prophet. To show it is tight, re-deriving Kertz original upper bound, we reverse engineeer a distribution $F$ such that the optimal policy is forced to have this form. 

We will define a distribution $F_q$ parametrized by $q \in (0,1)$ and obtain the worst case instance in the limit as $q \rightarrow 0$. It is convenient to define $p \in (0,1)$ such that $p = y(q)$ (recall that $y(\cdot)$ is the solution to the differential equation \eqref{eq:prob_a}). Let's also define the following constant (which should be thought as a very large number as $p \rightarrow 1$):
$$H = \frac{1}{y'(q) \log p} - \int_q^1 \frac{1}{y'(t)} dt $$
Next,  we define a function $r^*:[0,1] \rightarrow [0,H]$ 
\begin{equation}\label{eq:optimal_r}
r^*(t) = -\int_t^1 \frac{1}{y'(s)} ds \text{ for } t \in [q,1] \text{ and } r^*(t) = H \text{ for } t \in [0,q]
\end{equation}
With it, we define the c.d.f. $F_q$ as follows:
$$F_q(x) = \left\{ \begin{aligned}
& y((r^*)^{-1}(x)) & & \text{for } t \in [0, r^*(q)] \\
& p & & \text{for } t \in (r^*(q), H] \\
& 1 & & \text{for } t \in [H, \infty)
\end{aligned} \right.$$

\begin{theorem}\label{thm:upper_bound}
Consider $n$ i.i.d. variables with distribution $F_q^{1/n}$ for the distribution $F_q$ constructed above. Let $\OPT_q$ denote the optimal policy for this instance in the limit as $n \rightarrow \infty$ and let $\MAX_q$ denote the prophet benchmark (which is independent of $n$). Then $\lim_{q \rightarrow 0} \OPT_q / \MAX_q = \beta$.
\end{theorem}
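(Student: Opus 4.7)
The plan is to split $\lim_{q\to 0} \OPT_q/\MAX_q$ into matching liminf and limsup arguments. The $\liminf \geq \beta$ is almost immediate from the small-variable machinery developed earlier in this section: for any fixed $q$, as $n\to\infty$ each of the $n$ i.i.d.\ variables with c.d.f.\ $F_q^{1/n}$ becomes $\eps_n$-small with $\eps_n = 1 - y(q)^{1/n}\to 0$, so Theorem~\ref{thm:small_prophets} gives $\OPT_q \geq (\beta - O(\eps_n))\MAX_q$ in the large-$n$ limit. The substantive direction is the matching $\limsup_{q\to 0}\OPT_q/\MAX_q \leq \beta$.

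For the upper bound I would first reduce to time-based threshold policies: by i.i.d.\ symmetry and standard dynamic programming, the optimal policy on $n$ copies of $F_q^{1/n}$ has the form ``accept at position $i$ iff value $\geq \tau_i$'' for a non-increasing sequence of thresholds, and in the $n\to\infty$ limit with $t = i/n$ this becomes a non-increasing function $\hat r(t)$. Because the variables are $\eps_n$-small with $\eps_n\to 0$, the chain of inequalities in the derivation of Lemma~\ref{lemma:policy_bound} (specifically $1-x(1-\eps)\geq e^{-x}$ and the estimate \eqref{eq:wapprox}) becomes tight in the limit, so the reward of such a policy equals $\int_0^1 R_q(\hat r(t))\exp\bigl(-\int_0^t w_q(\hat r(s))\, ds\bigr)\, dt$ up to an $o(1)$ factor, with $w_q := -\log F_q$. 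The upper-bound problem thus reduces to bounding this functional in $\hat r$ against $\MAX_q$.

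To show that $\hat r = r^*$ attains the ratio $\beta$ and is in fact the argmax (in the limit $q\to 0$), I would run the change of variables $r = r^*(t)$ from Lemma~\ref{lemma:yFr} in reverse. Since $F_q(r^*(t)) = y(t)$ on $[q, 1]$ by construction, both the prophet integral from Lemma~\ref{lemma:prophet_bound} and the policy integral above convert into integrals in $t$ of $\hat R(t) := R_q(r^*(t))$ weighted by functions of $y$, together with two boundary pieces: the interval $[0,q]$ on which $r^*(t) = H$, and the atom of $F_q$ at $H$ of mass $1-p$. The ODE \eqref{eq:prob_a} identifies $\exp(\int_0^t \log y(s)\, ds)$ with $-\beta y'(t)$, which forces the two main integrals to have ratio exactly $\beta$; the boundary pieces contribute $O(H(1-p))$, and one checks that $H(1-p)/\MAX_q\to 0$ as $q\to 0$ using $p = y(q)\to 1$. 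To show $r^*$ is actually the argmax among all $\hat r$, compute the Euler--Lagrange equation of the ratio functional after substituting $\hat y := F_q(\hat r)$; the construction of $F_q$ (specifically the choice of $r^*$ as its inverse on the relevant support) is arranged so that this reduces to precisely \eqref{eq:prob_a}, with $y$ as its unique admissible solution.

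The main obstacle is the variational step: rigorously showing that $r^*$ is optimal among all time-based thresholds for $F_q$ without circularly invoking Kertz's upper bound that we are re-deriving. In particular one has to handle the flat region $(r^*(q), H]$ and the atom at $H$ carefully, because a priori a policy could try to exploit the shape of $F_q$ there in a non-standard way; one needs to verify that any such $\hat r$ can be converted into a genuinely non-increasing threshold without loss. Once optimality is established and the boundary contributions $O(H(1-p))$ are shown to be negligible compared to $\MAX_q$, sending $q\to 0$ yields the limit $\beta$.
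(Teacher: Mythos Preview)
Your lower bound (the $\liminf \geq \beta$ direction) via Theorem~\ref{thm:small_prophets} is fine, but the upper-bound argument has two concrete gaps.

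First, the variational route is unnecessary and is what creates your ``main obstacle''. The paper instead uses the Bellman characterization of the optimal time-based policy: in the $n\to\infty$ limit the threshold function must satisfy the backwards ODE $r'(t) = \int_{r(t)}^\infty \log F_q(u)\,du$ with $r(1)=0$ (this is just ``accept iff the value exceeds the continuation value''). One then \emph{verifies directly} that $r^*$ solves this ODE on $[q,1]$ by plugging in and using $y'' = y'\log y$; the $[0,q]$ part is handled separately. This sidesteps any Euler--Lagrange analysis and the worries you raise about the flat region and the atom.

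Second, and more seriously, your claim that the boundary pieces contribute $O(H(1-p))$ with $H(1-p)/\MAX_q \to 0$ is false. As $q\to 0$ one has $p = y(q)\to 1$ and $H \sim (y'(q)\log p)^{-1}$, so $H(1-p) \sim -H\log p \to -1/y'(0) = \beta$, a nonzero constant; meanwhile $\MAX_q$ itself is $O(1)$. The atom at $H$ is a \emph{leading-order} contribution, not a correction. In the paper's computation one has $\OPT_q = (1-p^q)H + p^q\, r^*(q)$ and $\MAX_q = (H - r^*(q))(1-p) + \int_0^{r^*(q)}(1-F_q)$, and both the $H$-terms survive in the limit (they are the point of the construction). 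The limit $\OPT_q/\MAX_q \to \beta$ comes from an explicit evaluation of this ratio using $H\log p \to 1/y'(0)$, the identity $y'' = y'\log y$, and an integration-by-parts computation of $\int_0^1 y\log y/y'\,dt = 1-\beta$, not from discarding boundary terms.
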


The proof can be found in Appendix \ref{apx:proof_thm_upper_bound}.

\newcommand{\ycurve}{(0.000000,1.000000)--(0.010000,0.986585)--(0.020000,0.973173)--(0.030000,0.959763)--(0.040000,0.946360)--(0.050000,0.932964)--(0.060000,0.919577)--(0.070000,0.906201)--(0.080000,0.892838)--(0.090000,0.879491)--(0.100000,0.866161)--(0.110000,0.852850)--(0.120000,0.839560)--(0.130000,0.826293)--(0.140000,0.813051)--(0.150000,0.799837)--(0.160000,0.786653)--(0.170000,0.773500)--(0.180000,0.760381)--(0.190000,0.747297)--(0.200000,0.734252)--(0.210000,0.721247)--(0.220000,0.708284)--(0.230000,0.695366)--(0.240000,0.682495)--(0.250000,0.669673)--(0.260000,0.656902)--(0.270000,0.644185)--(0.280000,0.631524)--(0.290000,0.618920)--(0.300000,0.606377)--(0.310000,0.593897)--(0.320000,0.581481)--(0.330000,0.569133)--(0.340000,0.556854)--(0.350000,0.544647)--(0.360000,0.532513)--(0.370000,0.520456)--(0.380000,0.508477)--(0.390000,0.496579)--(0.400000,0.484765)--(0.410000,0.473035)--(0.420000,0.461393)--(0.430000,0.449840)--(0.440000,0.438380)--(0.450000,0.427013)--(0.460000,0.415743)--(0.470000,0.404571)--(0.480000,0.393500)--(0.490000,0.382532)--(0.500000,0.371669)--(0.510000,0.360912)--(0.520000,0.350265)--(0.530000,0.339729)--(0.540000,0.329306)--(0.550000,0.318998)--(0.560000,0.308807)--(0.570000,0.298736)--(0.580000,0.288785)--(0.590000,0.278957)--(0.600000,0.269254)--(0.610000,0.259677)--(0.620000,0.250228)--(0.630000,0.240910)--(0.640000,0.231723)--(0.650000,0.222670)--(0.660000,0.213751)--(0.670000,0.204969)--(0.680000,0.196325)--(0.690000,0.187821)--(0.700000,0.179458)--(0.710000,0.171237)--(0.720000,0.163160)--(0.730000,0.155228)--(0.740000,0.147443)--(0.750000,0.139805)--(0.760000,0.132316)--(0.770000,0.124977)--(0.780000,0.117789)--(0.790000,0.110753)--(0.800000,0.103870)--(0.810000,0.097142)--(0.820000,0.090568)--(0.830000,0.084150)--(0.840000,0.077890)--(0.850000,0.071787)--(0.860000,0.065843)--(0.870000,0.060058)--(0.880000,0.054434)--(0.890000,0.048971)--(0.900000,0.043671)--(0.910000,0.038534)--(0.920000,0.033561)--(0.930000,0.028755)--(0.940000,0.024116)--(0.950000,0.019647)--(0.960000,0.015350)--(0.970000,0.011229)--(0.980000,0.007290)--(0.990000,0.003540)}

\newcommand{\rcurve}{(0,1)--(.2,1)--(0.200000,0.487117)--(0.210000,0.485782)--(0.220000,0.484380)--(0.230000,0.482908)--(0.240000,0.481368)--(0.250000,0.479758)--(0.260000,0.478077)--(0.270000,0.476326)--(0.280000,0.474503)--(0.290000,0.472608)--(0.300000,0.470640)--(0.310000,0.468598)--(0.320000,0.466481)--(0.330000,0.464290)--(0.340000,0.462021)--(0.350000,0.459676)--(0.360000,0.457253)--(0.370000,0.454750)--(0.380000,0.452167)--(0.390000,0.449503)--(0.400000,0.446756)--(0.410000,0.443925)--(0.420000,0.441008)--(0.430000,0.438006)--(0.440000,0.434915)--(0.450000,0.431734)--(0.460000,0.428463)--(0.470000,0.425098)--(0.480000,0.421639)--(0.490000,0.418084)--(0.500000,0.414430)--(0.510000,0.410676)--(0.520000,0.406820)--(0.530000,0.402858)--(0.540000,0.398790)--(0.550000,0.394612)--(0.560000,0.390322)--(0.570000,0.385917)--(0.580000,0.381395)--(0.590000,0.376752)--(0.600000,0.371985)--(0.610000,0.367091)--(0.620000,0.362067)--(0.630000,0.356908)--(0.640000,0.351611)--(0.650000,0.346172)--(0.660000,0.340586)--(0.670000,0.334850)--(0.680000,0.328957)--(0.690000,0.322904)--(0.700000,0.316684)--(0.710000,0.310292)--(0.720000,0.303722)--(0.730000,0.296967)--(0.740000,0.290021)--(0.750000,0.282877)--(0.760000,0.275526)--(0.770000,0.267960)--(0.780000,0.260172)--(0.790000,0.252150)--(0.800000,0.243885)--(0.810000,0.235367)--(0.820000,0.226583)--(0.830000,0.217521)--(0.840000,0.208167)--(0.850000,0.198507)--(0.860000,0.188524)--(0.870000,0.178201)--(0.880000,0.167519)--(0.890000,0.156457)--(0.900000,0.144991)--(0.910000,0.133095)--(0.920000,0.120741)--(0.930000,0.107894)--(0.940000,0.094519)--(0.950000,0.080570)--(0.960000,0.065996)--(0.970000,0.050734)--(0.980000,0.034704)--(0.990000,0.017794)}

\newcommand{\fcurve}{(0.000000,0.000000)--(0.010000,0.001178)--(0.020000,0.002424)--(0.030000,0.003723)--(0.040000,0.005077)--(0.050000,0.006485)--(0.060000,0.007947)--(0.070000,0.009464)--(0.080000,0.011038)--(0.090000,0.012670)--(0.100000,0.014360)--(0.110000,0.016111)--(0.120000,0.017923)--(0.130000,0.019799)--(0.140000,0.021740)--(0.150000,0.023747)--(0.160000,0.025823)--(0.170000,0.027970)--(0.180000,0.030190)--(0.190000,0.032484)--(0.200000,0.034855)--(0.210000,0.037305)--(0.220000,0.039836)--(0.230000,0.042451)--(0.240000,0.045153)--(0.250000,0.047943)--(0.260000,0.050824)--(0.270000,0.053799)--(0.280000,0.056871)--(0.290000,0.060042)--(0.300000,0.063316)--(0.310000,0.066695)--(0.320000,0.070183)--(0.330000,0.073782)--(0.340000,0.077495)--(0.350000,0.081327)--(0.360000,0.085280)--(0.370000,0.089357)--(0.380000,0.093563)--(0.390000,0.097900)--(0.400000,0.102372)--(0.410000,0.106983)--(0.420000,0.111737)--(0.430000,0.116637)--(0.440000,0.121687)--(0.450000,0.126891)--(0.460000,0.132252)--(0.470000,0.137776)--(0.480000,0.143464)--(0.490000,0.149323)--(0.500000,0.155354)--(0.510000,0.161563)--(0.520000,0.167954)--(0.530000,0.174530)--(0.540000,0.181295)--(0.550000,0.188253)--(0.560000,0.195408)--(0.570000,0.202764)--(0.580000,0.210325)--(0.590000,0.218093)--(0.600000,0.226074)--(0.610000,0.234269)--(0.620000,0.242684)--(0.630000,0.251320)--(0.640000,0.260180)--(0.650000,0.269269)--(0.660000,0.278588)--(0.670000,0.288141)--(0.680000,0.297929)--(0.690000,0.307954)--(0.700000,0.318219)--(0.710000,0.328725)--(0.720000,0.339474)--(0.730000,0.350466)--(0.740000,0.361703)--(0.750000,0.373185)--(0.760000,0.384912)--(0.770000,0.396884)--(0.780000,0.409100)--(0.790000,0.421561)--(0.800000,0.434264)--(0.810000,0.447208)--(0.820000,0.460391)--(0.830000,0.473811)--(0.840000,0.487465)--(0.850000,0.501351)--(0.860000,0.515463)--(0.870000,0.529800)--(0.880000,0.544356)--(0.890000,0.559127)--(0.900000,0.574108)--(0.910000,0.589293)--(0.920000,0.604678)--(0.930000,0.620255)--(0.940000,0.636019)--(0.950000,0.651962)--(0.960000,0.668078)--(0.970000,0.684360)--(0.980000,0.700800)--(0.990000,0.717390)--(1.9,0.717390)--(1.9,1)--(2,1)}

\begin{figure}[t]
\centering
\begin{tikzpicture}[xscale=3, yscale=3]
\draw[line width=1,  color=blue] \ycurve;
\draw[->] (-0.05,0) -- (1.05,0);
\draw[->] (0.0,-0.05) -- (0.0,1.05);
\draw (1,-.02)--(1,.02);
\draw (-.02,1)--(.02,1);
\node at (1,-.1) {$1$};
\node at (-.06,1) {$1$};
\node at (1.05,.1) {$t$};
\node at (.2,1) {$y(t)$};

\begin{scope}[xshift=40]
\draw[->] (-0.05,0) -- (1.05,0);
\draw[->] (0.0,-0.05) -- (0.0,1.05);
\draw[line width=1,  color=blue] \rcurve;
\node at (-.2,.5) {$r^*(q)$};
\node at (.4,1) {$r^*(t)$};
\node at (-.1,1) {$H$};
\draw (1,-.02)--(1,.02);
\draw (.2,-.02)--(.2,.02);
\node at (.2,-.1) {$q$};
\node at (1,-.1) {$1$};
\draw (-.02,.5)--(.02,.5);
\draw (-.02,1)--(.02,1);
\end{scope}

\begin{scope}[xshift=80]
\draw[->] (-0.05,0) -- (2.05,0);
\draw[->] (0.0,-0.05) -- (0.0,1.05);
\draw[line width=1,  color=blue] \fcurve;
\draw (0.99,-.02)--(0.99,.02);
\draw (1.9,-.02)--(1.9,.02);
\node at (0.99,-.1) {$r^*(q)$};
\node at (1.9,-.1) {$H$};
\draw (-.02,1)--(.02,1);
\draw (-.02,0.717)--(.02,0.717);
\node at (-.1,1) {$1$};
\node at (-.1,0.717) {$p$};
\node at (1.7,1) {$F(x)$};
\end{scope}
\end{tikzpicture}
\caption{The first plot corresponds to the solution $y(t)$ of the differential equation \eqref{eq:prob_a}, which is also equivalent to equations \eqref{eq:prob_b}. The second and third plots correspond to the constructions of $r^*(t)$ and $F_q(x)$.}
\label{fig:y_r_F}
\end{figure}
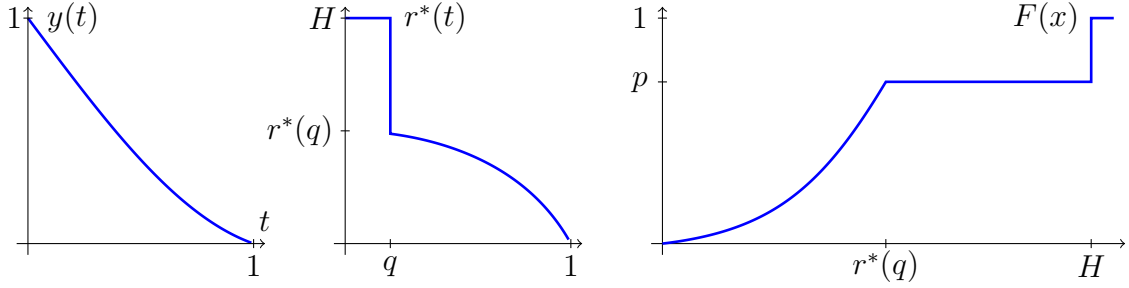

\section{Imperfect Prophets} \label{sect:imperfect}

The Kertz bound $\beta \approx 0.745$ is the golden standard for prophet inequalities since the i.i.d. case establishes a natural upper bound on what can be achieved in any other setting. Our main result in this paper is that \textit{all} prophet instances are \textit{near} optimal: by removing $\poly(\eps^{-1})$ variables from an instance, we can find a stopping time for the resulting instance that achieves a $(\beta - \eps)$ fraction of its maximum.

It is useful to contrast this with the result in \cite{CSZ19}, which shows an instance $\bX$ where the gap between $\OPT(\bX)$ and $\MAX(\bX)$ is strictly less than $\beta$. Their result shows an inherent gap between the i.i.d. and non-i.i.d. cases. We show that by removing a constant number of variables, this gap disappears. In fact, if we compete with a slightly imperfect prophet - who can obtain the $k$-th largest value as reward ($\MAX_k$ in Definition \ref{def:imperfect}) then we are able to achieve the Kertz bound and this is tight. 

\begin{theorem}\label{thm:nearby}
Consider a collection of $n$ random variables $\bX = \{X_1, X_2, \dots, X_n\}$. Then, for any $\eps > 0$, there exists a subset $\bX'$ of $\bX$ containing $n - {\Theta}(\eps^{-3} \log \eps^{-1})$ of these variables so that $$\OPT(\bX') \geq (\beta - O(\eps)) \cdot \MAX(\bX')$$ Moreover, it is possible to efficiently construct such a subset $\bX'$ and a policy achieving this guarantee in polynomial time. 
\end{theorem}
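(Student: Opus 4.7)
The plan is to reduce Theorem \ref{thm:nearby} to the $(\eps,\delta)$-small setting addressed by Corollary \ref{cor:eps-delta-small}. Given an instance $\bX$, I aim to produce a threshold $t > 0$ and a set $S \subseteq \bX$ with $|S| \leq \tilde{\Theta}(\eps^{-3})$ such that (i) every variable in $\bX' := \bX \setminus S$ is $(\eps,t)$-small, i.e. $\Pr[X_i > t] \leq \eps$, and (ii) the threshold is small relative to the residual prophet, $t \leq O(\eps)\cdot\MAX(\bX')$. Given such $(t,S)$, Corollary \ref{cor:eps-delta-small} delivers a stopping rule $\tau$ on $\bX'$ with $\E[X'_\tau] \geq (\beta - O(\eps))\MAX(\bX') - t$, and condition (ii) absorbs the additive $-t$ into the multiplicative $O(\eps)$ slack, giving the theorem.

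To construct $(t,S)$ I use a geometric ladder of thresholds $t_j = (1-\eps)^j \tau_0$ for $j = 0, 1, \dots, L$, where $\tau_0$ is a convenient upper bound on $\MAX(\bX)$ (e.g.\ $\sum_i \E[X_i]$) and $L = \Theta(\eps^{-1}\log\eps^{-1})$ so that $t_L \leq \eps\tau_0$. For each $j$ let $B_j = \{i : \Pr[X_i > t_j] > \eps\}$; these sets are nested, $B_0 \subseteq \cdots \subseteq B_L$. I walk down the ladder, maintaining a growing removed set $S$ initialized at $B_0$, and at each step from $j$ to $j+1$ append the new entrants $B_{j+1}\setminus B_j$. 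The walk halts at the smallest index $j^*$ for which condition (ii) is already satisfied for the current pair $(t_{j^*}, B_{j^*})$. The key combinatorial lemma is a per-level budget: if more than $\tilde{\Theta}(\eps^{-2})$ new variables enter $B_{j+1}\setminus B_j$ at some step, then by independence the joint max of just these new entrants exceeds $t_{j+1}$ with probability $1 - o(1)$, forcing $\MAX(\bX\setminus B_j) \geq \Omega(t_{j+1})$ -- so condition (ii) would already hold and the walk would have terminated. Hence per-level growth is at most $\tilde{\Theta}(\eps^{-2})$, and summing across the $L = \tilde{\Theta}(\eps^{-1})$ levels yields $|S| \leq \tilde{\Theta}(\eps^{-3})$.

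Polynomial-time constructibility follows from the explicit ladder (tail probabilities are obtainable from CDF queries) together with the polynomial-time guarantees in Corollary \ref{cor:eps-delta-small} and Theorem \ref{thm:small_prophets}. The main obstacle is formalizing the per-level budget while accounting for the fact that removing variables from $\bX$ can itself reduce $\MAX(\bX')$; the independence-based lower bound on the residual max must be phrased relative to the current residual rather than the original $\MAX(\bX)$, and the walk must be designed so that early termination (driven by the residual max becoming large) is guaranteed before the budget is exhausted. A secondary subtlety arises when the instance already has i.i.d.-like structure with mass concentrated in an upper quantile (e.g.\ $n$ i.i.d.\ uniform $[0,1]$ variables): a direct threshold removal cannot force $t \leq O(\eps)\MAX(\bX')$, and here one should combine the above with the i.i.d.\ splitting idea of Lemma \ref{lemma:iid_splitting}, virtually splitting each $X_i$ into $\tilde{\Theta}(\eps^{-1})$ copies so that Theorem \ref{thm:small_prophets} applies with a smaller effective $\delta$ and the stopping decisions can be pulled back to the original variables.
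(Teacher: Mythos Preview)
Your plan has a genuine gap: the target pair of conditions ``(i) every $X_i\in\bX'$ is $(\eps,t)$-small'' and ``(ii) $t\le O(\eps)\,\MAX(\bX')$'' is not simultaneously achievable in general, and your ladder argument does not establish it. Concretely, your per-level budget only shows that if many new entrants appear in $B_{j+1}\setminus B_j$, then the residual max satisfies $\MAX(\bX\setminus B_j)\ge(1-o(1))\,t_{j+1}$, which gives $t_j\le O(1)\cdot\MAX(\bX\setminus B_j)$, not $t_j\le O(\eps)\cdot\MAX(\bX\setminus B_j)$. The uniform example you yourself raise is exactly this obstruction, and it is not a corner case: for $n$ i.i.d.\ $U[0,1]$ variables no removal of $\poly(1/\eps)$ items can make the remainder $(\eps,t)$-small with $t\le O(\eps)\MAX$. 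Your fallback of per-variable splitting via Lemma~\ref{lemma:iid_splitting} does not rescue this in the non-i.i.d.\ random-order setting: the pullback in that lemma places the $k$ virtual copies of $X_i$ \emph{consecutively}, so the virtual sequence is not in uniform random order, and the time-based analysis of Theorem~\ref{thm:small_prophets} does not apply.

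The paper avoids this obstruction by never trying to make $t$ small. Instead it picks $t^*$ so that exactly $k=\Theta(\eps^{-2}\log\eps^{-1})$ variables have $\Pr[X_i>(1+\eps)t^*]\ge\eps$, keeps those as a \emph{safety net} rather than discarding them, and runs the small-prophets policy on the shifted residuals $Z_i=\max(X_i,t^*)-t^*$, which are $(\eps,\eps t^*)$-small. Conditioned on the safety net firing (which happens with probability $1-O(\eps)$), the algorithm is guaranteed $t^*$ plus the small-prophets reward on $\bZ'$, and $t^*+\MAX(\bZ')\ge\MAX(\bX')$ absorbs the additive loss without any smallness assumption on $t^*$. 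This yields the weak statement $\OPT(\bX)\ge(\beta-O(\eps))\MAX(\bX')$ (the algorithm still uses all of $\bX$). To obtain Theorem~\ref{thm:nearby}, where the algorithm is restricted to $\bX'$, the paper then applies the weak statement \emph{recursively} $\Theta(\eps^{-1}\log\eps^{-1})$ times: either at some step $\MAX$ drops by less than an $\eps$-fraction and that level's instance works, or $\MAX$ collapses geometrically and a single high-expectation variable suffices. The $\eps^{-3}$ in the final count comes from this recursion depth times the per-step removal, not from a per-level ladder budget as you propose.
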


If our stopping rule is allowed to use variables that the prophet is not, then we can get a slightly better dependence on $\varepsilon$. 

\begin{theorem}\label{thm:nearby_weak_intro}
Consider a collection of $n$ random variables $\bX = \{X_1, X_2, \dots, X_n\}$. Then, for any $\eps > 0$, there exists a subset $\bX'$ of $\bX$ containing $n - {\Theta}(\eps^{-2} \log \eps^{-1})$ of these variables so that $\OPT(\bX) \geq (\beta - O(\eps)) \cdot \MAX(\bX')$. Moreover, it is possible to efficiently construct such a subset $\bX'$ and a policy achieving this guarantee in polynomial time. 
\end{theorem}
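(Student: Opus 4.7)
My plan is to apply the decomposition template sketched in the introduction. I fix a threshold $t > 0$, define residues $Z_i := \max(X_i, t) - t$, and let $S := \{i : \Pr[X_i > t] > p\}$ for an appropriate cutoff $p$. Setting $\bX' := \bX \setminus S$, the pointwise inequality $\max_{i \in \bX'} X_i \leq t + \max_{i \in \bX'} Z_i$ yields $\MAX(\bX') \leq t + \E[\max_{i \in \bX'} Z_i]$, reducing the theorem to designing a policy on $\bX$ whose reward is roughly $(\beta - O(\eps))\bigl(t + \E[\max_{i \in \bX'} Z_i]\bigr)$.

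For the parameter choices, I use the standard estimate $\sum_i \Pr[X_i > t] \leq -\log F(t)$ (which follows from $-\log(1-x) \geq x$) and select $t$ at a quantile with $F(t) \geq \eps^{\Theta(1)}$, so that $\sum_i \Pr[X_i > t] = O(\log \eps^{-1})$. The cutoff $p = \eps^2$ then yields, by Markov, $|S| \leq O(\log \eps^{-1})/\eps^2 = \Theta(\eps^{-2} \log \eps^{-1})$, matching the statement. Each residue $Z_i$ for $i \in \bX'$ is $\eps^2$-small (hence $\eps$-small), so Corollary \ref{cor:eps-delta-small} supplies a time-based threshold function $r(\cdot)$ whose policy is $(\beta - O(\eps))$-competitive with $\E[\max_{i \in \bX'} Z_i]$.

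The algorithm itself is the shifted time-based policy, applied to all of $\bX$: when $X_i$ arrives at time $s$, accept iff $X_i \geq t + r(s)$ (equivalently $Z_i \geq r(s)$). Redoing the computation of Lemma \ref{lemma:policy_bound} with the shifted threshold, and using the pointwise identity $\E[X_i \one\{X_i \geq t + r(s)\}] = t \cdot \Pr[Z_i \geq r(s)] + \E[Z_i \one\{Z_i \geq r(s)\}]$ (so each acceptance contributes both a ``$t$ base'' and its $Z$-excess), yields a bound of the shape
\[
  \ALG \;\geq\; t \cdot \Pr[\text{the policy accepts something}] \;+\; (\beta - O(\eps)) \cdot \E[\max_{i \in \bX'} Z_i].
\]
Combined with the prophet bound, the theorem reduces to guaranteeing that $\Pr[\text{acceptance}] \geq \beta - O(\eps)$.

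I expect this last step to be the main obstacle. My plan is to handle it by a dichotomy on whether the tail above $t$ concentrates on $\bX'$ or on $S$: either $\Pr[\max_{i \in \bX'} X_i > t] \geq \beta$, in which case the policy restricted to $\bX'$ already accepts with comparable probability; or else the bigs dominate the tail, and since $|S| = \Omega(\eps^{-2}\log \eps^{-1})$ independent variables each satisfy $\Pr[X_i > t] \geq \eps^2$, we get $\Pr[\exists j \in S : X_j > t] \geq 1 - (1-\eps^2)^{|S|} \geq 1 - \eps$. Since the shifted policy accepts any $X_i \geq t + r(s)$ and $r(s) \to 0$ as $s \to 1$, late-arriving bigs above $t$ are caught with high probability. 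Unifying the two cases under a single choice of $t$, and carefully tracking the interaction between the $\bX'$-acceptances and the $S$-acceptances in the joint analysis (to avoid an $S$-acceptance ``cannibalizing'' a more valuable $\bX'$-acceptance in the small-prophet bound), is the delicate technical step that remains. The asymmetry---using $\bX$ in the policy but $\bX'$ in the benchmark---is precisely what allows the removal count to shrink from $\Theta(\eps^{-3}\log\eps^{-1})$ in Theorem \ref{thm:nearby} to $\Theta(\eps^{-2}\log\eps^{-1})$ here.
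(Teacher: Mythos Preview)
Your decomposition and the reduction to showing $\Pr[\text{accept}] \geq \beta - O(\eps)$ are the right shape, but the step you flag as delicate is a genuine gap, and your setup makes it hard to close. Your quantile choice of $t$ gives only an \emph{upper} bound $|S| = O(\eps^{-2}\log\eps^{-1})$ via Markov; the claim ``$|S| = \Omega(\eps^{-2}\log\eps^{-1})$'' in Case~2 of your dichotomy is unjustified (nothing rules out $|S| = 1$, with a single variable carrying all the tail mass above $t$). Case~1 is also incomplete: $\Pr[\max_{\bX'} X_i > t] \geq \beta$ says nothing about acceptance under the stricter thresholds $t + r(s)$. More seriously, running the shifted policy on \emph{all} of $\bX$ creates exactly the cannibalization you note---a big arriving early with small residue $Z_j$ blocks a later, larger $Z_i \in \bX'$---and the small-prophets bound of Lemma~\ref{lemma:policy_bound} genuinely needs $\eps$-smallness for its $(1-\eps)$ factors, so folding the bigs into that analysis is not harmless.

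The paper avoids both problems with two ideas you are missing. First, it chooses $t^*$ not by a quantile of the max but as the threshold at which \emph{exactly} $k = \Theta(\eps^{-2}\log\eps^{-1})$ variables fail to be $(\eps, \eps t^*)$-small; this guarantees each of those $k$ bigs individually satisfies $\Pr[X_i \geq (1+\eps)t^*] \geq \eps$, which is what makes them a reliable safety net. Second, it uses a \emph{two-phase} policy rather than a single shifted threshold: Phase~1 runs small prophets on $\bX'$ only, \emph{ignoring} any interleaved bigs, but (via a new Lemma~\ref{lem:modifiedsmall}) halts after only $(1-\eps)|\bX'|$ elements of $\bX'$ while still retaining the $(\beta - O(\eps))\MAX(\bZ') - \eps t^*$ guarantee; Phase~2 then accepts the first item with value $\geq t^*$. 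Each big independently lands in Phase~2 with probability $\eps$, so Hoeffding places $\geq k\eps/2 = \Theta(\eps^{-1}\log\eps^{-1})$ bigs there with probability $1 - O(\eps)$, and since each exceeds $t^*$ with probability $\geq \eps$, some item $\geq t^*$ appears in Phase~2 with probability $1 - O(\eps)$. This yields $\Pr[\text{accept}] \geq 1 - O(\eps)$ (stronger than the $\beta - O(\eps)$ you were aiming for) with no cannibalization, since bigs are simply skipped in Phase~1. The early-stopping lemma and the ``exact-count'' choice of $t^*$ are the missing ingredients.
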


Theorems \ref{thm:nearby} and \ref{thm:nearby_weak_intro} have a number of almost immediate interesting applications. For example, Theorem \ref{thm:nearby_weak_intro} immediately implies a nearly optimal prophet inequality for the $\Theta(\eps^{-2})$th-order statistic of $\bX$. 

\begin{corollary}\label{cor:kthlargest}
If $\bX$ is any collection of $n$ random variables, then  for any $\eps >0$ there is a $k = \Theta(\eps^{-2} \log \eps^{-1})$ such that
$$\OPT(\bX) \geq (\beta - O(\eps)) \cdot \MAX_k(\bX).$$
\end{corollary}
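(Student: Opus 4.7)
The plan is to derive this corollary as an almost immediate consequence of Theorem~\ref{thm:nearby_weak_intro}. First, I would apply that theorem to the collection $\bX$ to obtain a subset $\bX' \subseteq \bX$ with $|\bX \setminus \bX'| = \Theta(\eps^{-2} \log \eps^{-1})$ and an online policy on $\bX$ achieving $\OPT(\bX) \geq (\beta - O(\eps)) \cdot \MAX(\bX')$. Setting $k := |\bX \setminus \bX'| + 1 = \Theta(\eps^{-2} \log \eps^{-1})$, it then suffices to prove the pointwise (and hence in-expectation) comparison
\[
\max_{X_i \in \bX'} X_i \; \geq \; X_{(k)},
\]
where $X_{(k)}$ denotes the $k$-th largest value among $\bX$.

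The key observation for this comparison is a pigeonhole argument: since $\bX'$ is obtained from $\bX$ by discarding only $k-1$ variables, the top-$k$ variables of $\bX$ (in any realization) cannot all have been removed, so at least one of them survives in $\bX'$. Hence $\max_{\bX'} X_i$ is at least the smallest of those top-$k$ values, which is $X_{(k)}$. Taking expectations yields $\MAX(\bX') \geq \MAX_k(\bX)$, and chaining with Theorem~\ref{thm:nearby_weak_intro} gives
\[
\OPT(\bX) \;\geq\; (\beta - O(\eps)) \cdot \MAX(\bX') \;\geq\; (\beta - O(\eps)) \cdot \MAX_k(\bX),
\]
which is the claimed inequality.

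There is no real obstacle here; the nontrivial work has already been absorbed into Theorem~\ref{thm:nearby_weak_intro}. The only subtlety worth noting is that the corollary must use Theorem~\ref{thm:nearby_weak_intro} rather than Theorem~\ref{thm:nearby}, because the pigeonhole step requires the surviving top-$k$ variable to be available to the algorithm running on the full instance $\bX$ (not merely to a policy restricted to $\bX'$). This matches the statement, which bounds $\OPT(\bX)$ rather than $\OPT(\bX')$. The $\Theta(\eps^{-2} \log \eps^{-1})$ dependence for $k$ is inherited directly from Theorem~\ref{thm:nearby_weak_intro}.
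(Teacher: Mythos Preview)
Your proof is correct and matches the paper's own argument, which likewise invokes Theorem~\ref{thm:nearby_weak_intro} together with the pigeonhole observation that $\max(\bX') \geq X_{(k)}$ whenever at most $k-1$ variables are removed. One small correction to your closing remark: Theorem~\ref{thm:nearby} would also work here (since $\OPT(\bX) \geq \OPT(\bX')$ trivially, the pigeonhole step concerns only the benchmark side), and the genuine reason to prefer Theorem~\ref{thm:nearby_weak_intro} is simply that it delivers the $\Theta(\eps^{-2}\log\eps^{-1})$ dependence asserted in the corollary rather than $\Theta(\eps^{-3}\log\eps^{-1})$.
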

\begin{proof}
Let $X^{(k)}$ be a random variable denoting the value of the $k$th largest element of $\bX$. Note that for any subset $\bX'$ of $\bX$ with $|\bX'| = n-k$, $\max(\bX') \geq X^{(k)}$. The result then immediately follows from Theorem \ref{thm:nearby_weak_intro}.
\end{proof}

We can also use Theorem \ref{thm:nearby_weak_intro} to design optimal stopping rules for $m$-frequent instances (see Definition \ref{def:m_frequent}) approaching the Kertz bound  as $m$ grows large.

\begin{theorem}\label{thm:frequent}
Consider an $m$-frequent collection $\bX$ of independent random variables with $m = {\Omega}(\eps^{-2} \log \eps^{-1})$ for some $\eps >0$. Then:
$$\OPT(\bX) \geq (\beta - O(\eps)) \cdot \MAX(\bX)$$
\end{theorem}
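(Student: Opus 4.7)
The plan is to derive Theorem \ref{thm:frequent} from Theorem \ref{thm:nearby_weak_intro} together with a structural lemma showing that the prophet benchmark is robust under small removals in $m$-frequent instances. First, I would apply Theorem \ref{thm:nearby_weak_intro} to $\bX$ with parameter $\eps$ to obtain a subset $\bX' \subseteq \bX$ with $|\bX \setminus \bX'| \leq k$ for $k = \Theta(\eps^{-2} \log \eps^{-1})$, together with an efficiently computable stopping rule $\tau$ on $\bX$ satisfying $\E[X_\tau] \geq (\beta - O(\eps)) \cdot \MAX(\bX')$. This immediately gives $\OPT(\bX) \geq (\beta - O(\eps)) \cdot \MAX(\bX')$, so the remaining task is to relate $\MAX(\bX')$ back to $\MAX(\bX)$.

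The key structural lemma states that if $\bX$ is $m$-frequent with $m \geq k$, then any subset $\bX' \subseteq \bX$ obtained by removing at most $k$ variables satisfies $\MAX(\bX') \geq (1 - O(k/m)) \cdot \MAX(\bX)$. To prove it, write $F(r) = \prod_j F_j(r)^{S_j}$ for the c.d.f. of $\max_i X_i$, where $F_j$ ranges over the distinct distributions with multiplicities $S_j \geq m$. Because each term $\log F_j(r) \leq 0$, the identity $\log F(r) = \sum_j S_j \log F_j(r)$ forces $\log F_j(r) \geq \log F(r)/S_j \geq \log F(r)/m$, i.e.\ the pointwise bound $F_j(r) \geq F(r)^{1/m}$. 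Writing $k_j$ for the number of copies of $F_j$ removed (so $\sum_j k_j \leq k$), the new c.d.f.\ is $F'(r) = F(r)\prod_j F_j(r)^{-k_j} \leq F(r)^{1-k/m}$. The elementary inequality $1 - y^{1-\eta} \geq (1-\eta)(1-y)$ for $\eta, y \in [0,1]$ (which follows by differentiating in $y$), applied pointwise and integrated in $r$, then yields $\MAX(\bX') \geq (1 - k/m)\,\MAX(\bX)$.

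Chaining the two bounds gives $\OPT(\bX) \geq (\beta - O(\eps))(1 - k/m)\MAX(\bX) = (\beta - O(\eps) - \beta k/m)\MAX(\bX)$. The hypothesis $m = \Omega(\eps^{-2}\log\eps^{-1})$ with a sufficiently large hidden constant ensures $k/m = O(\eps)$, so the $\beta k/m$ term is absorbed into the $O(\eps)$ slack and we recover $\OPT(\bX) \geq (\beta - O(\eps)) \MAX(\bX)$. The main obstacle and subtlety is precisely this matching of $k$ against $m$: in the bare application of Theorem \ref{thm:nearby_weak_intro} the ratio $k/m$ is only $O(1)$, so achieving the target $(\beta - O(\eps))$ factor requires either a sufficiently generous hidden constant in the hypothesis on $m$, or a sharper version of the structural lemma exploiting that the removed variables, when the frequent structure is available, can be distributed across distinct distributions so that $\max_j k_j \ll k$. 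The pointwise c.d.f.\ bound $F_j(r) \geq F(r)^{1/m}$ is the one ingredient that converts the $m$-frequent hypothesis into a multiplicative, rather than additive, guarantee on $\MAX$, and is where the assumption is essentially used.
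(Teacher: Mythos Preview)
Your proof is correct and matches the paper's route: apply Theorem~\ref{thm:nearby_weak_intro}, then invoke a structural lemma giving $\MAX(\bX') \geq (1-k/m)\,\MAX(\bX)$ for any removal of $k$ variables from an $m$-frequent instance, and chain. The only difference is in how that structural lemma is proved. The paper (Lemma~\ref{lem:freqcoupling}) uses a one-line symmetry/coupling argument rather than your analytic c.d.f.\ manipulation: condition on which distribution $\mathcal D_i$ realizes the overall maximum; by exchangeability each of the $\geq m$ copies of $\mathcal D_i$ is equally likely to be the argmax, and at most $k$ of them were deleted, so $\Pr[\max(\bX')=\max(\bX)]\geq 1-k/m$, which directly gives $\MAX(\bX')\geq(1-k/m)\MAX(\bX)$. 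Your route via the pointwise bound $F_j(r)\geq F(r)^{1/m}$, then $F'(r)\leq F(r)^{1-k/m}$, then $1-y^{1-\eta}\geq(1-\eta)(1-y)$ is also valid and lands on the identical factor $1-k/m$; it is a bit longer but has the virtue of being purely analytic and not relying on any coupling.

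The constant-matching concern you flag at the end is genuine and applies verbatim to the paper's one-line proof as well: with $k=\Theta(\eps^{-2}\log\eps^{-1})$ from Theorem~\ref{thm:nearby_weak_intro} and $m=\Omega(\eps^{-2}\log\eps^{-1})$, the ratio $k/m$ is only a fixed constant, so one either reads the $\Omega$ as allowing an extra $1/\eps$ in the hidden constant or accepts a slightly looser statement. The paper does not spell this out.
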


We will defer the proof of this theorem to  Section \ref{sect:freqextension}.\\

All of the above theorems are tight in the sense that it is impossible to replace the Kertz bound $\beta$ by any larger constant. For frequent prophets (Theorem \ref{thm:frequent}) this follows since i.i.d. prophets are a special case of frequent prophets and the Kertz upper bound holds there. For imperfect prophets, a similar reduction to the i.i.d. case holds. 

\begin{lemma}\label{lem:nearby_tight}
Choose any $\alpha > \beta$ (where $\beta$ is the Kertz bound) and positive integer $r$. Then (for a sufficiently large $n$) there exists a collection of $n$ random variables $\bX = \{X_1, X_2, \dots, X_n\}$ such that for any subset $\bX'$ of $\bX$ containing $n - r$ of these variables,

$$\OPT(\bX) < \alpha \cdot \MAX(\bX').$$ 
\end{lemma}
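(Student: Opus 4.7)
The plan is to combine the Kertz worst-case i.i.d. construction with the splitting trick from Lemma \ref{lemma:iid_splitting}, using the splitting to dilute the effect of removing $r$ fixed variables. Fix $\eps > 0$ small enough that $(\beta + \eps)/(1 - \eps) < \alpha$; such an $\eps$ exists because $\alpha > \beta$. By Theorem \ref{thm:kertz_upper_bound}, there is an i.i.d. instance $\bX_0$ of some $n_0$ variables with common c.d.f. $F$ (which we may take to have bounded support, hence finite expected max) satisfying $\OPT(\bX_0) \leq (\beta + \eps/2)\,\MAX(\bX_0)$.

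Now for any positive integer $k$ let $\bY = \bY(k)$ be the set of $n := n_0 k$ i.i.d. variables with c.d.f. $F^{1/k}$ supplied by Lemma \ref{lemma:iid_splitting}. That lemma gives $\MAX(\bY) = \MAX(\bX_0)$ and $\OPT(\bY) \leq \OPT(\bX_0)$, so
\[
\OPT(\bY) \;\leq\; (\beta + \eps/2)\,\MAX(\bY).
\]
By i.i.d. symmetry, any subset $\bY'\subseteq \bY$ of size $n - r$ consists of $n_0 k - r$ i.i.d. copies of c.d.f. $F^{1/k}$, so $\MAX(\bY') = \int_0^\infty (1 - F(x)^{n_0 - r/k})\,dx$, while $\MAX(\bY) = \int_0^\infty (1 - F(x)^{n_0})\,dx$. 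Taking $k \to \infty$ with $n_0$ and $r$ fixed, the integrand $F(x)^{n_0 - r/k} - F(x)^{n_0}$ converges pointwise to zero and is dominated by $1 - F(x)^{n_0}$, which is integrable because $\MAX(\bX_0) < \infty$; dominated convergence then gives $\MAX(\bY') \to \MAX(\bY)$. Choose $k$ (hence $n = n_0 k$) large enough that $\MAX(\bY') \geq (1 - \eps/2)\,\MAX(\bY)$ for any such $\bY'$. Combining,
\[
\OPT(\bY) \;\leq\; (\beta + \eps/2)\,\MAX(\bY) \;\leq\; \frac{\beta + \eps/2}{1 - \eps/2}\,\MAX(\bY') \;<\; \alpha\,\MAX(\bY'),
\]
which gives the claim with $\bX := \bY$.

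The only delicate step is the dominated convergence bound, and this is why it is convenient to truncate the Kertz worst-case distribution so that $\MAX(\bX_0)$ is finite; one checks that this truncation can be done while preserving the Kertz upper bound up to an additive $\eps/2$ slack, which is already absorbed in our choice of $\eps$. Everything else is essentially symmetry of i.i.d. variables plus the fact that splitting preserves the max but only decreases $\OPT$, so the Kertz bound is inherited by the split instance before a single variable is removed.
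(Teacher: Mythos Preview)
Your proof is correct and follows the same core idea as the paper: take a Kertz worst-case i.i.d.\ instance and observe that removing $r$ variables from a large i.i.d.\ instance barely changes $\MAX$. The explicit use of the splitting lemma to reach arbitrarily large $n$ is a nice touch (the paper simply asserts that Kertz instances exist for all large $n$).

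However, you are working much harder than necessary on the step $\MAX(\bY') \to \MAX(\bY)$. Since $\bY$ is i.i.d.\ of size $n = n_0 k$, the index of the maximum is uniform over $[n]$ and independent of the maximum value, so for any subset $\bY'$ of size $n-r$ one has directly
\[
\MAX(\bY') \;\geq\; \left(1 - \frac{r}{n_0 k}\right)\MAX(\bY),
\]
which is exactly the bound the paper uses. This holds unconditionally and makes the dominated convergence argument, the integral computation, and the truncation discussion all unnecessary; you can then just take $k$ large enough that $r/(n_0 k) < \eps/2$.
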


A proof of Lemma \ref{lem:nearby_tight} can be found in Appendix \ref{app:kertz_tight}.

The remainder of this section is structured as follows. In Section \ref{sect:free_order}, we begin by proving Theorems \ref{thm:nearby} and \ref{thm:nearby_weak_intro} in the weaker model of \textit{free-order} prophets, where the algorithm can choose the order in which it looks at the items (in addition to being a slightly simpler proof, we also get a slightly better dependence on $\eps$). In Section \ref{sect:random_order}, we extend this to random-order prophets and prove Theorems \ref{thm:nearby} and \ref{thm:nearby_weak_intro}. Finally, in Section \ref{sect:freqextension}, we apply these theorems to prove Theorem \ref{thm:frequent} for frequent prophets.

\subsection{Warm-up: free order}\label{sect:free_order}

For simplicity, we will begin by proving Theorem \ref{thm:nearby_weak_intro} in the free-order setting, where the algorithm has the power to choose the order in which they encounter the random variables. 

Recall that this is the weaker analogue of Theorem \ref{thm:nearby}, where the algorithm is allowed to use all the random variables (but the prophet is only allowed to choose the max of a specific subset of all but $\poly(\eps)$ of the variables).

\begin{theorem}\label{thm:nearby_weak}
Consider a collection of $n$ random variables $\bX = \{X_1, X_2, \dots, X_n\}$. Then, for any $\eps > 0$, there exists a subset $\bX'$ of $\bX$ of size $n - \Theta(\eps^{-1}\log \eps^{-1})$ so that $$\OPT_{\free}(\bX) \geq  (\beta - O(\eps)) \cdot \MAX(\bX')$$ Moreover, it is possible to efficiently construct such a subset $\bX'$ and a policy achieving this guarantee in polynomial time. 
\end{theorem}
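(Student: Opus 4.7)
The plan is to leverage the small-variable decomposition together with the small-prophet guarantee of Theorem \ref{thm:small_prophets}. For each variable $X_i$, define $a_i = \inf\{r : \Pr[X_i \geq r] \leq \eps\}$, sort the $a_i$'s in decreasing order, set $k_0 = \Theta(\eps^{-1} \log \eps^{-1})$, let $\bX_{\bbig}$ consist of the top $k_0$ variables (by $a_i$), let $\bX' = \bX \setminus \bX_{\bbig}$, and take $t = a_{(k_0+1)}$. By construction, each $X_i \in \bX'$ satisfies $\Pr[X_i \geq t] \leq \eps$, so the residues $Z_i = \max(X_i, t) - t$ are $\eps$-small, while each $X_i \in \bX_{\bbig}$ satisfies $\Pr[X_i \geq t] \geq \eps$; choosing the constant in $k_0$ large enough ensures $\Pr[\exists X_i \in \bX_{\bbig}: X_i \geq t] \geq 1 - (1-\eps)^{k_0} \geq 1 - \eps$.

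The free-order policy I would use has two phases. \emph{Phase A}: process the variables in $\bX'$ in uniformly random order, running the time-based threshold policy from Theorem \ref{thm:small_prophets} on the residues $\bZ' = \{Z_i : X_i \in \bX'\}$; whenever it would accept $Z_i$, accept $X_i$, but only when $Z_i > 0$ (equivalently, $X_i > t$). This modification can only increase the expected reward, since accepting a zero residue yields $0$ but forfeits all future opportunities. \emph{Phase B}: if Phase A accepted nothing, process the big variables $\bX_{\bbig}$ in arbitrary order and accept the first $X_i \geq t$. Let $\tau$ denote the acceptance time in Phase A. Whenever $\tau$ is defined we have $Z_\tau > 0$, so $X_\tau = t + Z_\tau$ and Phase A's expected reward equals $\E[Z_\tau] + t \Pr[\tau \text{ exists}]$. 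Adding the independent Phase B contribution gives
\[
\E[\ALG] \geq \E[Z_\tau] + t \Pr[\tau] + (1 - \Pr[\tau]) \cdot t(1 - \eps) \geq \E[Z_\tau] + t(1-\eps).
\]

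Finally, applying Theorem \ref{thm:small_prophets} to the $\eps$-small family $\bZ'$ yields $\E[Z_\tau] \geq (\beta - O(\eps)) \MAX(\bZ')$, and the elementary inequality $\max_i X_i \leq t + \max_i (X_i - t)_+$ applied to $\bX'$ gives $\MAX(\bX') \leq t + \MAX(\bZ')$. Using $1 - \eps \geq \beta - O(\eps)$ for small $\eps$ (since $\beta \approx 0.745$),
\[
\E[\ALG] \geq (\beta - O(\eps)) \MAX(\bZ') + t(1-\eps) \geq (\beta - O(\eps))\bigl(\MAX(\bZ') + t\bigr) \geq (\beta - O(\eps)) \MAX(\bX').
\]
I expect the main technical obstacle to be in handling distributions with atoms, which could make $a_i$ ambiguous or make $\Pr[X_i \geq t]$ slightly exceed $\eps$ for variables in $\bX'$; this can be smoothed away by a tiny perturbation of each $X_i$ (as in Section \ref{sec:small_notation}) or by allowing $O(\eps)$ slack in the constants, neither of which affects the asymptotic conclusion. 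Everything else is a direct combination of the small-prophet guarantee on the residues with a simple safety-net argument on the removed ``big'' variables.
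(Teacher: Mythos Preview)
Your proposal is correct and follows essentially the same approach as the paper: decompose into a set $\bX_{\bbig}$ of $\Theta(\eps^{-1}\log\eps^{-1})$ ``big'' variables (each exceeding the threshold with probability at least $\eps$) and a complement $\bX'$ whose residues above the threshold are $\eps$-small, then run the small-prophet policy on the residues of $\bX'$ followed by a safety-net pass over $\bX_{\bbig}$. The only cosmetic differences are that the paper defines the threshold $t^*$ via a supremum and works with $(\eps,\eps t^*)$-smallness (invoking Corollary~\ref{cor:eps-delta-small}, incurring an additive $\eps t^*$ loss), whereas you sort the quantiles $a_i$ and work with $\eps$-smallness directly (invoking Theorem~\ref{thm:small_prophets}); both routes lead to the same final inequality, and the atom issue you flag is handled in the paper exactly as you suggest, by the smoothing remark in Section~\ref{sec:small_notation}.
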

\begin{proof}
For any real $t \geq 0$, define the random variables $$Y_i(t) = \max(X_i, t) \text{ and } Z_i(t) = Y_i(t) - t$$ Note that as $t$ increases, the number of variables $Z_i(t)$ that are $(\eps, \eps t)$-small increases. In particular, if $Z_i(t)$ is $(\eps, \eps t)$-small for a given $t$, $Z_i(t')$ is $(\eps, \eps t')$ small for all $t' \geq t$. Also, note that every variable $Z_i(t)$ is $(\eps, \eps t)$-small for a sufficiently large $t$ (e.g. $t = \max(1, F_{i}^{-1}(1-\eps))$). 

If there are fewer than $k = \Theta\left(\frac{1}{\eps}\log\frac{1}{\eps}\right)$ variables that are $\varepsilon$-small, then we are done. If not, let $t^*$ be the supremum over all $t$ such that at most $k$ of the variables $Z_i(t^*)$ are \textit{not} $(\eps, \eps t^*)$-small. This means that for $k$ values of $i$, $\Pr[Z_i(t^*) \geq \eps t^*] \geq \eps$; in particular, this implies that $\Pr[X_i \geq t^* (1+ \eps)] \geq \eps$. For the remaining indices $Z_i(t^*)$ is $(\eps, \eps t^*)$-small.

Let $\bX_{\bbig}$ be the set of $X_i$ corresponding to these $k$ indices and $\bX' = \bX \setminus \bX_{\bbig}$. Our policy will proceed as follows. We will order the items so that all the elements of $\bX'$ come before the elements of $\bX_{\bbig}$. We will start by running the small prophets policy of Section \ref{sect:small} on the elements in $\bX'$ (specifically, mapping each $X_i$ to $Z_i = \max(X_i - t^*, 0)$ and accepting each $X_i$ with the probability $Z_i$ would have been accepted under the corresponding small prophets instance). If we reach the end of $\bX'$ without accepting any items, we accept the first item we see in $\bX_{\bbig}$ that is larger than $t^*$. 

Let us analyze the performance of this policy. First, we claim that with high probability, $\bX_{\bbig}$ contains an $X_i$ greater than or equal to $t^*$. Recall that each $X_i$ in $\bX_{\bbig}$ satisfies $\Pr[X_i \geq t^*(1 + \eps)] \geq \eps$. Therefore, 
\begin{eqnarray*}
\Pr[\max(\bX_{\bbig}) \leq t^*] &\leq& (1-\eps)^{|\bX_{\bbig}|} = (1-\eps)^{k} = (1-\eps)^{\Theta(\eps^{-1}\log\eps^{-1})} 
\leq \eps.
\end{eqnarray*}

It follows that with probability at least $(1-\eps)$, $\bX_{\bbig}$ contains an $X_i$ satisfying $X_i \geq t^*$. In particular, with probability at least $(1-\eps)$, our policy is guaranteed to receive reward at least $t^*$ (since we also only accept an $X_i \in \bX'$ if $X_i > t^*$). 

Now, note that the only way our algorithm ends up picking something in $\bX_{\bbig}$ is if the small prophets subpolicy did not pick any item in $\bZ'$ (achieving a score of 0). In this case (conditioned on $\bX_{\bbig}$ containing an $X_i$ satisfying $X_i \geq t^*$) our algorithm receives a reward of at least $t^*$. On the other hand, if the small prophets subpolicy does pick an item $Z_i$ in $\bZ'$, the algorithm receives a reward of $X_i = Z_i + t^*$ (note that the small prophets subpolicy will never select a $Z_i$ equal to 0). 

Conditioning on $\bX_{\bbig}$ containing a big element (which happens with probability at least $1-\eps$), this means that the expected reward of our algorithm is at least $t^*$ plus the expected reward of the small prophets subpolicy. By Corollary \ref{cor:eps-delta-small}, running the small prophets policy on $\bZ'$ guarantees us an additional $(\beta- O(\eps))\cdot \MAX(\bZ') - \eps t^*$ reward over our guaranteed $t^*$. The total expected reward of our algorithm is therefore at least

\begin{eqnarray*}
\OPT(\bX) &\geq & (1-\eps)((1 - O(\eps))\beta \cdot \MAX(\bZ') - \eps t^* + t^*) \\
&\geq & (\beta-O(\eps))\cdot (\MAX(\bZ') + t^*) \\
&= & (\beta-O(\eps)) \cdot \MAX(\bX').
\end{eqnarray*}

\end{proof}

We now extend this to the setting where the algorithm is also restricted to the same subinstance as the prophet. To do this, we roughly proceed as follows. If the expected maximum of the subinstance is close to the expected maximum of the original instance (i.e. $\MAX(\bX') \geq (1-\eps)\MAX(X)$), then we are done -- we can just take our final subinstance to be $\bX$. If it is not, then we can recurse and apply Theorem \ref{thm:nearby_weak} to $\bX'$: if it has a subinstance $\bX''$ with $\MAX(\bX'') \geq (1-\eps)\MAX(\bX')$ then we are also done (we can take our final subinstance to be $\bX'$). Otherwise, we continue recursing.

We repeat until we have done this $\poly(1/\eps)$ times. At this point we are looking at some subinstance $\bX^*$ of $\bX$ (with $|\bX^*| = |\bX| - \poly(1/\eps)$). Since we have never stopped and reported a valid subinstance, we know that $\MAX(\bX^*) \leq (1-\eps)^{\poly(1/\eps)}\MAX(\bX) \leq \poly(\eps)\MAX(\bX)$. But in this case we can show that one of the variables we discarded has expectation much larger than $\MAX(\bX^*)$; by including it in our subinstance and just picking it, we can achieve a competitive ratio very close to 1. We formalize this in the following theorem.

\begin{theorem}\label{thm:nearby_strong}
Consider a collection of $n$ random variables $\bX = \{X_1, X_2, \dots, X_n\}$. Then, for any $\eps > 0$, there exists a subset $\bX'$ of $\bX$ of size $n - \Theta((\eps^{-1}\log \eps^{-1})^2)$ so that $$\OPT_{\free}(\bX') \geq (\beta - O(\eps))\cdot \MAX(\bX')$$ Moreover, it is possible to efficiently construct such a subset $\bX'$ and a policy achieving this guarantee in polynomial time. 
\end{theorem}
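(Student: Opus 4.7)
The plan is to iterate Theorem~\ref{thm:nearby_weak} in a controlled way, using the gap it produces between the algorithm's instance and the prophet's benchmark as a potential that must eventually stabilize. Initialize $\bX_0 = \bX$. At iteration $i$, apply Theorem~\ref{thm:nearby_weak} to $\bX_i$ to obtain $\bX_{i+1} \subseteq \bX_i$ with $|\bX_i \setminus \bX_{i+1}| = \Theta(\eps^{-1}\log\eps^{-1})$ and
\[
\OPT_{\free}(\bX_i) \geq (\beta - O(\eps)) \cdot \MAX(\bX_{i+1}).
\]
If ever $\MAX(\bX_{i+1}) \geq (1-\eps)\MAX(\bX_i)$ at some step, halt and output $\bX' = \bX_i$: the inequality above combined with this stopping condition yields $\OPT_{\free}(\bX') \geq (\beta - O(\eps))(1-\eps)\MAX(\bX') = (\beta - O(\eps))\MAX(\bX')$, as desired.

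If no such stopping occurs for $T$ iterations, then $\MAX$ contracts by a factor of $(1-\eps)$ at every step, giving $\MAX(\bX_T) \leq (1-\eps)^T \MAX(\bX)$. Choose $T = \Theta(\eps^{-1}\log\eps^{-1})$ large enough that $\MAX(\bX_T) \leq \eps^3 \MAX(\bX)$. The discarded set $D := \bX \setminus \bX_T$ has size $|D| = \Theta((\eps^{-1}\log\eps^{-1})^2)$, and subadditivity of the max gives
\[
\MAX(D) \;\geq\; \MAX(\bX) - \MAX(\bX_T) \;\geq\; (1-\eps^3)\MAX(\bX).
\]
Since $\MAX(D) \leq \sum_{j \in D} \E[X_j]$, by pigeonhole there exists $X_j \in D$ with $\E[X_j] \geq (1-\eps^3)\MAX(\bX)/|D|$. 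For this choice of $T$ and $|D|$, a direct comparison shows $\E[X_j] \geq \MAX(\bX_T)/\eps$ once $\eps$ is small enough (this is where the exponent $3$ is needed: we need $|D| \leq \eps^{-2}/(1+o(1))$ after absorbing the polylog factor, which follows from $|D| = O(\eps^{-2}\log^2\eps^{-1})$).

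Now output $\bX' = \bX_T \cup \{X_j\}$; its cardinality is still $n - \Theta((\eps^{-1}\log\eps^{-1})^2)$. The policy simply inspects $X_j$ first and always accepts it, so $\OPT_{\free}(\bX') \geq \E[X_j]$. By subadditivity, $\MAX(\bX') \leq \E[X_j] + \MAX(\bX_T) \leq (1 + \eps)\E[X_j]$, hence
\[
\OPT_{\free}(\bX') \;\geq\; \frac{1}{1+\eps} \MAX(\bX') \;\geq\; (1 - O(\eps))\MAX(\bX') \;\geq\; (\beta - O(\eps))\MAX(\bX'),
\]
the last step using $\beta < 1$. The whole procedure runs in polynomial time since it invokes Theorem~\ref{thm:nearby_weak} only $T$ times plus one final pigeonhole over $D$.

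The main obstacle is reconciling three competing quantities: the number of iterations $T$, the per-iteration removal $\Theta(\eps^{-1}\log\eps^{-1})$, and the ratio $\MAX(\bX)/\MAX(\bX_T)$ that the pigeonholed $\E[X_j]$ must exceed (by a further $1/\eps$ factor) for the final subadditivity argument to give a $(1-O(\eps))$-approximation. The choice $T = \Theta(\eps^{-1}\log\eps^{-1})$ is precisely what matches the geometric decay of $\MAX$ to the size of the discarded set, producing the squared bound $\Theta((\eps^{-1}\log\eps^{-1})^2)$ in the theorem.
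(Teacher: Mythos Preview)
Your approach is essentially the same as the paper's: iterate Theorem~\ref{thm:nearby_weak}, stop when $\MAX$ stabilizes, and otherwise pigeonhole a single large-expectation variable to adjoin to the deeply-contracted instance. However, there is a small but genuine arithmetic gap in the fallback branch. With $T=\Theta(\eps^{-1}\log\eps^{-1})$ chosen so that $\MAX(\bX_T)\le \eps^{3}\MAX(\bX)$, your pigeonhole over $D=\bX\setminus\bX_T$ gives only
\[
\E[X_j]\;\ge\;\frac{(1-\eps^{3})\MAX(\bX)}{|D|}\;=\;\Omega\!\Bigl(\frac{\eps^{2}}{\log^{2}\eps^{-1}}\Bigr)\MAX(\bX),
\]
whereas you need $\E[X_j]\ge \MAX(\bX_T)/\eps \le \eps^{2}\MAX(\bX)$. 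The $\log^{2}\eps^{-1}$ in the denominator makes this comparison fail for small $\eps$; your parenthetical claim that ``$|D|\le \eps^{-2}/(1+o(1))$ after absorbing the polylog factor'' is simply wrong, since $|D|=\Theta(\eps^{-2}\log^{2}\eps^{-1})$ is \emph{larger} than $\eps^{-2}$. The fix is trivial: take the contraction exponent to be any constant strictly greater than $3$ (e.g.\ require $\MAX(\bX_T)\le \eps^{4}\MAX(\bX)$), which still keeps $T=\Theta(\eps^{-1}\log\eps^{-1})$ and makes the inequality go through.

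For comparison, the paper avoids this issue by pigeonholing over the \emph{first-step} discard $\bX\setminus\bX_{1}$, which has size only $\Theta(\eps^{-1}\log\eps^{-1})$ and hence yields $\E[X^*]=\Omega(\eps^{2}/\log\eps^{-1})\MAX(\bX)$; it then adjoins this $X^*$ to $\bX_{k_{\max}}$ with $k_{\max}=10\eps^{-1}\log\eps^{-1}$, so that $\MAX(\bX_{k_{\max}})\le\eps^{10}\MAX(\bX)$ and the final ratio computation has plenty of slack. Both routes work and give the same $\Theta((\eps^{-1}\log\eps^{-1})^{2})$ removal bound.
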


\begin{proof}
We will define a sequence of instances in the following way. Let $\bX_0 = \bX$ be the original instance, and for each $i$, let $\bX_{i+1}$ be the subset of $\bX_i$ constructed by Theorem \ref{thm:nearby_weak} when applied to $\bX_i$. From the conditions of Theorem \ref{thm:nearby_weak}, for each $k \geq 0$ we know that $|\bX_k| \geq n - O(k\eps^{-1}\log\eps^{-1})$ and that

$$\OPT_{\free}(\bX_{k}) \geq (\beta - O(\eps)) \cdot \MAX(\bX_{k+1}).$$

Now, if there exists a $k < k_{max} = 10\eps^{-1}\log\eps^{-1}$ such that $\MAX(\bX_{k+1}) \geq (1 - \eps)\MAX(\bX_{k})$, this immediately implies the desired result (for $\bX' = \bX_{k}$). Thus, assume that for all such $k$, $\MAX(\bX_{k+1}) < (1-\eps)\MAX(\bX_{k})$. 

This implies that, for all $k>0$, $\MAX(\bX_{k}) < (1-\eps)^k \MAX(\bX_0)$. In particular, for $k = 1$, we know that $\MAX(\bX_{1}) < (1-\eps)\MAX(\bX)$. We'll now argue that there exists a random variable $X^* \in \bX \setminus \bX_{1}$ such that $\E[X^*] \geq \poly(\eps)\MAX(\bX)$. We will then argue that we can construct a good policy for $\bX' = \bX_{k_{max}} \cup \{X^*\}$ (in particular, $X^*$ will have such large expectation compared to $\MAX(\bX_{k_{max}})$ that it will suffice to just accept $X^*$).

Let $\bX_{\bbig} = \bX \setminus \bX_{1}$. Recall that for any non-negative random variables $U$ and $V$ that $\E[U] \geq \E[\max(U, V)] - \E[V]$ (since $U + V \geq \max(U, V)$). Applying this to $U = \bX_{\bbig}$ and $V = \bX_{1}$, we have that

$$\E[\max(\bX_{\bbig})] \geq \MAX(\bX) - \MAX(\bX_1) > \eps \MAX(\bX).$$

Since $|\bX_{\bbig}| \leq \Theta(\eps^{-1}\log \eps^{-1})$, this implies that there exists a variable $X^* \in \bX_{\bbig}$ such that $\E[X^*] > \Theta(\eps^{2}/\log\eps^{-1})\MAX(\bX)$. 

Now, consider the subset of variables $\bX' = \bX_{k_{max}} \cup \{X^*\}$. Note that

\begin{eqnarray*}
\frac{\MAX(\bX')}{\OPT_{\free}(\bX')} &\leq & \frac{(1-\eps)^{k_{max}}\MAX(\bX) + \E[X^*]}{\E[X^*]} \\
&=& 1 + \frac{(1-\eps)^{k_{max}}\MAX(\bX)}{\E[X^*]} \\
&\leq & 1 + \frac{(1-\eps)^{k_{max}}}{\Theta(\eps^{2}/\log\eps^{-1})} \\
& \leq & 1 + \frac{\exp(-\eps \cdot (10\eps^{-1}\log\eps^{-1}))}{\Theta(\eps^{2}/\log\eps^{-1})} \\
& \leq & 1 + \frac{\exp(-10\log\eps^{-1}))}{\Theta(\eps^{2}/\log\eps^{-1})} \\
& \leq & 1 + O(\eps^{8}\log\eps^{-1}).
\end{eqnarray*}

It follows that $\OPT_{\free}(\bX') \geq (1 - O(\eps)) \cdot \MAX(\bX) \geq (\beta - O(\eps))\cdot \MAX(\bX')$, as desired.
\end{proof}

\subsection{Random order}\label{sect:random_order}

We now prove Theorems \ref{thm:nearby} and \ref{thm:nearby_weak_intro} by extending the proofs of Theorems  \ref{thm:nearby_weak} and \ref{thm:nearby_strong} to the random-order model.
The only place in Section \ref{sect:free_order} where we use our ability to control the order of the variables is in Theorem \ref{thm:nearby_weak}, where we place all the items in $\bX_{\bbig}$ after those in $\bX'$; other than this, everything works given the variables in a random order. Instead of ordering the items in $\bX_{\bbig}$ after those in $\bX'$, we'll instead show that enough of the variables in $\bX_{\bbig}$ will occur late enough (e.g. after we have seen $(1 - \poly(\eps))$ of the items) that we can replicate the proof with these items.\\

To do this, we will first need to slightly strengthen our policy for small prophets so that it can work even when restricted to select an item from the first $(1-\eps)$ fraction of random variables.

\begin{lemma}\label{lem:modifiedsmall}
Let $\bX$ be a collection of $n$ $(\eps, \delta)$-small variables. Then, if $n > \eps^{-2} \log (1/\eps)$, there exists a stopping time $\tau$ such that $\tau \leq n(1-\eps)$ (the policy always stops before element $(1-\eps) n$) and such that

$$\E[X_{\tau}] \geq (\beta - O(\eps)) \cdot \MAX(\bX) - \delta.$$
\end{lemma}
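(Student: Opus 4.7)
The plan is to adapt the small-prophets time-based policy from Section \ref{sect:small} so it commits to a decision by the $(1-\eps)n$-th arrival. The construction has two ingredients: a time rescaling that compresses the entire sweep of thresholds into $[0,1-2\eps]$, and a hard cap at arrival $(1-\eps)n$ that covers the unlikely event that too many items arrive early.

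By Corollary \ref{cor:eps-delta-small} it suffices to handle $\eps$-small variables at the additive cost of $\delta$. Let $r:[0,1]\to\R_+$ be the optimal decreasing threshold from Lemma \ref{lemma:yFr}, and set $r'(t) = r(t/(1-2\eps))$ on $[0,1-2\eps]$. Let $\pi^0$ be the time-based policy that, on arrival $i$ with timestamp $t_i$, picks $X_i$ with probability $(1-\eps)^2$ whenever $t_i \leq 1-2\eps$ and $X_i \geq r'(t_i)$, and does nothing otherwise. Redoing the computation of Lemma \ref{lemma:policy_bound} with the change of variables $u = t/(1-2\eps)$ in both the integral and the exponent gives $\E[\pi^0] \geq (1-\eps)^3(1-2\eps)\int_0^1 R(r(u))\exp\bigl(-(1-2\eps)\int_0^u w(r(v))\,dv\bigr)\,du$. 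Dropping the $(1-2\eps)$ factor in the exponent only makes the exponential larger, so the right-hand side is at least $(1-2\eps)$ times the reward of the original small-prophets policy; invoking Theorem \ref{thm:small_prophets} through the reduction of Corollary \ref{cor:eps-delta-small}, $\E[\pi^0] \geq (1-2\eps)\bigl((\beta-O(\eps))\MAX(\bX) - \delta\bigr)$.

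Let $\pi^*$ be $\pi^0$ modified so that it stops (picking nothing) at arrival $(1-\eps)n$ if no pick has occurred earlier, and let $I^*$ denote the (random) item picked by $\pi^0$. The goal is to show $\E[\pi^*] \geq (1-\eps)\E[\pi^0]$, which reduces to $\Pr[\mathrm{rank}(I^*) > (1-\eps)n,\, I^*\text{ picked} \mid t_{I^*} = t] \leq \eps \cdot \Pr[I^*\text{ picked} \mid t_{I^*} = t]$ for every $t \leq 1-2\eps$. Conditioning on $t_i = t$ and decomposing over the number of items with $t_j < t$, a short calculation with elementary symmetric polynomials rewrites the conditional ratio as $\Pr[K' > (1-\eps)n-1]$, where $K' = \sum_{j \neq i} \mathrm{Bern}\bigl((t-q_j)/(1-q_j)\bigr)$ is a Poisson-binomial and $q_j = (1-\eps)^2 \int_0^t \bar F_j(r'(s))\,ds$ is the marginal probability item $j$ is picked by time $t$. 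Since each Bernoulli parameter is at most $t \leq 1-2\eps$, stochastic dominance and Hoeffding's inequality give $\Pr[K' > (1-\eps)n-1] \leq \exp(-\Omega(\eps^2 n))$, which is at most $\eps$ under the assumed lower bound $n > \eps^{-2}\log(1/\eps)$. Combining with the previous step and absorbing the $(1-\eps)(1-2\eps)$ losses into $O(\eps)$ and $\delta$ yields $\E[X_\tau] \geq (\beta - O(\eps))\MAX(\bX) - \delta$.

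The main subtlety is the last step: conditioning on ``$\pi^0$ picks item $i$'' distorts the conditional distribution of other items' timestamps toward arriving after $t_i$ (items that arrived before but were not picked are slightly discouraged in the conditional law), so one cannot directly bound the conditional rank of $I^*$ by $\mathrm{Binomial}(n-1, t_{I^*})$. The Poisson-binomial identity above provides the clean decoupling that reduces the conditional-rank bound to a standard stochastic-dominance argument.
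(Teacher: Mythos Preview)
Your argument is correct and takes a genuinely different route from the paper. The paper proves this lemma by a black-box coupling: it forms the auxiliary instance $\bX'$ with $X'_i=X_i$ with probability $1-\eps$ and $X'_i=0$ otherwise, observes that the small-prophets policy on $\bX'$ is distributionally the same as running it on a random $\mathrm{Binomial}(n,1-\eps)$-sized prefix of $\bX$, applies Corollary~\ref{cor:eps-delta-small} to $\bX'$ directly (together with $\MAX(\bX')\geq(1-\eps)\MAX(\bX)$), and finally uses Hoeffding on the Binomial to enforce the hard cap $(1-\eps)n$. Your approach instead reopens Lemma~\ref{lemma:policy_bound} with a time-rescaled threshold $r'(t)=r(t/(1-2\eps))$ and then controls the rank of the accepted item via the Poisson-binomial identity $\Pr[t_j<t\mid A_j]=(t-q_j)/(1-q_j)\leq t$. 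The paper's reduction is shorter and more modular (it never touches the internals of the policy bound), while your conditional-rank computation is more hands-on but yields the pleasant structural fact that conditioning on ``no one was picked before $t$'' can only push the other timestamps later.

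Two small remarks. First, the sentence ``dropping the $(1-2\eps)$ factor in the exponent only makes the exponential larger'' is backwards: since $w\geq 0$, removing $(1-2\eps)$ from $\exp\bigl(-(1-2\eps)\int w\bigr)$ makes the exponential \emph{smaller}, and that is precisely the direction you need to lower-bound your display by $(1-2\eps)$ times the original policy bound; so the conclusion is right but the stated reason is inverted. Second, your reduction ``$\E[\pi^*]\geq(1-\eps)\E[\pi^0]$'' is phrased in terms of probabilities, whereas what you need is the $X_i$-weighted version; this goes through because, conditional on $t_i=t$, the factor $\E[X_i\,\mathbf{1}\{X_i\geq r'(t)\}]$ is independent of the rank event $\{\sum_{j\neq i}\mathbf{1}[t_j<t]>(1-\eps)n-1\}$ and cancels in the ratio.
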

\begin{proof}
Consider a slightly different policy, which begins by sampling a positive integer $R$ from $\mathrm{Binomial}(n, (1-\eps))$ and guarantees that its stopping time $\tau$ is at most $R$. We will show that we can construct such a policy which is $(\beta - O(\eps))$ competitive with $\MAX(\bX)$. 

To do this, consider the collection $\bX'$ of variables defined via $X'_i = X_i$ with probability $1 - \eps$, and $X'_i = 0$ with probability $0$. Note that $\bX'$ is also a collection of $(\eps, \delta)$-small variables. We will show how to transform the small prophets policy for $\bX'$ to a small prophets policy for $\bX$ that stops within the first $R$ elements.

Recall that the small prophets policy for $\bX'$ can be implemented as follows. We start by sampling $n$ random times uniformly from the interval $[0, 1]$; in sorted order, let these times be $t'_1 \leq t'_2 \leq \dots \leq t'_n$. The small prophets policy $\tau'$ provides a threshold function $r(t)$ such that we should select the $i$th item $X'_{\pi(i)}$ we encounter if $X'_{\pi(i)} \geq r(t'_i)$ (and we have not selected any earlier item). Importantly, note that the $i$th threshold does not depend on the individual identity of the $i$th item, just on the randomly sampled $t_i$ and this global threshold function $r$. By Corollary \ref{cor:eps-delta-small}, this guarantees an expected reward at least $\E[X'_{\tau'}] \geq (\beta - O(\eps)) \cdot \MAX(\bX') - \delta$.

Now, consider the policy $\tau$ for $\bX$ where we sample $R$ random times uniformly from $[0, 1]$ (call them $t_1 \leq t_2 \leq \dots \leq t_R$), and select the $i$th item we encounter if $X_{\pi(i)} \geq r(t_i)$ (if we pick no item by $X_{\pi(r)}$, we end the protocol without picking anything). We claim that this policy gets exactly the same reward in expectation as $\tau'$; i.e. that $\E[X_{\tau}] = \E[X'_{\tau'}]$. To see this, note that we can couple executions of $\tau'$ on $\bX'$ with executions of $\tau$ on $\bX$. Specifically, $R$ should equal the number of indices $i$ where $X'_{i}$ is chosen to equal $X_{i}$ (instead of $0$), and $t_i$ should equal the $i$th value of $t'_j$ that corresponds to a ``non-zero'' $X_{i}$ (it is straightforward to verify that the distributions of $R$ and $t_i$ match the distributions generated by this coupling process). 

It follows that $\E[X_{\tau}] = \E[X'_{\tau'}] \geq (\beta - O(\eps)) \cdot \MAX(\bX') - \delta$. Now, note that $\MAX(\bX') \geq (1-\eps)\MAX(\bX)$ (one way to see this is to note that with probability $(1-\eps)$, the maximum item in $\bX$ does not get erased and remains the maximum in $\bX'$). It follows that $\E[X_{\tau}] \geq (\beta - O(\eps)) \cdot \MAX(\bX') - \delta$.

Finally, all the previous discussion was for a modified stopping time which observes the first $\mathrm{Binomial}(n, (1-\eps))$ items in $\bX$. But to prove the theorem statement, we wish to show that there exists a stopping time which never observes more than $(1-\eps)n$ items. To do this, we will apply the first result for $\eps' = 2\eps$ so that we only observe the first $R' = \mathrm{Binomial}(n, (1-\eps'))$ items (since $\eps' = O(\eps)$, our guarantee is still the same). Then note that by Hoeffding's inequality,
the probability $R'$ is larger than $(1-\eps)n$ is at most $\exp(-2\eps^{2} n) \leq \eps^{2}$ (since $n > \eps^{-2}\log 1/\eps$). On the other hand, if $R' > (1-\eps)n$, we lose reward at most $\eps^{2}\MAX(\bX)$ (since the choice of $R'$ is independent from the realizations of $\bX$). This means that if we follow the stopping time for $R'$ and quit if $R' > (1-\eps)n$, our stopping time satisfies

$$\E[X_{\tau}] \geq (\beta - O(\eps)) \MAX(\bX) - \delta - \eps^{2}\MAX(\bX)  \geq (\beta - O(\eps)) \MAX(\bX) - \delta.$$

\end{proof}

We can now proceed to prove Theorem \ref{thm:nearby_weak_intro}.

\begin{proof}[Proof of Theorem \ref{thm:nearby_weak_intro}]
Similarly to the proof of Theorem \ref{thm:nearby_weak}, set $k = \Theta(\eps^{-2}\log \eps^{-1})$, and let $t^*$ be the supremum over all $t$ such that at most $k$ of the variables $Z_{i}(t^*)$ are not $(\eps, \eps t^*)$-small. Let $\bX_{\bbig}$ be this subset of $k$ non-small variables, and let $\bX' = \bX \setminus \bX_{\bbig}$. 

Our policy will operate in two parts. We will first use the small prophets policy on the variables $Z_i$ to process elements of $\bX'$ (ignoring elements of $\bX_{\bbig}$) until we have seen $(1-\eps)|\bX'|$ elements of $\bX'$. If we have not chosen an item by this time, we will pick the next item we see with value greater than or equal to $t^*$. 

As in the proof of Theorem \ref{thm:nearby_weak}, any item we choose is guaranteed to have value at least $t^*$. Let $\bX_{\rand}$ be the set of $(1-\eps)|\bX'|$ elements we process in the first part of our policy. By Lemma \ref{lem:modifiedsmall}, the modified small prophets policy on $\bX_{\rand}$ guarantees us an additional reward of $(\beta - O(\eps)) \cdot \MAX(\bX') - \eps t^*$ over $t^*$. It therefore follows that, conditioned on our policy picking an item, our policy achieves expected reward at least 

$$(\beta - O(\eps)) \cdot \MAX(\bZ_{\rand}) - \eps t^* + t^* \geq (\beta - O(\eps)) \cdot \MAX(\bX_{\rand}) \geq (\beta - O(\eps)) \cdot \MAX(\bX').$$

We now claim that our policy selects an item with probability at least $(1 - O(\eps))$, thus implying the theorem. First, note that each $X_i \in \bX_{\bbig}$ has an independent $\eps$ probability of occurring after the $(1-\eps)|\bX'|$th item in $\bX'$ (to see this, imagine constructing the uniform random order by first choosing the order of elements in $\bX'$ and then randomly inserting the elements in $\bX_{\bbig}$). It follows from Hoeffding's inequality that the probability we see at least $k\eps/2$ variables in $\bX_{\bbig}$ in the second part of the policy is at least

$$1 - 2\exp(-\eps^2k/2) = 1 - 2\exp(-\Theta(\log \eps^{-1})) \geq 1 - O(\eps).$$

Since each of the variables $X_i \in \bX_{\bbig}$ satisfies $\Pr[X_i \geq t^*(1+\eps)] \geq \eps$, the probability that maximum of these $k\eps/2$ variables is less than $t^*$ is at most $(1 - \eps)^{k\eps/2} = (1 - \eps)^{\Theta(\eps^{-1}\log\eps^{-1})} \leq \varepsilon$. It follows that with probability at least $1 - O(\varepsilon)$ that in the second part of our policy we see a variable with value at least $t^*$, as desired. 
\end{proof}

\begin{proof}[Proof of Theorem \ref{thm:nearby}]
Follows from the proof of Theorem \ref{thm:nearby_strong}, using Theorem \ref{thm:nearby_weak_intro} in place of Theorem \ref{thm:nearby_weak}.
\end{proof}

\subsection{Frequent Prophets}\label{sect:freqextension}

Finally, we show the proof of Theorem \ref{thm:frequent} as an application of Theorem \ref{thm:nearby_weak_intro}. The following lemma relates the expected max of a subinstance of a frequent instance to the expected max of the instance itself.

\begin{lemma}\label{lem:freqcoupling}
Let $\bX$ be an $m$-frequent collection of random variables. Let $\bX' \subseteq \bX$ have size $|\bX'| \geq |\bX| - k$. Then
$$\MAX(\bX') \geq \left(1 - \frac{k}{m}\right)\MAX(\bX).$$
\end{lemma}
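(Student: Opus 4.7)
The plan is to exploit the $m$-frequency via a coupling argument that realizes $\bX'$ as a random subset of $\bX$. I will argue that, per distribution class, every realized variable of $\bX$ survives into the coupled copy of $\bX'$ with probability at least $1 - k/m$, and in particular the argmax of $\bX$ survives with that probability.

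\textbf{Setup.} Group the variables by distribution. For each distinct distribution $F$ appearing in $\bX$, let $n_F \geq m$ be its multiplicity in $\bX$ and $n'_F$ its multiplicity in $\bX'$. Since at most $k$ variables total are dropped, $n_F - n'_F \leq k$, and combined with $n_F \geq m$ this yields
\[
\frac{n'_F}{n_F} \;\geq\; 1 - \frac{n_F - n'_F}{n_F} \;\geq\; 1 - \frac{k}{m}.
\]

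\textbf{Coupling.} Sample $\bX$ in the natural way: for each $F$, draw $n_F$ i.i.d. samples $Y^F_1, \dots, Y^F_{n_F} \sim F$, independently across $F$. Construct $\bX''$ by independently (across $F$) picking a uniformly random size-$n'_F$ subset $S_F \subseteq \{1,\dots,n_F\}$ and setting $\bX'' := \bigcup_F \{Y^F_j : j \in S_F\}$. By exchangeability of i.i.d. samples, the joint law of $\bX''$ coincides with that of $\bX'$, hence $\MAX(\bX'') = \MAX(\bX')$.

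\textbf{Lower bound via the argmax.} Let $X^\star = \max(\bX)$, and let $F^\star$ denote the distribution of the variable attaining the maximum (break ties arbitrarily). Condition on all the realizations $\{Y^F_j\}$; then $X^\star$ is determined, it equals $Y^{F^\star}_{j^\star}$ for some index $j^\star$, and the randomness of $S_{F^\star}$ is independent of this realization. Therefore
\[
\Pr\bigl[X^\star \in \bX'' \,\big|\, \text{realizations}\bigr] \;=\; \frac{n'_{F^\star}}{n_{F^\star}} \;\geq\; 1 - \frac{k}{m}.
\]
Since $\max(\bX'') \geq X^\star \cdot \mathbf{1}\{X^\star \in \bX''\}$, taking conditional then unconditional expectation yields
\[
\MAX(\bX') \;=\; \E[\max(\bX'')] \;\geq\; \left(1 - \frac{k}{m}\right)\E[X^\star] \;=\; \left(1 - \frac{k}{m}\right)\MAX(\bX).
\]

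The only subtle step is the coupling argument, namely, that selecting a uniformly random subset of each distribution class produces a collection distributionally identical to $\bX'$; this is where $m$-frequency is essential, since it guarantees that the missing multiplicities $n_F - n'_F$ are a small fraction of $n_F$. Everything else is routine.
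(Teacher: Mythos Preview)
Your proof is correct and follows essentially the same approach as the paper: both exploit exchangeability within each distribution class to show that the argmax of $\bX$ survives into $\bX'$ with probability at least $1-k/m$. The only cosmetic difference is that the paper fixes $\bX'$ deterministically and uses symmetry to argue the (random) argmax lands in $\bX'$, whereas you fix the realizations and randomize the subset $\bX''$; these are dual formulations of the same exchangeability argument.
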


\begin{proof}
We will show that with probability $1 - \frac{k}{m}$, $\max(\bX') = \max(\bX)$, hence implying the result.

Recall that in an $m$-frequent collection of random variables, each random variable is distributed according to some distribution $\D_i$, and each distribution $\D_i$ has at least $m$ random variables distributed according to it. Let us condition on the event that the maximum variable in $\bX$ is distributed according to $\D_i$. By symmetry, any of the (at least) $m$ random variables distributed according to $\D_i$ has an equal chance of being the maximum. Moreover, at least $m-k$ of these variables also belong to $\bX'$. It follows that the probability that the maximum variable belongs to $\bX'$ (conditioned on the variable being distributed according to $\D_i$) is at least $1 - \frac{k}{m}$. Since this is true for each $\D_i$, it follows (by the law of total probability) that it is true in general, as desired.
\end{proof}

The result for frequent prophets follows as a direct corollary.

\begin{proof}[Proof of Theorem \ref{thm:frequent}]
Follows from Lemma \ref{lem:freqcoupling} and Theorem \ref{thm:nearby_weak_intro}.
\end{proof}

\section{An Efficient PTAS for Optimal Ordering} \label{sect:optimal_order}

Finally, we turn our attention to the problem of efficiently constructing good policies for the free order stopping problem. Recall that an instance of this problem consists of $n$ nonnegative random variables $X_1,X_2, \dots, X_n$ with known distributions. The learner can choose to view the variables in any order but after observing $X_i$, the learner can either choose to stop and accept the reward given by $X_i$ or discard $X_i$ and observe the next variable. Our goal is to design a policy for the learner which maximizes their expected reward.  Note that once the order of the observed variables is fixed, the optimal policy can be easily computed by solving a dynamic program\footnote{In particular, one should always accept the $i$th random variable in the ordering iff it is above the threshold given by the maximum expected reward you can obtain from the remaining $n-i$ variables.}. Thus, the problem of constructing an optimal policy reduces to finding an optimal ordering.

In this section we apply our decomposition technique to provide efficient algorithms for constructing approximately optimal orderings. We say an ordering is \emph{$\eps$-optimal} if the expected reward when following the (policy corresponding to this) ordering is at least $(1-\eps)\OPT_{\free}$. Note that, unlike in the previous sections, here we are not concerned with competing against the prophet benchmark ($\E[\max X_i]$) but instead the performance of the true optimal ordering  ($\OPT_{\free}$). We begin by providing an efficient algorithm for computing an $\eps$-optimal ordering when all variables are $\eps$-small.

\begin{theorem}\label{thm:small_ordering}
Let $\bX = \{X_1, X_2, \dots, X_n\}$ be a collection of $\eps$-small random variables. There is an algorithm which runs in time $\poly(n, \eps^{-1})$ and computes an $O(\eps)$-optimal ordering for this collection of random variables. 
\end{theorem}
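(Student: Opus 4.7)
The plan is to relax the optimal ordering problem to a polynomial-size concave program, solve it in polynomial time, and round the fractional optimum back to a concrete ordering, using $\eps$-smallness to control the loss at each step.

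First I would discretize the relevant range of thresholds into a geometric grid $r_1 > r_2 > \cdots > r_K$ with $K = \poly(n, \eps^{-1})$ levels. For each variable $i$, slot $t \in [n]$, and threshold index $k$, introduce $z_{itk} \in [0,1]$, whose intended meaning is ``variable $i$ is placed in slot $t$ with threshold $r_k$.'' The feasibility constraints are doubly stochastic: $\sum_{t,k} z_{itk} = 1$ for every $i$ and $\sum_{i,k} z_{itk} = 1$ for every $t$. The true expected reward of an integral $z$ is
\[
\sum_t \sum_{i,k} z_{itk}\cdot \E\bigl[X_i \one\{X_i\geq r_k\}\bigr] \cdot \prod_{s<t}\Bigl(1 - \sum_{j,k'} z_{jsk'}\Pr[X_j\geq r_{k'}]\Bigr).
\]
Because every variable is $\eps$-small, each probability $\Pr[X_j\geq r_{k'}]$ is at most $\eps$, so replacing the product of $(1-q)$'s by $\exp(-\sum q)$ costs only a $(1\pm O(\eps))$ multiplicative factor. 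Introducing auxiliary linear quantities $W_t = \sum_{s<t}\sum_{j,k'} z_{jsk'}\Pr[X_j\geq r_{k'}]$ and $R_t = \sum_{i,k} z_{itk}\E[X_i\one\{X_i\geq r_k\}]$, the objective becomes $\sum_t R_t e^{-W_t}$, which is a concave function of $(R_t, W_t)$ (and hence of $z$ via the linear relationship), yielding a concave program with linear constraints that is solvable to additive accuracy $\eps$ in polynomial time by standard interior-point methods.

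Next I would argue the program's optimum is at least $(1-O(\eps))\OPT_{\free}$. Any fixed ordering together with its optimal thresholds (rounded to the grid, at an additional $(1+O(\eps))$ cost) gives a feasible integral solution whose objective value matches its true expected reward up to the exponential-vs-product slack above. Taking the best such ordering shows that the concave program's value is at least $(1-O(\eps))\OPT_{\free}$.

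Finally I would round: decompose the optimal doubly stochastic $Z$ by Birkhoff--von Neumann as a convex combination $Z = \sum_\ell \lambda_\ell P_\ell$ of polynomially many integral assignments, evaluate the true expected reward of each $P_\ell$ via the standard backward-induction DP on its ordering and thresholds, and output the best. The main obstacle is showing that at least one $P_\ell$ achieves value $\geq (1-O(\eps))\OPT_{\free}$: the concave objective is not linear, so a priori the average of the $P_\ell$ objectives need not equal the $Z$ objective. Here $\eps$-smallness is essential: each individual $z_{itk}$ contributes at most $O(\eps)$ to any $W_t$, so replacing $Z$ by an extreme point with the same marginals perturbs every survival factor $e^{-W_t}$ by at most a $(1+O(\eps))$ factor. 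Combining this stability with concavity (Jensen's inequality gives that the average of the $P_\ell$ objectives is at most the $Z$ objective, hence some $P_\ell$ is at least the average) yields an integral ordering achieving expected reward $(1-O(\eps))\OPT_{\free}$.
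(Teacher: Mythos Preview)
Your concavity claim is false, and this is the crux of the difficulty. The function $(r,w)\mapsto r\,e^{-w}$ is \emph{not} jointly concave: its Hessian is
\[
e^{-w}\begin{pmatrix}0 & -1\\ -1 & r\end{pmatrix},
\]
which has negative determinant and is therefore indefinite. Since $R_t$ and $W_t$ are linear in $z$, each summand $R_t e^{-W_t}$ is the composition of this indefinite function with a linear map, hence not concave; the full objective $\sum_t R_t e^{-W_t}$ inherits the same defect (concretely, take a variable $z_{itk}$ that feeds into $R_t$ and a variable $z_{jsk'}$ with $s<t$ that feeds into $W_t$: the $2\times 2$ block of the Hessian in these two directions has the form $\bigl(\begin{smallmatrix}0 & -a\\ -a & b\end{smallmatrix}\bigr)$ with $a,b>0$). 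So your program is not a concave program and cannot be solved by interior-point methods as stated. The paper sidesteps this obstruction by first proving a sorting lemma: once each variable is assigned a threshold, the optimal order is simply to sort the pairs by $\lambda_{i,j}=\E[X_i\mid X_i\geq t_j]$. This eliminates the slot index entirely and leaves only assignment variables $z_{i,j}$ (variable $i$ to threshold $j$); the resulting objective, after an Abel summation, becomes $\lambda_{\max}-\sum_\ell(\lambda_{r(\ell)}-\lambda_{r(\ell+1)})\prod_{\ell'\le\ell}p_{r(\ell')}^{z_{r(\ell')}}$, which \emph{is} concave because each product $\prod p^{z}=\exp(\text{linear})$ is convex and its coefficient $(\lambda_{r(\ell)}-\lambda_{r(\ell+1)})$ is nonnegative.

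Your rounding step is also shaky independently of the concavity issue. Jensen's inequality for a concave objective says the average over the Birkhoff components is \emph{at most} the fractional value, which is the wrong direction: knowing some $P_\ell$ beats the average tells you nothing about beating $(1-O(\eps))$ times the fractional optimum. The ``stability'' you invoke would need to say that \emph{every} extreme point $P_\ell$ has roughly the same $W_t$'s as $Z$, but that is simply false---an extreme point can place all the high-probability variables in the early slots and drastically alter the survival curve relative to the smoothed fractional $Z$. The paper instead uses \emph{randomized} rounding (for each $i$, sample its threshold $j$ with probability $z_{i,j}$) together with negative association of the indicators, and proves $\E[\prod p^{Z}]\le(\prod p^{z})^{1-O(\eps)}$ directly from $\eps$-smallness; this gives $\E[\Obj(Z)]\ge(1-O(\eps))\Obj(z)$ in the correct direction.
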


We then leverage this algorithm to design an efficient polynomial-time approximation scheme (EPTAS) for computing an $\eps$-optimal ordering for \emph{any} collection of random variables.

\begin{theorem}\label{thm:general_ordering}
Let $\bX = \{X_1, X_2, \dots, X_n\}$ be a collection of non-negative random variables. There is an algorithm which runs in time $\exp(\tilde{O}(\eps^{-2}))\poly(n)$ and computes an $\eps$-optimal ordering for this collection of random variables.
\end{theorem}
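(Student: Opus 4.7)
The plan is to apply the variable decomposition of Section~\ref{sect:imperfect} to reduce to an instance with $\tilde{O}(\eps^{-2})$ ``big'' variables plus small ones, then enumerate orderings of the big variables and invoke Theorem~\ref{thm:small_ordering} on the small variables for each such guess.

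First I would carry out essentially the same decomposition as in the proof of Theorem~\ref{thm:nearby_weak_intro}: choose a threshold $t^*$ such that, after shifting $Z_i = \max(X_i, t^*) - t^*$, all but $k = \tilde{\Theta}(\eps^{-2})$ of the resulting variables are $(\eps, \eps t^*)$-small. Call these $k$ exceptions $\bX_{\bbig}$ and the rest $\bX_{\text{small}}$. Next, I would prove a structural lemma: there is a $(1-O(\eps))$-approximately optimal free-order policy that first processes all variables of $\bX_{\text{small}}$, and only if no small variable is accepted does it process the variables of $\bX_{\bbig}$ using backward-induction thresholds. Intuitively, because each variable in $\bX_{\bbig}$ has probability at least $\eps$ of exceeding $t^*(1+\eps)$ and there are $\tilde{\Theta}(\eps^{-2})$ of them, with probability $1 - O(\eps)$ at least one big variable will be at least $t^*(1+\eps)$, so deferring any ``cheap'' acceptance to the big phase is essentially free.

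Given the structural lemma, the algorithm enumerates all $k! = \exp(\tilde{O}(\eps^{-2}))$ orderings of $\bX_{\bbig}$. For each ordering $\sigma$, backward induction computes in $O(k^2)$ time the continuation value $V_\sigma$ and the induced threshold sequence of the big phase. The small-variable phase then reduces to finding a near-optimal ordering of $\bX_{\text{small}}$ with terminal value $V_\sigma$ (the reward if nothing is accepted); I would solve this either by (a) adapting the convex program underlying Theorem~\ref{thm:small_ordering} to incorporate the baseline $V_\sigma$, or (b) reducing to a standard small-ordering instance by replacing each $X_i$ with $\max(X_i, V_\sigma) - V_\sigma$ and truncating to recover $\eps$-smallness (with additive loss $O(\eps t^*)$ absorbed into the overall $(1-\eps)$ guarantee). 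Outputting the best combination across all $k!$ guesses yields total running time $\exp(\tilde{O}(\eps^{-2})) \poly(n)$.

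The main obstacle will be the structural lemma: bounding the loss from the two-phase restriction by $O(\eps) \OPT_{\free}$ rather than $O(\eps) \MAX$. Since we compare against $\OPT_{\free}$ (which can be much smaller than $\MAX$), the arguments from Section~\ref{sect:imperfect} do not transfer directly. The key observation I would use is that any policy which accepts some value strictly below $t^*$ could instead defer and, by the quantitative bound above, receive at least $t^*$ with probability $1-O(\eps)$; the resulting swap argument loses only $O(\eps) t^*$ per swap. A secondary technical issue is handling the additive $\eps t^*$ in $(\eps,\eps t^*)$-smallness when invoking Theorem~\ref{thm:small_ordering}: this is resolved by showing $t^* = O(\OPT_{\free})$ a priori (e.g., by exhibiting a naive policy that achieves $\Omega(t^*)$ from $\bX_{\bbig}$), so the accumulated additive error is $O(\eps)\OPT_{\free}$.
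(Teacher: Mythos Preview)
Your structural lemma---that some $(1-O(\eps))$-optimal policy processes all of $\bX_{\text{small}}$ before all of $\bX_{\bbig}$---is false, and this is the gap in your plan. Take $t^*\approx 1$, let $X_1=4$ with probability $1/2$ (big), let $X_2,\dots,X_k$ each equal $1+2\eps$ with probability $2\eps$ (barely big), and let the small variables each equal $2$ with probability $\eps/2$, with enough of them that some small variable equals $2$ with probability close to $1$. The optimal policy inspects $X_1$ first: if $X_1=4$ it stops, otherwise it harvests $\approx 2$ from the small variables, for total value $\approx 3$. Under your two-phase structure the best big-phase value over all orderings $\sigma$ is $V_\sigma\approx 2+\tfrac12\cdot 1=2.5$; since $V_\sigma>2$ the small phase never accepts, and you obtain only $\approx 2.5$, a constant-factor loss independent of $\eps$. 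Your swap argument (``defer any acceptance below $t^*$'') only shows thresholds are $\gtrsim t^*$; it says nothing about \emph{order}, and the real obstruction is that a single high-variance big variable must be observed early so that on its low outcome the policy can still collect from the small ones.

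The paper's proof never separates small and big into phases. Instead it (i) reserves a random $\eps k$-sized subset $\bX_{\rand}\subset\bX_{\bbig}$ for the very end as a safety net guaranteeing value $\geq t^*$ with probability $1-O(\eps)$, and bounds the loss from removing $\bX_{\rand}$ against $\OPT_{\free}$ via a coupling argument (Lemma~\ref{lem:optcoupling}); then (ii) on the shifted instance $\bZ'=\bZ\setminus\bZ_{\rand}$, which still contains $(1-\eps)k=\tilde O(\eps^{-2})$ non-small variables, it enumerates the $\lceil 1/\eps\rceil^{\tilde O(\eps^{-2})}=\exp(\tilde O(\eps^{-2}))$ \emph{threshold assignments} (not orderings) for those variables and, for each guess, solves the convex program~\eqref{eq:convex_relaxation_gen} with those $z_{i,j}$ fixed. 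Randomized rounding still works because the non-small variables' $z_{i,j}$ are already integral. The resulting policy interleaves big and small variables by sorting on $\lambda_{i,j}$; in the example above, $X_1$ naturally lands first.
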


Note that all previous algorithms for this problem had arbitrarily bad polynomial dependence on $n$ (e.g. time complexities that scale as $n^{1/\eps}$). This is the first known PTAS for this problem which is ``efficient'' (i.e., whose complexity can be written in the form $O(f(\eps)\poly(n))$).

The remainder of this section is organized as follows. In Section \ref{sec:structure}, we begin by making some observations about the structure of the optimal policy for the free order stopping problem. In Section \ref{sec:small_ordering}, we show how to write a convex program to find a nearly optimal ordering for $\eps$-small random variables, thus establishing Theorem \ref{thm:small_ordering}. Finally, in Section \ref{sec:general_ordering} we show how to combine this with our decomposition technique, establishing Theorem \ref{thm:general_ordering}.

\subsection{Structural observations}\label{sec:structure}

We begin with some structural observations on the form of an optimal policy for the optimal stopping problem. In particular, we'll show that any optimal policy can be well approximated by a policy that sets a decreasing threshold discretized to a small set of values. 

To start, we will narrow our attention to \textit{stateless policies}, which can be completely specified by an ordering $\pi:[n]\rightarrow [n]$ of the $n$ random variables $X_i$ and a threshold $\tau:[n] \rightarrow \R_{\geq 0}$ for each item. To execute such a policy, we visit the random variables in the order $X_{\pi(1)}, X_{\pi(2)}, \dots, X_{\pi(n)}$. When we visit $X_{\pi(i)}$, we check whether its instantiation is greater than the threshold $\tau(\pi(i))$. If so, we immediately accept (ending the protocol). Otherwise, we continue to the next item. Without loss of generality, $\tau(\pi(n)) = 0$, since it is always beneficial to accept the very last item. 

Fix a policy $\Pi = (\pi, \tau)$ and define:
$$\lambda_i = \E[X_{\pi(i)} | X_{\pi(i)} \geq \tau(\pi(i))] \quad \text{and} \quad p_i = \Pr[X_{\pi(i)} < \tau(\pi(i))]$$
which correspond to the expected value of the $i$th random variable we see conditioned on it exceeding its threshold and the  probability that the $i$th random variable we see falls short of its threshold. From the description of the protocol, we can see that the expected reward $U(\Pi)$ of this policy $\Pi$ can be written as:

\begin{equation}\label{eq:utilpi}
    U(\Pi) = \sum_{i=1}^{n}\lambda_i (1-p_i)\prod_{j < i} p_j.
\end{equation}

Via the dynamic programming argument mentioned earlier, we can show the optimal policy belongs to this class of stateless policies.

\begin{lemma}\label{lem:opt_policy}
The optimal policy is a stateless policy that satisfies

\begin{equation}\label{eq:opt_tau}
\tau(k) = \sum_{i=k+1}^{n} \lambda_i (1-p_i)\left( \prod_{(k+1) \leq j < i} p_j\right).
\end{equation}
\end{lemma}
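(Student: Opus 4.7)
My plan is a standard backward-induction / Bellman-equation argument. First I would argue that, once the ordering $\pi$ is fixed, the optimal continuation policy is automatically stateless: after rejecting the first $k-1$ items the remaining variables $X_{\pi(k)}, \dots, X_{\pi(n)}$ remain mutually independent with their original distributions, and the realized values of the already-discarded items are irrelevant because they cannot be recovered. So the optimal action at step $k$ depends only on the step index $k$ and on the current realization $X_{\pi(k)}$, and by the usual ``compare to continuation value'' argument it takes the threshold form assumed in the definition of a stateless policy. Since this is true for the optimal ordering as well, the optimal policy is stateless.

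Next I would identify what the optimal threshold must be. Let $V_k$ denote the expected reward obtained by playing optimally from step $k$ onward, i.e.\ after the first $k-1$ items have been rejected, with the convention $V_{n+1} = 0$. Conditional on reaching step $k$, the Bellman recursion gives
\begin{equation*}
V_k = \E\bigl[\max(X_{\pi(k)},\, V_{k+1})\bigr],
\end{equation*}
and the optimal action is to accept precisely when $X_{\pi(k)} \geq V_{k+1}$; that is, the optimal threshold at step $k$ is $\tau(\pi(k)) = V_{k+1}$. Plugging in the definitions $p_k = \Pr[X_{\pi(k)} < \tau(\pi(k))]$ and $\lambda_k = \E[X_{\pi(k)} \mid X_{\pi(k)} \geq \tau(\pi(k))]$, the Bellman relation simplifies to the one-step recursion
\begin{equation*}
V_k = p_k V_{k+1} + (1-p_k)\lambda_k.
\end{equation*}

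Finally I would unroll this recursion by induction downward from $k=n$. Using $V_{n+1}=0$ as the base case, a routine telescoping yields
\begin{equation*}
V_{k+1} = \sum_{i=k+1}^{n} \lambda_i (1-p_i) \prod_{k+1 \leq j < i} p_j,
\end{equation*}
which is exactly the claimed expression for $\tau(k)$ in \eqref{eq:opt_tau}. I do not anticipate a genuine obstacle: the argument is a textbook dynamic program. The only point requiring a bit of care is that $p_i$ and $\lambda_i$ are themselves defined in terms of $\tau(\pi(i))$, so the induction must be performed from $i=n$ downward (starting from $\tau(\pi(n)) = 0$, which forces $(1-p_n)\lambda_n = \E[X_{\pi(n)}]$) so that each $\tau(\pi(i))$ entering the definitions of $p_i$ and $\lambda_i$ has already been pinned down before $\tau(\pi(k))$ is determined.
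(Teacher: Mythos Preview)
Your proposal is correct and follows essentially the same approach as the paper: both arguments use backward induction / dynamic programming to show the optimal continuation value $V_{k+1}$ is the right threshold at step $k$, and then identify the right-hand side of \eqref{eq:opt_tau} with that continuation value. The paper phrases the induction on the number of remaining variables (and is slightly more explicit that the choice of \emph{which} variable to inspect next is also stateless), while you phrase it as a Bellman recursion on the step index and unroll it explicitly, but the content is the same.
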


The proof (which can be found in the Appendix \ref{apx:optimal_ordering}) follows from the usual argument: the optimal policy chooses a variable iff its value is larger than the expected reward it can obtain from the remaining variables.

As a consequence of this, we note that if an optimal policy ever sees a sufficiently large variable, it must immediately accept it (regardless of position). More concretely, we can bound the maximum value of any threshold in the optimal policy for $\bX$.
\begin{lemma}\label{lemma:max_threshold}
The optimal policy immediately accepts any variable whose realization is at least $\MAX(\bX)$.
\end{lemma}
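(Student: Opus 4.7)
The plan is to identify $\tau(\pi(k))$ as the expected utility of the optimal stateless policy restricted to the variables not yet visited, and then bound that expected utility by the expected maximum of all variables. Since the policy accepts any item whose realization meets its threshold, this will suffice.

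First, I would interpret the formula in Lemma~\ref{lem:opt_policy}. Comparing equation~\eqref{eq:utilpi} for the total utility $U(\Pi)$ with equation~\eqref{eq:opt_tau} for $\tau(\pi(k))$, the threshold $\tau(\pi(k))$ is exactly the expected reward obtained by running the optimal stateless policy on the suffix $X_{\pi(k+1)}, X_{\pi(k+2)}, \dots, X_{\pi(n)}$. In other words, it is the expected continuation reward after rejecting the first $k$ items.

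Next, I would apply the trivial ``online $\leq$ offline'' bound. Any stateless (indeed, any online) policy on a collection of nonnegative random variables accepts at most one variable, so its expected reward is at most the expected maximum of those variables. Therefore
\begin{equation*}
\tau(\pi(k)) \;\leq\; \E\!\left[\max_{j>k} X_{\pi(j)}\right] \;\leq\; \E\!\left[\max_{i} X_i\right] \;=\; \MAX(\bX).
\end{equation*}

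Finally, I would invoke the acceptance rule of the stateless policy: when visiting $X_{\pi(k)}$, we accept if and only if $X_{\pi(k)} \geq \tau(\pi(k))$. Since $\tau(\pi(k)) \leq \MAX(\bX)$, any realization satisfying $X_{\pi(k)} \geq \MAX(\bX)$ also satisfies $X_{\pi(k)} \geq \tau(\pi(k))$, so the optimal policy accepts immediately. There is no serious obstacle here — the argument is essentially an unfolding of definitions combined with the elementary online-versus-offline inequality; the only minor subtlety is noting that $\tau$ in Lemma~\ref{lem:opt_policy} is indexed by the item identity $\pi(k)$ but the right-hand side depends only on the position $k$ in the ordering, which matches the stateless interpretation used throughout.
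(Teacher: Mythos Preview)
Your proposal is correct and follows essentially the same approach as the paper: both recognize that the right-hand side of equation~\eqref{eq:opt_tau} is the continuation value of the optimal policy on the remaining variables and then bound it above by $\MAX(\bX)$. The only cosmetic difference is that the paper routes the bound through $\OPT_{\free}(\bX)$ (suffix value $\leq \OPT_{\free}(\bX) \leq \MAX(\bX)$) whereas you bound the suffix value directly by the expected max of the suffix; either works.
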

\begin{proof}
The right hand side of equation \eqref{eq:opt_tau} is always at most $\OPT_{\free}(\bX)$, which in turn is at most $\MAX(\bX)$.
\end{proof}

We next show that we can discretize the set of possible thresholds without sustaining much loss in expected reward.
\begin{lemma}\label{lem:discretizedthresholds}
There exists a policy for $\bX$ that achieves a reward of at least $(1 - 2\eps)\OPT_{\free}(\bX)$ only using thresholds of the form $t_u = (1-u\eps)^+ \MAX$ for $u \in \{0, 1, 2, \hdots, \lceil 1/\eps \rceil\}$.
\end{lemma}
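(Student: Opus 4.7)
The plan is to start from the optimal stateless policy $\Pi = (\pi, \tau)$ guaranteed by Lemma~\ref{lem:opt_policy}, and round each threshold $\tau(k)$ down to the nearest element of the discretized set $\{t_u\}_{u=0}^{\lceil 1/\eps\rceil}$. By Lemma~\ref{lemma:max_threshold} every $\tau(k)$ lies in $[0,\MAX]$, so the rounded value $\tau'(k)$ satisfies $\tau(k) - \eps\MAX \le \tau'(k) \le \tau(k)$ (the $(1-u\eps)^+$ form takes care of the boundary case $\tau(k) < \eps\MAX$). Let $\Pi' = (\pi,\tau')$ denote the discretized policy; the goal is to show $U(\Pi') \ge (1-2\eps)\OPT_\free$.

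The main step is to couple $\Pi$ and $\Pi'$ on the same realization of $\bX$. Since $\tau'(k) \le \tau(k)$ for every $k$, the stopping time $\tau^{*\prime}$ of $\Pi'$ is pointwise at most the stopping time $\tau^*$ of $\Pi$, so the rewards can only differ on events of the form $\{\tau^{*\prime} = k,\ \tau^* > k\}$. Such an event occurs precisely when neither policy has stopped before step $k$ and $X_{\pi(k)} \in [\tau'(k),\tau(k))$. Conditioned on this event and the history up through step $k$, $\Pi'$ earns at least $\tau'(k) \ge \tau(k) - \eps\MAX$, while $\Pi$ proceeds to step $k+1$ and (by the optimality equation of Lemma~\ref{lem:opt_policy}) has conditional expected continuation reward equal to $V(k+1) = \tau(k)$. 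Thus the conditional expected gap is at most $\eps\MAX$. Summing over $k$,
\[
U(\Pi) - U(\Pi') \;\le\; \eps\MAX \cdot \sum_k \Pr[\tau^{*\prime}=k,\ \tau^*>k] \;\le\; \eps\MAX.
\]

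To turn this additive bound into the multiplicative guarantee, I would invoke the classical Samuel-Cahn adversarial-order prophet inequality, which yields $\OPT_\free \ge \OPT_{\mathrm{adv}} \ge \MAX/2$, so $\MAX \le 2\,\OPT_\free$. Combining,
\[
U(\Pi') \;\ge\; \OPT_\free - \eps\MAX \;\ge\; (1-2\eps)\,\OPT_\free,
\]
which is the claimed bound.

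The only subtle step is the coupling argument in the second paragraph: one must be careful that the identity $\tau(k) = V(k+1)$ from Lemma~\ref{lem:opt_policy} is being applied conditionally on the correct $\sigma$-algebra. Since $\tau(k)$ is a deterministic quantity and the continuation reward of $\Pi$ depends only on the future random variables $X_{\pi(k+1)},\dots,X_{\pi(n)}$ (independent of $X_{\pi(1)},\dots,X_{\pi(k)}$), the optimality equation carries over to the conditional expectation without modification. Everything else — the pointwise stopping time comparison, the sum over $k$, and the reduction $\MAX \le 2\OPT_\free$ — is routine.
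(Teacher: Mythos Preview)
Your proof is correct and follows the same high-level strategy as the paper: round every threshold of the optimal stateless policy down to the nearest multiple of $\eps\,\MAX$, show the resulting additive loss is at most $\eps\,\MAX$, and convert to a multiplicative guarantee via $\MAX \le 2\,\OPT_\free$.

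The execution differs slightly. The paper rounds the thresholds \emph{one at a time}, building a chain $\Pi=\Pi_0,\Pi_1,\dots,\Pi_n$ where $\Pi_i$ has the first $i$ thresholds rounded, and bounds each step by
\[
U(\Pi_{i-1})-U(\Pi_i)\;\le\;\Bigl(\prod_{j<i}p_j\Bigr)(1-p_i)\,\eps\,\MAX,
\]
so that the sum telescopes to $\eps\,\MAX$. You instead round all thresholds at once and compare $\Pi$ and $\Pi'$ on a common sample path, using the pointwise inequality $\tau^{*\prime}\le\tau^*$ together with the optimality identity $\tau(k)=V(k+1)$ to bound the conditional gap on each event $\{\tau^{*\prime}=k,\ \tau^*>k\}$ by $\eps\,\MAX$. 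Your coupling argument is a touch cleaner (no intermediate policies, no term-by-term bookkeeping), while the paper's version makes explicit where the factor $\prod_{j<i}p_j(1-p_i)$ comes from; both land on exactly the same inequality $U(\Pi)-U(\Pi')\le\eps\,\MAX$.
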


The proof (in Appendix \ref{apx:optimal_ordering}) follows from rounding all the thresholds in the optimal policy down to the closest multiple of $\eps$ and keeping track of the change in the objective value.

Finally, we show that given an assignment of thresholds to variables, one can find the optimal ordering of the variables via a simple greedy algorithm.

\begin{lemma}\label{lem:sorting}
Given an assignment of thresholds $\tau$ the optimal policy $\Pi = (\pi, \tau)$ with those thresholds is the one where $\lambda_1 \geq \lambda_2 \geq \hdots \geq \lambda_n$. 
\end{lemma}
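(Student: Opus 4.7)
The plan is to prove this via an adjacent exchange argument. Once the assignment of thresholds $\tau$ to variables is fixed, each variable carries its own intrinsic pair $(\lambda, p)$ that does not depend on position, so the only freedom is the ordering $\pi$. I will show that if any two adjacent positions in $\pi$ violate the claimed sorted order, swapping them weakly increases $U(\Pi)$; iterating gives optimality of the sorted order.

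Concretely, fix $\pi$ and suppose positions $i$ and $i+1$ have parameters $(\lambda_i, p_i)$ and $(\lambda_{i+1}, p_{i+1})$. Let $P = \prod_{j < i} p_j$ be the "reach position $i$" probability and let $S$ be the contribution from positions $i+2, \dots, n$. Because $p_i p_{i+1} = p_{i+1} p_i$, swapping positions $i$ and $i+1$ leaves $P$ unchanged and also leaves $S$ unchanged (the prefix factor going into position $i+2$ is still $P p_i p_{i+1}$). The two terms of \eqref{eq:utilpi} at positions $i$ and $i+1$ before swapping contribute
\[
P \bigl[\lambda_i(1-p_i) + p_i \lambda_{i+1}(1-p_{i+1})\bigr],
\]
and after swapping they contribute
\[
P \bigl[\lambda_{i+1}(1-p_{i+1}) + p_{i+1} \lambda_i(1-p_i)\bigr].
\]
The difference (after minus before) simplifies to
\[
P \,(1-p_i)(1-p_{i+1}) \,\bigl(\lambda_{i+1} - \lambda_i\bigr),
\]
which is nonnegative exactly when $\lambda_{i+1} \geq \lambda_i$.

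Therefore, whenever an ordering contains an adjacent inversion $\lambda_i < \lambda_{i+1}$, swapping those two positions does not decrease $U(\Pi)$. A standard bubble-sort style argument (performing at most $\binom{n}{2}$ such swaps) shows that any ordering can be transformed into the decreasing ordering $\lambda_1 \geq \lambda_2 \geq \cdots \geq \lambda_n$ without ever decreasing $U(\Pi)$, so this sorted ordering is optimal. The main "obstacle" is really just the algebraic simplification of the swap difference; once that factors cleanly into $(1-p_i)(1-p_{i+1})(\lambda_{i+1}-\lambda_i)$, the exchange argument is immediate and no additional structural assumptions on $\tau$ are needed.
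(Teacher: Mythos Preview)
Your proof is correct and takes essentially the same adjacent-exchange approach as the paper: both compute the change in $U(\Pi)$ from swapping positions $i$ and $i+1$ and obtain the factored form $P(1-p_i)(1-p_{i+1})(\lambda_{i+1}-\lambda_i)$, then conclude by repeatedly removing inversions. Your write-up is slightly more explicit about the bubble-sort wrap-up, but the argument is otherwise identical.
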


\begin{proof}
We will show that if $\lambda_i < \lambda_{i+1}$ the swapping the order in which we visit $X_{\pi(i)}$ and $X_{\pi(i+1)}$ can only increase the reward of the policy. Let $V_{<i}$ be the reward the policy obtains from the first $i-1$ variables and the $V_{>i+1}$ be the reward it obtains from the last $i+1$ variables conditioned on reaching the $i+1$ variable. Also let $q$ be the probability of reaching the $i$-th variable. Then:
$$U(\Pi) = V_{<i} + q (1-p_i)\lambda_i + q p_i (1-p_{i+1}) \lambda_{i+1} + q p_i p_{i+1} V_{>i+1}$$
If instead we swap the order of  $X_{\pi(i)}$ and $X_{\pi(i+1)}$ we obtain:
$$U'(\Pi) = V_{<i} + q (1-p_{i+1})  \lambda_{i+1} + q p_{i+1} (1-p_i) \lambda_{i} + q p_i p_{i+1} V_{>i+1}$$
And:
$$U'(\Pi) - U(\Pi) = q (1-p_i)(1-p_{i+1}) (\lambda_{i+1}-\lambda_i) \geq 0$$
\end{proof}

One useful consequence of Lemma \ref{lem:sorting} is that if we know the \textit{thresholds} associated with each random variable (i.e. the values $\tau(i)$ for each $i$), we can efficiently compute the optimal ordering (contrast this with Lemma \ref{lem:opt_policy}, which shows how to compute the thresholds given the ordering). This will be of particular use to us in the following sections.

\subsection{Optimal orderings for $\eps$-small variables}\label{sec:small_ordering}

We are now ready to present our algorithm for $\eps$-small variables. Our main approach will be to think about as a type of ``assignment problem'', where we want to assign each random variable to a specific discrete threshold (one of the $\lceil 1/\eps \rceil$ multiples of $\eps\MAX(\bX)$). By applying Lemma \ref{lem:sorting}, we can recover the optimal ordering from the optimal assignment.

Unfortunately, computing this optimal assignment is still hard (there are on the order of $(1/\eps)^n$ possible assignments). To get around this, we construct a convex relaxation of this assignment problem which we can solve efficiently. We then show that (for $\eps$-small random variables) it is possible to round this to an integral solution without much loss in solution quality.

\paragraph{Convex programming formulation}
By Lemmas \ref{lem:discretizedthresholds} and \ref{lemma:max_threshold} it is enough for an $\eps$-optimal solution to consider discretized thresholds of the form $t_u = (1-\eps u) \MAX$ for $u \in \{0, \hdots, c\}$ for $c = \lceil 1/\eps \rceil$. Define the expected value we can obtain from a certain variable when subject to that threshold and the probabilities of not meeting the threshold for each pair:
\begin{equation}\label{eq:lambda_p}
\lambda_{i, j} = \E[X_{i} | X_{i} \geq t_{j}] \qquad  \qquad p_{i,j} = \Pr[X_{i} < t_{j}].
\end{equation}
Lemma \ref{lem:sorting} implies if we know the assignment of thresholds to variables we can obtain the best possible ordering for that assignment by sorting by $\lambda_{i, j}$. Formally, if for policy $\Pi$, variable $X_i$ occurs with threshold $t_j$ and variable $X_{i'}$ occurs with threshold $t_{j'}$, if $\lambda_{i, j} \geq \lambda_{i', j'}$ then $X_i$ should occur earlier in the ordering than $X_{i'}$. 

To encode this in the optimization, it will be convenient to define the following notation: let $\r: [n(c+1)] \rightarrow \{1, 2, \dots, n\} \times \{0, 1, \dots, c\}$ be a bijection such that
$$\lambda_{\r(1)} \geq \lambda_{\r(2)} \geq \dots \lambda_{\r(n(c+1))}.$$

We are now ready to define a convex relaxation of the assignment problem. Consider the following convex program in the variables $z_{i, j}$ (for $1 \leq i \leq n, 0 \leq j \leq c$):

\begin{equation}\label{eq:convex_relaxation}\tag{CP}
\begin{aligned}
& \max_z \sum_{\ell = 1}^{n(c+1)} \lambda_{\r(\ell)} \cdot \left(\prod_{\ell' = 1}^{\ell-1} [p_{\r(\ell')}]^{z_{\r(\ell')}} - \prod_{\ell' = 1}^{\ell} [p_{\r(\ell')}]^{z_{\r(\ell')}} \right) \quad {s.t.} \\
& \qquad \sum_{j=0}^c z_{i,j} = 1, \forall i \in [n] \\
& \qquad z_{ij}\geq 0, \forall i \in [n], j \in \{0, \hdots, c\}\\
\end{aligned}   
\end{equation}

Note that this is an assignment problem with a non-linear objective. Each variable $z_{i, j} \in [0, 1]$ represents a fractional assignment of variable $X_i$ to the threshold $t_j$. We begin by showing that the objective function for an integral assignment indeed is equal to the value of the best policy that assigns corresponding thresholds.
\begin{lemma}\label{lem:policy_to_solution}
Given any integral solution $z_{i, j} \in \{0,1\}$, let $\Pi$ be the policy that assigns thresholds $t_j$ to variable $X_i$ for each $(i,j)$ pair such that $z_{i,j}=1$ and orders the pairs according to $\lambda_{i,j}$. Then the value $U(\Pi)$ is the same as the objective function of \eqref{eq:convex_relaxation}.
\end{lemma}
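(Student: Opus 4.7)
The plan is to directly unpack the objective of \eqref{eq:convex_relaxation} under the integral assignment $z$ and match it term-by-term against the utility expression \eqref{eq:utilpi} for the policy $\Pi$.

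First, I would observe that for any integral $z_{i,j}\in\{0,1\}$, the factor $[p_{\r(\ell')}]^{z_{\r(\ell')}}$ equals $1$ when $z_{\r(\ell')}=0$ and equals $p_{\r(\ell')}$ when $z_{\r(\ell')}=1$. Consequently, whenever $z_{\r(\ell)}=0$ the two products $\prod_{\ell'=1}^{\ell-1}[p_{\r(\ell')}]^{z_{\r(\ell')}}$ and $\prod_{\ell'=1}^{\ell}[p_{\r(\ell')}]^{z_{\r(\ell')}}$ coincide, so that the $\ell$-th summand of the objective vanishes. Only the $n$ indices $\ell$ with $z_{\r(\ell)}=1$ (one per variable $X_i$, since the assignment constraint forces $\sum_j z_{i,j}=1$) survive.

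Next, for each such surviving index $\ell$ the parenthesized difference simplifies to
$$\Bigl(\prod_{\substack{\ell' < \ell\\ z_{\r(\ell')}=1}} p_{\r(\ell')}\Bigr)\bigl(1-p_{\r(\ell)}\bigr).$$
Enumerate the surviving indices in increasing order of $\ell$ as $\ell_1<\ell_2<\cdots<\ell_n$ and write $\r(\ell_k)=(i_k,j_k)$. Since $\r$ was defined to list pairs in non-increasing order of $\lambda$, this enumeration is precisely decreasing $\lambda_{i_k,j_k}$, which by construction is the visiting order of $\Pi$: the $k$-th inspected variable is $X_{i_k}$ with threshold $t_{j_k}$. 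In particular, using the notation of \eqref{eq:lambda_p}, the quantities $\lambda_{\r(\ell_k)}=\lambda_{i_k,j_k}$ and $p_{\r(\ell_k)}=p_{i_k,j_k}$ agree with the $k$-th conditional expectation and fall-short probability of the policy $\Pi$.

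Substituting, the objective of \eqref{eq:convex_relaxation} reduces to
$$\sum_{k=1}^n \lambda_{i_k,j_k}\bigl(1-p_{i_k,j_k}\bigr)\prod_{k'<k} p_{i_{k'},j_{k'}},$$
which is exactly $U(\Pi)$ as given by \eqref{eq:utilpi}. There is no real obstacle; the only care needed is bookkeeping the fact that indices with $z=0$ are inert in both products (immediate from $p^{0}=1$) and verifying that the ordering induced by $\r$ on the surviving pairs coincides with the ordering prescribed by Lemma \ref{lem:sorting}.
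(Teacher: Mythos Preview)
Your argument is correct and follows essentially the same route as the paper: factor each summand as $\lambda_{\r(\ell)}(1-[p_{\r(\ell)}]^{z_{\r(\ell)}})\prod_{\ell'<\ell}[p_{\r(\ell')}]^{z_{\r(\ell')}}$, observe that the terms with $z_{\r(\ell)}=0$ vanish while the surviving ones are exactly the variable/threshold pairs in $\Pi$ ordered by decreasing $\lambda$, and match against \eqref{eq:utilpi}. Your writeup is simply more explicit about the bookkeeping.
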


\begin{proof}
We can re-write each term of the objective function as: 
$$\lambda_{\r(\ell)} \cdot \left(1 - [p_{\r(\ell)}]^{z_{\r(\ell)}}  \right) \cdot \prod_{\ell' < \ell} [p_{\r(\ell')}]^{z_{\r(\ell')}}$$
which correspond to each of the terms in the definition of $U(\Pi)$ in equation \eqref{eq:utilpi}. In particular, note that (i) when $z_{\r(\ell)} = 0$, this contribution is $0$, and (ii) the product over $\ell' < \ell$ contains exactly the variable/threshold pairs which occur earlier in the ordering $\pi$ than $\ell$ (since we are only considering orderings consistent with $\r$). 
\end{proof}

We next show that this convex program is indeed convex (in particular, we are maximizing a concave function in the $z_{i, j}$). 

\begin{lemma}\label{lem:convexity}
The objective function of \eqref{eq:convex_relaxation} is concave in the variables $z_{i,j}$.
\end{lemma}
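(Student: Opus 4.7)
The plan is to rewrite the objective in a form that makes concavity immediate: each product $\prod_{\ell'=1}^{\ell} [p_{\r(\ell')}]^{z_{\r(\ell')}}$ is the exponential of a linear function of the $z$'s, so applying Abel's summation-by-parts will re-express the objective as a nonnegative combination of concave functions.

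First, I would set $a_\ell = -\log p_{\r(\ell)} \ge 0$ (assuming without loss of generality that all $p_{\r(\ell)} \in (0,1]$; the boundary cases can be handled by a standard limiting argument, or by simply dropping thresholds at which the survival probability is $0$ or $1$ since they contribute nothing to the relaxation). Define $S_\ell(z) = \sum_{\ell'=1}^{\ell} a_{\ell'} z_{\r(\ell')}$ so that $\prod_{\ell'=1}^{\ell} [p_{\r(\ell')}]^{z_{\r(\ell')}} = e^{-S_\ell(z)}$, with the convention $S_0 \equiv 0$. Each $S_\ell$ is linear in $z$, so each $e^{-S_\ell(z)}$ is a convex function of $z$ (composition of the convex function $e^{-x}$ with a linear map).

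Next I would apply Abel's summation-by-parts to the objective
\[
\sum_{\ell=1}^{N} \lambda_{\r(\ell)} \bigl(e^{-S_{\ell-1}(z)} - e^{-S_\ell(z)}\bigr), \qquad N = n(c+1),
\]
obtaining
\[
\lambda_{\r(1)} \;-\; \lambda_{\r(N)} e^{-S_N(z)} \;+\; \sum_{\ell=1}^{N-1} \bigl(\lambda_{\r(\ell+1)} - \lambda_{\r(\ell)}\bigr) e^{-S_\ell(z)}.
\]
By the definition of the ordering $\r$ the coefficients $\lambda_{\r(\ell+1)} - \lambda_{\r(\ell)}$ are all nonpositive, and $\lambda_{\r(N)} \ge 0$ because the variables are nonnegative. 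So each of the non-constant terms is a nonpositive scalar multiple of a convex function, i.e.\ a concave function of $z$. The sum of concave functions (and a constant) is concave, which is what we wanted.

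I do not anticipate any serious obstacle; the only care needed is in the boundary case $p_{\r(\ell)} \in \{0,1\}$, which is handled either by the convention $0 \cdot (+\infty) = 0$ in $a_\ell z_{\r(\ell)}$ (with $e^{-\infty} = 0$) or by pruning variable/threshold pairs with $p_{\r(\ell)} \in \{0,1\}$ before writing the relaxation, since such pairs contribute a constant (or zero) to the objective that is unaffected by $z$.
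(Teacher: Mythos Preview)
Your proposal is correct and follows essentially the same route as the paper: rewrite the telescoping sum via Abel summation so the objective becomes $\lambda_{\r(1)}$ minus a nonnegative combination of the convex terms $\exp\bigl(\sum_{\ell'\le\ell} z_{\r(\ell')}\log p_{\r(\ell')}\bigr)$, then conclude concavity. The only cosmetic difference is that the paper absorbs your boundary term $-\lambda_{\r(N)} e^{-S_N(z)}$ into the sum by setting $\lambda_{\r(N+1)}=0$, and it does not discuss the degenerate $p_{\r(\ell)}\in\{0,1\}$ cases you flag.
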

\begin{proof}
We can rewrite the objective function as follows (assuming $\lambda_{\r(n(c+1)+1)} = 0$ for notational convenience):
\begin{align*}
\lambda_{\r(1)} - \sum_{\ell=1}^{n(c+1)}   \left( \lambda_{\r(\ell)}  - \lambda_{\r(\ell+1)}\right) \cdot \prod_{\ell' = 1}^{\ell} [p_{\r(\ell')}]^{z_{\r(\ell')}} .
\end{align*}

Since the terms $\left( \lambda_{\r(\ell)}  - \lambda_{\r(\ell+1)}\right)$ are nonnegative by the definition of $\r$ and the term $\prod_{\ell' = 1}^{\ell} [p_{\r(\ell')}]^{z_{\r(\ell')}} = \exp(\sum_{\ell' = 1}^{\ell}  z_{\r(\ell')} \log p_{\r(\ell')})$ is convex, it follows that the objective function is concave.
\end{proof}

Finally, we show that we can round a fractional solution to \eqref{eq:convex_relaxation} without losing too much reward. This is where it is essential that our variables are all $\eps$-small.

\begin{lemma}[Randomized Rounding]\label{lemma:randomized_rounding}
Let $\{z_{i, j}\}$ be a fractional solution to program \eqref{eq:convex_relaxation} and $\Obj(z)$ be its corresponding objective value. Let $\{Z_{i, j}\}$ be a random integral solution obtained by choosing for each $i$ an index $j$ with probability proportional to $z_{i, j}$, and then setting $Z_{ij}$ to $1$ and $Z_{ij'}=0$ for $j' \neq i$. Let $\Obj(Z)$ be the value of its corresponding objective.

Then if all random variables in the instance are $\eps$-small, then $\E[\Obj(Z)] \geq (1-O(\epsilon)) \cdot \Obj(z)$.
\end{lemma}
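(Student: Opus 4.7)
The plan is to exploit the alternative (telescoped) expression for the objective derived in the proof of Lemma~\ref{lem:convexity}: writing $B_\ell(z) := \prod_{\ell' \leq \ell} p_{r(\ell')}^{z_{r(\ell')}}$ and setting $\lambda_{r(n(c+1)+1)} := 0$, we have
\[
\Obj(z) \;=\; \sum_\ell \bigl(\lambda_{r(\ell)} - \lambda_{r(\ell+1)}\bigr)\bigl(1 - B_\ell(z)\bigr),
\]
and the same identity holds for the random integer vector $Z$. Since the coefficients $\lambda_{r(\ell)} - \lambda_{r(\ell+1)}$ are all nonnegative by the definition of $r$, subtracting term by term reduces the lemma to the pointwise estimate
\[
\E[B_\ell(Z)] - B_\ell(z) \;\leq\; O(\eps)\,\bigl(1 - B_\ell(z)\bigr) \qquad \text{for every } \ell,
\]
since summing against the same weights then yields $\Obj(z) - \E[\Obj(Z)] \leq O(\eps)\Obj(z)$.

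To prove the pointwise bound, I first factor both sides across variables using independence of the rounding. Setting $J_i^{(\ell)} := \{j : (i,j) \in \{r(1),\dots,r(\ell)\}\}$, $\alpha_i := \sum_{j \in J_i^{(\ell)}} z_{i,j}(1-p_{i,j})$, and $\gamma_i := \prod_{j \in J_i^{(\ell)}} p_{i,j}^{z_{i,j}}$, one directly verifies that $\E[B_\ell(Z)] = \prod_i (1-\alpha_i)$ and $B_\ell(z) = \prod_i \gamma_i$, with $1 - \alpha_i \geq \gamma_i$ per variable by Jensen's inequality applied to the convex map $Z \mapsto \prod_j p_{i,j}^{Z_j}$ (so the gap is automatically nonnegative). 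Invoking $\eps$-smallness, $1 - p_{i,j} = \Pr[X_i \geq t_j] \leq \eps$ for every threshold $t_j > 0$, so $\alpha_i \leq \eps$. Taylor-expanding $\log p_{i,j} = \log\bigl(1 - (1-p_{i,j})\bigr)$ with each $1-p_{i,j} \leq \eps$ gives $-\log\gamma_i = \alpha_i\bigl(1 + O(\eps)\bigr)$; combined with $\log(1-\alpha_i) \leq -\alpha_i$, this yields $(1-\alpha_i)/\gamma_i \leq \exp\bigl(O(\eps\,\alpha_i)\bigr)$, and taking products
\[
\E[B_\ell(Z)]/B_\ell(z) \;\leq\; \exp\bigl(O(\eps)\,S_\ell\bigr), \qquad S_\ell := \sum_i \alpha_i.
\]
The pointwise bound follows from a short case split on $S_\ell$. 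When $S_\ell \leq 1$, the gap is at most $B_\ell(z)\cdot O(\eps S_\ell) \leq O(\eps S_\ell)$, while $1 - B_\ell(z) \geq 1 - e^{-S_\ell} \geq S_\ell/2$, so the inequality holds. When $S_\ell > 1$, $1 - B_\ell(z) \geq 1 - 1/e$ is bounded below by a constant, while an elementary estimate of $e^{-(1-O(\eps))S_\ell} - e^{-S_\ell}$ shows its maximum over $S_\ell$ is $O(\eps)$, again giving the required inequality.

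The main subtlety I anticipate is the edge threshold $t_{\lceil 1/\eps\rceil} = 0$, at which $1 - p_{i,c} = 1$ (not $\leq \eps$) and so the bound $\alpha_i \leq \eps$ fails. This is benign when $z_{i,c} \in \{0,1\}$, since any pair $(i,c)$ with $z_{i,c}=1$ sits at the very end of the ordering $r$, forces $B_\ell(z) = \E[B_\ell(Z)] = 0$ from that position onward, and thus contributes a zero gap; genuinely fractional $z_{i,c}$ can be handled either by a separate deterministic rounding of just those coordinates before the randomized rounding step, or by perturbing the threshold $t=0$ to $\eps^2\,\MAX(\bX)$ and absorbing the resulting $O(\eps\,\MAX(\bX))$ additive error into the slack.
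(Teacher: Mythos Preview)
Your argument is correct and follows the same template as the paper: telescope the objective into nonnegative weights times $(1 - B_\ell)$, then bound $\E[B_\ell(Z)]$ against $B_\ell(z)$ pointwise using $\eps$-smallness. The execution differs in two minor respects. First, where the paper invokes negative association of the $Z_{i,j}$ to factor $\E\bigl[\prod p^{Z}\bigr] \le \prod \E[p^{Z}]$ and then bounds each factor by $p^{(1-O(\eps))z}$, you instead group by the variable index $i$, use independence across $i$, and compute the per-$i$ expectation exactly as $1-\alpha_i$; this is a bit more elementary and sidesteps the negative-association citation entirely. Second, the paper finishes with the single inequality $1 - x^{1-O(\eps)} \ge (1-O(\eps))(1-x)$ applied to $x = B_\ell(z)$, whereas you arrive at the equivalent conclusion via a case split on $S_\ell$; the paper's route is shorter here and you could adopt it in place of your two cases. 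Finally, your remark about the edge threshold $t_c = 0$ is well taken: the paper simply asserts $p_{r(\ell)} \in [1-\eps,1]$ for all $\ell$ without treating the $j=c$ case, so your proposed perturbation of $t_c$ to a tiny positive value in fact patches a small gap the paper leaves implicit.
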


\begin{proof}
Similarly as in the proof of Lemma \ref{lem:convexity}, rewrite the objective function $\Obj(z)$ in the form
\begin{eqnarray*}
\Obj(z) &=& \lambda_{\r(1)} - \sum_{\ell=1}^{n(c+1)}   \left( \lambda_{\r(\ell)}  - \lambda_{\r(\ell+1)}\right) \cdot \prod_{\ell' = 1}^{\ell} [p_{\r(\ell')}]^{z_{\r(\ell')}} \\
&=& \sum_{\ell=1}^{n(c+1)}   \left( \lambda_{\r(\ell)}  - \lambda_{\r(\ell+1)}\right) \cdot \left(1 - \prod_{\ell' = 1}^{\ell} [p_{\r(\ell')}]^{z_{\r(\ell')}} \right).
\end{eqnarray*}

Similarly, we have that
\begin{equation} \label{eq:objZ}
\Obj(Z) = \sum_{\ell=1}^{n(c+1)}   \left( \lambda_{\r(\ell)}  - \lambda_{\r(\ell+1)}\right) \cdot \left(1 - \prod_{\ell' = 1}^{\ell} [p_{\r(\ell')}]^{Z_{\r(\ell')}} \right).
\end{equation}

Since all variables $X_i$ are $\eps$-small, we have that $p_{\r(\ell)} \in [1-\eps, 1]$ for all $\ell$. Note that for $p \in [1-\eps, 1]$, and a Bernoulli random variable $Z$ with probability $z$, it is true that $\E[p^{Z}] = (1-z) + zp \leq e^{-(1-p)z} = p^{z ((1-p)/\log 1/p)} = p^{(1-O(\eps))z}$. Secondly, note that while the variables $Z_{i,j}$ are not independent, they are negatively associated (since $\sum_{j}Z_{i, j} = 1$ must hold; see e.g. \cite{joag1983negative}). We therefore have that (for any $1 \leq \ell \leq n(c+1)$)  
\begin{equation}\label{eq:prod_ineq}
\E\left[\prod_{\ell' = 1}^{\ell} [p_{\r(\ell')}]^{Z_{\r(\ell')}}\right] \leq \left(\prod_{\ell' = 1}^{\ell} [p_{\r(\ell')}]^{z_{\r(\ell')}}\right)^{1-O(\eps)}
\end{equation}
where the expectation is over the randomized rounding. Substituting this into \eqref{eq:objZ} (and using the fact that $1 - x^{1-\eps} \geq (1-x)(1-\eps)$ for $x, \eps \in [0, 1]$), we have that
\begin{eqnarray*}
\E[\Obj(Z)] &=& \sum_{\ell=1}^{n(c+1)}   \left( \lambda_{\r(\ell)}  - \lambda_{\r(\ell+1)}\right) \cdot \left(1 - \left(\prod_{\ell' = 1}^{\ell} [p_{\r(\ell')}]^{z_{\r(\ell')}}\right)^{1-O(\eps)} \right) \\
&\geq& (1-O(\eps))\sum_{\ell=1}^{n(c+1)}   \left( \lambda_{\r(\ell)}  - \lambda_{\r(\ell+1)}\right) \cdot \left(1 - \left(\prod_{\ell' = 1}^{\ell} [p_{\r(\ell')}]^{z_{\r(\ell')}}\right) \right) \\
&=& (1-O(\eps))\Obj(z).
\end{eqnarray*}
\end{proof}

We now have all the ingredients to prove Theorem \ref{thm:small_ordering}. The algorithm follows naturally from the previous lemma: we solve program \eqref{eq:convex_relaxation}, then obtain an integral solution via randomized rounding, allocate the thresholds to variables according to the integral solution and then we order the variables/threshold pairs according to $\lambda_{i,j}$.

\begin{proof}[Proof of Theorem \ref{thm:small_ordering}]
By Lemma \ref{lem:discretizedthresholds} there is a policy using only thresholds in $t_1, \hdots, t_c$ that achieves $(1-2\eps) \OPT_\free(\bX)$. Hence by Lemma \ref{lem:policy_to_solution} the optimal value $\Obj^*$ of program \eqref{eq:convex_relaxation} should be at least $(1-2\eps) \OPT_\free(\bX)$ since it has a feasible integral solution with at least that value. By applying the rounding technique in Lemma \ref{lemma:randomized_rounding} we obtain an integral solution of value at least $(1-O(\eps)) \Obj^* = (1-O(\eps)) \OPT_\free(\bX)$. Again by Lemma \ref{lem:policy_to_solution} we can obtain a policy achieving that value by assigning thresholds according to this integral solution and sorting by $\lambda_{i,j}$.

For the running time, note that the concave program has $O(n/\epsilon)$ variables and constraints and the remaining randomized rounding step and sorting take time $O(n/\epsilon)$ and $O(n \log n)$ respectively. Hence the total running time is $O(\poly(n/\epsilon))$. 
\end{proof}

\subsection{Optimal orderings for general variables}\label{sec:general_ordering}

We will solve the general case in two steps. First we will adapt the algorithm in the previous step to handle the case where all but $\tilde{O}(1/\eps^2)$ variables are $\eps$-small. We will then use our decomposition technique to reduce the general case to this setting.

\paragraph{Step 1: Constant number of non-$\eps$-small variables}

We first describe how to modify the algorithm of Section \ref{sec:small_ordering} so that it works in the presence of up to $\tilde{O}(1/\eps^{2})$ non-$\eps$-small variables. We will accomplish this by enumerating over all assignments of thresholds to these variables and showing we can incorporate this additional information into the convex program. 

Specifically, assume we have an instance $\bX = \{X_1, X_2, \dots, X_n\}$ where variables $X_{1}$ through $X_{n-k}$ are $\eps$-small (for some $k = \tilde{O}(1/\eps^2)$). Let $f:\{n-k+1, \dots, n\} \rightarrow \{0, \dots, c\}$ be an arbitrary assignment mapping the $k$ non-$\eps$-small variables to one of the $c$ thresholds $t_{j}$. Note that there are $(c+1)^{k} \approx (1/\eps)^{1/\eps^2}$ such functions $f$. 

Now, for a fixed $f$, consider the following convex program in the variables $z_{i, j}$:

\begin{equation}\label{eq:convex_relaxation_gen}\tag{CP$_f$}
\begin{aligned}
& \max_z \sum_{\ell = 1}^{n(c+1)} \lambda_{\r(\ell)} \cdot \left(\prod_{\ell' = 1}^{\ell-1} [p_{\r(\ell')}]^{z_{\r(\ell')}} - \prod_{\ell' = 1}^{\ell} [p_{\r(\ell')}]^{z_{\r(\ell')}} \right) \quad {s.t.} \\
& \qquad \sum_{j=0}^c z_{i,j} = 1, \forall i \in [n] \\
& \qquad z_{i, f(i)} = 1, \forall n-k+1 \leq i \leq n \\
& \qquad z_{ij}\geq 0, \forall i \in [n], j \in \{0, \hdots, c\}\\
\end{aligned}   
\end{equation}

Note that this convex program is equivalent to convex program \eqref{eq:convex_relaxation}, with the exception that the variables $z_{i, j}$ for $n-k+1 \leq i \leq n$ have been fixed. This corresponds to fixing the assignment of the random variables $X_{n-k+1}, \dots, X_{n}$. 

The logic in Section \ref{sec:small_ordering} continues to hold with minimal modification. Integral solutions of \eqref{eq:convex_relaxation_gen} continue to map one-to-one to the rewards of policies assigning the respective thresholds (as in Lemma \ref{lem:policy_to_solution}) and randomized rounding still produces an integral solution with value at least $(1-O(\eps))$ of the optimal objective (as in Lemma \ref{lemma:randomized_rounding}). The only difference is in the proof of Lemma \ref{lemma:randomized_rounding}, which requires the variables $z_{i,j}$ to be small in inequality \eqref{eq:prod_ineq}. However, since the $z_{ij}$ are fixed to be integral for the variables that are not small, $Z_{ij}=z_{ij}$ for those variables, and hence \eqref{eq:prod_ineq} continues to hold.

\begin{lemma}\label{lemma:ordering_few_non_small}
Let $\bX = \{X_1, \hdots, X_n\}$ be a collection of random variables where at most $\tilde{O}(1/\eps^2)$ are not $\eps$-small, then we can compute a policy in time $O(\exp(\tilde{O}(1/\eps^2)) \poly(n/\eps))$ that achieves an expected reward of $(1-O(\eps)) \OPT_\free(\bX)$.
\end{lemma}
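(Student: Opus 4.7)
The plan is to enumerate over all possible threshold assignments to the $k = \tilde{O}(1/\eps^2)$ non-$\eps$-small variables, and for each assignment solve a variant of the concave program from Section~\ref{sec:small_ordering}. First I would invoke Lemma~\ref{lem:discretizedthresholds} to reduce attention to policies whose thresholds all lie in the discretized set $\{t_0, t_1, \dots, t_c\}$ with $c = \lceil 1/\eps\rceil$, at a cost of only a $(1-O(\eps))$ factor. Because there are only $c+1 = O(1/\eps)$ permitted thresholds and at most $k = \tilde{O}(1/\eps^{-2})$ non-small variables, the number of assignments $f:\{n-k+1,\dots,n\} \to \{0,\dots,c\}$ is at most $(c+1)^k = \exp(\tilde{O}(1/\eps^2))$, which matches the outer loop in the target running time.

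For each fixed $f$, the plan is to solve the convex program \eqref{eq:convex_relaxation_gen}. The objective is concave in the free variables $z_{i,j}$ by the same argument as in Lemma~\ref{lem:convexity} (the fixed assignment only hard-wires certain $z_{i,j}$ to $0$ or $1$, which does not affect concavity in the remaining coordinates). So this can be solved in time $\poly(n/\eps)$ by a standard convex programming solver. Let $\Obj_f^*$ denote the optimal value. Among all $f$, let $f^*$ denote the one coming from the restriction of the optimal discretized policy to the non-small variables; by Lemma~\ref{lem:policy_to_solution} (applied to the integral solution encoding that discretized policy), $\Obj_{f^*}^*$ is at least $(1-2\eps)\OPT_{\free}(\bX)$. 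Hence $\max_f \Obj_f^* \geq (1-O(\eps))\OPT_{\free}(\bX)$.

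Next I would apply the randomized rounding of Lemma~\ref{lemma:randomized_rounding} to the fractional optimum of \eqref{eq:convex_relaxation_gen} for the best $f$. The main thing to verify is that the rounding analysis still goes through even though variables $X_{n-k+1},\dots,X_n$ are not $\eps$-small. The proof of Lemma~\ref{lemma:randomized_rounding} only needs the bound $\E[p_{\r(\ell')}^{Z_{\r(\ell')}}] \leq p_{\r(\ell')}^{(1-O(\eps))z_{\r(\ell')}}$ for every index $\ell'$. The estimate used $\eps$-smallness of the corresponding variable, but for the constrained indices we have $z_{i,f(i)} = 1$ so randomized rounding returns $Z_{i,f(i)} = 1$ deterministically and the inequality becomes an equality $p^Z = p^z$ trivially. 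Hence the inequality in \eqref{eq:prod_ineq} continues to hold term-by-term, and the same factor $(1-O(\eps))$ loss is sustained. Rounding therefore yields an integral solution of value at least $(1-O(\eps))\Obj_{f^*}^* \geq (1-O(\eps))\OPT_{\free}(\bX)$, from which we recover a policy via Lemma~\ref{lem:policy_to_solution} by sorting variable-threshold pairs in decreasing order of $\lambda_{i,j}$.

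Finally the algorithm returns the best policy over all $(c+1)^k = \exp(\tilde{O}(1/\eps^2))$ choices of $f$. Each iteration costs $\poly(n/\eps)$ for the convex program, rounding, and sorting, so the total running time is $\exp(\tilde{O}(1/\eps^2)) \cdot \poly(n/\eps)$ as required. The main subtlety in the argument is the previous paragraph, where one must check that the smallness assumption in Lemma~\ref{lemma:randomized_rounding} can be relaxed precisely for the coordinates that are frozen by $f$; this is where the fact that the non-small variables are handled by enumeration rather than by rounding is essential.
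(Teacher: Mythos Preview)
Your proposal is correct and follows essentially the same approach as the paper: enumerate assignments of discretized thresholds to the $k$ non-small variables, solve the constrained convex program \eqref{eq:convex_relaxation_gen} for each, and apply randomized rounding, noting that the rounding analysis survives because the non-small coordinates are frozen so $Z_{i,j}=z_{i,j}$ trivially there. The paper's proof is terser but makes exactly the same observations, including the key point about why inequality \eqref{eq:prod_ineq} continues to hold for the fixed indices.
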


\begin{proof}
Let $k = \tilde{O}(1/\eps^2)$ be the number of variables that are not $\eps$-small. By Lemma \ref{lem:discretizedthresholds} there is a policy using only thresholds in $t_1, \hdots, t_c$ that achieves $(1-2\eps) \OPT_\free(\bX)$. Let $f$ be the assignment of non-$\eps$-small variables to thresholds in this policy. By Lemma \ref{lem:policy_to_solution} the optimal value $\Obj^*$ of program \eqref{eq:convex_relaxation_gen} should be at least $(1-2\eps) \OPT_\free(\bX)$ since it has a feasible integral solution with at least that value. By applying the rounding technique in Lemma \ref{lemma:randomized_rounding} we obtain an integral solution of value at least $(1-O(\eps)) \Obj^* = (1-O(\eps)) \OPT_\free(\bX)$. Again by Lemma \ref{lem:policy_to_solution} we can obtain a policy achieving that value by assigning thresholds according to this integral solution and sorting by $\lambda_{i,j}$.

To find $f$, we loop over all $(1/\eps)^{1/\eps^2} = \exp(\tilde{O}(\eps^{-2}))$ possibilities for $f$. Since solving the convex program for a given $f$ takes time $\poly(n/\eps)$, this implies the desired time complexity.
\end{proof}

\paragraph{Step 2: General case via variable decomposition}
Finally, we reduce the general case to the case where only all but a few variables are $\eps$-small. This follows from the same machinery developed in Theorems \ref{thm:nearby_weak_intro} and \ref{thm:nearby_weak}. We will need slighly different versions of those theorems which we prove below. The first lemma is a variant of the coupling argument in Lemma \ref{lem:freqcoupling} for $\OPT_{\free}$ instead of $\MAX$. See Appendix \ref{apx:optimal_ordering} for the proof.

\begin{lemma}\label{lem:optcoupling}
Let $\bX = \{X_1, X_2, \dots, X_n\}$ be a collection of $n$ random variables. Let $\bX_{\mathrm{sub}} \subset \bX$ be any subset of $k$ of these random variables, and let $\bX_{\rand} \subseteq \bX_{\mathrm{sub}}$ be a uniformly randomly chosen subset of $\bX_{\mathrm{sub}}$ of size $r$. Let $\bX' = \bX \setminus \bX_{\rand}$. Then

\begin{equation}\label{eq:opt_free_approx}
\E[\OPT_{\free}(\bX')] \geq \left(1 - \frac{r}{k}\right)\OPT_{\free}(\bX).
\end{equation}
\end{lemma}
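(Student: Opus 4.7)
The plan is to use a simple simulation/coupling argument: take the optimal free-order policy $\Pi^*$ on $\bX$ and convert it into a feasible free-order policy $\Pi'$ on $\bX'$ by using the same inspection order (with the deleted variables excised) and the same per-variable acceptance thresholds. Whenever $\Pi^*$ would have inspected an $X_i \in \bX_{\rand}$, policy $\Pi'$ simply skips over that slot (equivalently, pretends the value is below threshold and never accepts). Because $\Pi'$ is a feasible free-order policy on $\bX'$, we have $\OPT_{\free}(\bX') \geq U(\Pi')$, so it suffices to lower-bound $\E[U(\Pi')]$.

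The next step is a per-variable accounting. For each $X_j \in \bX$, let
\[
\rho_j \;=\; \E\bigl[X_j \cdot \mathbf{1}[\Pi^* \text{ accepts } X_j]\bigr],
\]
so $\OPT_{\free}(\bX) = \sum_j \rho_j$. I would then argue that for every $X_j \in \bX'$, the probability $\Pi'$ accepts $X_j$ is at least the probability $\Pi^*$ accepts $X_j$. The reason is that $\Pi^*$ accepts $X_j$ iff $X_j$ clears its threshold and no earlier variable in the order was accepted, while $\Pi'$ accepts $X_j$ iff $X_j$ clears the same threshold and no earlier \emph{surviving} variable was accepted --- strictly a weaker condition. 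Since the threshold (hence the conditional expectation of $X_j$ given acceptance) is unchanged, the contribution of each surviving $X_j$ to $\E[U(\Pi')]$ is at least $\rho_j$, and summing the disjoint accept-events gives
\[
\E[U(\Pi') \mid \bX_{\rand}] \;\geq\; \sum_{X_j \in \bX'} \rho_j.
\]

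Finally, I would take expectation over the random choice of $\bX_{\rand}$. Each $X_j \notin \bX_{\mathrm{sub}}$ survives with probability $1$, and each $X_j \in \bX_{\mathrm{sub}}$ survives with probability $1 - r/k$ by symmetry of the uniform random subset. Therefore
\[
\E[\OPT_{\free}(\bX')] \;\geq\; \sum_{X_j \notin \bX_{\mathrm{sub}}} \rho_j \;+\; \Bigl(1 - \tfrac{r}{k}\Bigr)\sum_{X_j \in \bX_{\mathrm{sub}}} \rho_j \;\geq\; \Bigl(1 - \tfrac{r}{k}\Bigr)\sum_j \rho_j \;=\; \Bigl(1-\tfrac{r}{k}\Bigr)\OPT_{\free}(\bX),
\]
which is \eqref{eq:opt_free_approx}.

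The only subtle point --- and the step I would verify carefully --- is the monotonicity claim that removing variables from an execution of a fixed ordered threshold policy can only \emph{increase} the acceptance probability of any surviving variable. This relies on the fact that $\Pi^*$ can be taken to be stateless (Lemma~\ref{lem:opt_policy}), so that the threshold applied to $X_j$ depends only on $j$ and its position relative to the surviving variables, not on the realized values of the deleted ones. Everything else is a straightforward expectation computation.
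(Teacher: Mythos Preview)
Your proof is correct and close in spirit to the paper's, but the mechanism is different. The paper constructs $\Pi'$ by \emph{simulating} the deleted variables with fresh virtual samples and running $\Pi^*$ verbatim on the full (partly virtual) instance; it then argues that $\Pi^*$ lands on a virtual item with probability at most $r/k$, so with probability at least $1-r/k$ the reward of $\Pi'$ coincides with that of $\Pi^*$. You instead \emph{delete} the removed variables from the ordered threshold policy and invoke the monotonicity of such policies: the event ``$\Pi^*$ accepts $X_j$'' is contained in ``$\Pi'$ accepts $X_j$'' for every surviving $j$, so each surviving variable's contribution can only go up. Both routes finish with the same per-item survival calculation over the uniform choice of $\bX_{\rand}$. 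Your version is slightly more elementary (no virtual sampling needed) and makes the monotonicity of stateless threshold policies explicit; the paper's version has the advantage of working verbatim even if $\Pi^*$ were adaptive, since the virtual samples recreate the exact execution. One small wording point: in your last paragraph, the threshold applied to $X_j$ in your $\Pi'$ depends only on $j$ (not on ``its position relative to the surviving variables''); that is precisely why the event-inclusion step goes through.
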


We are now ready to show the proof of the main result. The only missing step is the reduction from the general case to the case where only a few variables are not $\eps$-small:

\begin{proof}[Proof of Theorem \ref{thm:general_ordering}]
Our proof will proceed very similarly to the application of decomposition in Theorems \ref{thm:nearby_weak_intro} and \ref{thm:nearby_weak}. Fix an instance $\bX = \{X_1, X_2, \dots, X_n\}$ of non-negative random variables. As in Theorem \ref{thm:nearby_weak}, let

$$Y_i(t) = \max(X_i, t) \text{ and } Z_i(t) = Y_i(t) - t.$$ 

As we increase $t$, more of the variables $Z_i(t)$ become $\eps$-small. Fix $k = \Theta(\eps^{-2}\log \eps^{-1})$, and let $t^*$ be the supremum over all $t$ such that at most $k$ of the variables $Z_i(t^*)$ are \textit{not} $\eps$-small. This means that for $k$ values of $i$, $\Pr[Z_i(t^*) > 0] \geq \eps$, and for the remaining indices $Z_i(t^*)$ is $\eps$-small.

Let $\bX_{\bbig}$ be the set of $X_i$ corresponding to these $k$ indices. Our policy $\Pi$ will proceed as follows. We begin by sampling a random subset of $r = \eps k = \Theta(\eps^{-1}\log(\eps^{-1}))$ variables from $\bX_{\bbig}$ and call this subset $\bX_{\rand}$. Let $\bX' = \bX \setminus \bX_{\rand}$, and let $\bZ'$ be the collection of $Z_i(t^*)$ corresponding to $\bX'$. 

By construction $\bZ'$ contains at most $\tilde{O}(\eps^{-2})$ variables that are not $\eps$-small. Using Lemma \ref{lemma:ordering_few_non_small}, construct a policy $\Pi'$ that achieves an expected reward of $(1-\eps)\OPT_{\free}(\bZ')$ on $\bZ'$. Our policy begins by processing all the elements in $\bX'$ in the order their corresponding elements in $\bZ'$ would be processed by $\Pi$. For each $X_i$ we see, we compute $Z_i = \max(X_i, t^*) - t^*$ and feed it to $\Pi'$. If $\Pi'$ accepts, we accept the corresponding $X_i$; otherwise we continue onwards (we assume without loss of generality here that $\Pi'$ never accepts an element with zero reward, so this process only accepts variables $X_i \geq t^*$). If $\Pi'$ makes it through all of $\bZ'$ without accepting, we then begin processing the elements in $\bX_{\rand}$. We accept the first element we see greater than or equal to $t^*$.

We will show that $\E[U(\Pi)] \geq (1-\eps)\OPT_{\free}(\bX)$. To do so, first note that since each $X_i \in \bX_{\rand}$ also belongs to $\bX_{\bbig}$, it satisfies $\Pr[X_i > t^*] \geq \eps$. Then

$$\Pr[\max(\bX_{\rand}) \leq t^*] \leq (1-\eps)^{|\bX_{\rand}|} = (1-\eps)^{r} = (1 - \eps)^{\Theta(\eps^{-1}\log\eps^{-1})} \leq \eps.$$

It follows that with probability at least $(1-\eps)$, our policy is guaranteed to receive reward at least $t^*$ (since we only accept $X_i$ if $X_i \geq t^*$). Conditioning on this event, we can see that the additional reward over $t^*$ generated by $\Pi$ can be lower bounded by the performance of $\Pi'$; that is, $$\E[U(\Pi)] \geq (1-\eps) t^* + \E[U(\Pi')].$$ 
Now, by construction of $\Pi$, we have that

$$\E[U(\Pi')] \geq (1-\eps)\OPT_{\free}(\bZ').$$

On the other hand, note that since each $Z_i \geq X_i - t^*$, we have that 
$$t^* + \OPT_{\free}(\bZ') \geq \OPT_{\free}(\bX').$$ 
Finally, by Lemma \ref{lem:optcoupling}, we have that
$$\OPT_{\free}(\bX') \geq \left(1 - \frac{r}{k}\right)\OPT_{\free}(\bX) = (1 - \eps)\OPT_{\free}(\bX).$$
Putting the last four equations together, we have that

$$\E[U(\Pi)] \geq (1-\eps)^2\OPT_{\free}(\bX) = (1 - O(\eps))\OPT_{\free}(\bX).$$

This shows that the policy constructed is $\eps$-optimal. Note also that this reduction can be performed in linear time, so the running time is the same as that for solving the instance in Lemma \ref{lemma:ordering_few_non_small}.
\end{proof}

\bibliographystyle{plainnat}
\bibliography{prophets}

\appendix

\section{Missing proofs from Section \ref{sect:small}}

\subsection{Proof of Lemma \ref{lemma:iid_splitting}}

\begin{proof}[Proof of Lemma \ref{lemma:iid_splitting}]
To see that $\MAX(\bX) = \MAX(\bY)$ observe that the distribution of $$\Pr[\max_i X_i \leq t] = F(t)^n = (F(t)^{1/k})^{nk} = \Pr[\max_i Y_i \leq t].$$ Since $\max_i Y_i$ and $\max_i X_i$ have the same c.d.f., they should have the same expectation.

To show that $\OPT(\bX) \geq \OPT(\bY)$ consider a stopping time $\tau$ such that $\E[Y_\tau] = \OPT(\bY)$. Now we will construct a stopping time $\tau'$ such that $\E[X_{\tau'}] = \E[Y_\tau]$.

Let $\bY_k$ be the distribution over $k$-element sequences $(Y_1, Y_2, \dots, Y_k)$ where each element $Y_i$ is i.i.d. with c.d.f. $F^{1/k}$. Let $\bY_k(x)$ be the resulting distribution of $\bY_k$ conditioned on $\max Y_i = x$. Consider the following procedure for generating a stopping time $\tau'$. For each element $X_i$, we will generate elements $Y_{(i-1)k + 1}$ through $Y_{ik}$ by sampling from $\bY_k(X_i)$. If the stopping time for $Y_i$ is $\tau = j$ where $(i-1)k+1 \leq j \leq ik$, then we will let the stopping time for $X_i$ be $\tau' = i$. Note that this is a valid stopping time for $X_i$, since the values of $Y_1$ until $Y_j$ only depend on the values of $X_1$ up to $X_i$. Moreover, note that (over all random choices of $X_i$) each $Y_j$ is distributed independently according to $F^{1/k}$. Finally, note that $X_{\tau'} = X_i = \max_{(i-1)k + 1 \leq j \leq ik} Y_j \geq Y_{\tau}$, so $\E[X_{\tau'}] \geq \E[Y_{\tau}]$. 
\end{proof}

\begin{proof}[Proof of Lemma \ref{lemma:kertz_small}]
Let $F$ be the c.d.f. of the i.i.d. variables $\bX$ in Theorem \ref{thm:kertz_upper_bound}. Assume without loss of generality that $F(0) > 0$ (if not, we can modify $X_i$ so that it equals $0$ with probability $\delta' \ll \delta$ while preserving the inequality).

If these random variables $X_i$ are $\eps$-small, then we are done. If not, note that for some sufficiently large $k$, a random variable with c.d.f. $F^{1/k}$ is $\eps$-small. In particular, it suffices to take $k \geq \log F(0)/\log (1-\eps)$ to ensure that $F(0)^{1/k} \geq (1-\eps)$. Now consider a set $\bY$ of $nk$ random variables distributed with c.d.f. $F^{1/k}$. By Lemma \ref{lemma:iid_splitting}, if we had 
$\OPT(\bY) > (\beta+\delta) \MAX(\bY)$ then we would have also $\OPT(\bX) > (\beta+\delta) \cdot \MAX(\bX)$ violating Kertz's upper bound.
\end{proof}

\begin{proof}[Proof of Lemma \ref{lemma:differential_equation}]
  Consider the following ordinary differential equation problem:
  \begin{equation}\label{eq:prob_b}
    \frac{dy}{dt} = y (\log y -1) - (\beta^{-1} -1)
    \quad \text{and} \quad y(0) = 1
  \end{equation}
  It is simpler to solve for the inverse function of $y(t)$ which we denote by 
  $t(y)$. We know that:
  $$\frac{dt}{dy} = [ y (\log y -1 ) - (\beta^{-1} -1) ]^{-1} $$
  It follows that the solution can be obtained by simple integration:
  $$t(y) = t(1) - \int_y^1 [ y (\log y -1 ) - (\beta^{-1} -1) ]^{-1} dy$$
  By the Kertz equation, if we take $\beta$ to be the Kertz bound, then $t(0) =
  1$. Therefore the inverse $y(t)$ satisfies $y(0) = 1$, $y(1) = 0$ and equation
  \eqref{eq:prob_b}. Finally, we show that it also satisfies condition
  \eqref{eq:prob_a}. To see this, take the derivative of equation \eqref{eq:prob_b},
  obtaining:
  $$y'' = y' (\log y - 1) + y \frac{y'}{y} = y' \cdot \log y$$
  which can be re-written as:
  $$\log(y) = [\log(-y')]'$$
  Notice that since $y' < 0$ we need to write $\log(-y')$ instead of $\log(y')$.
  Integrating from $0$ to $t$ each expression, we get:
  $$\int_0^t \log y(s) ds = \log(-y'(t)) - \log(-y'(0))$$
  Since $y'(0) = -\beta^{-1}$, by replacing $y(0) = 1$ in equation
  \eqref{eq:prob_b}, we have that:
  $$\int_0^t \log y(s) ds = \log(-y'(t)) - \log(-\beta^{-1}) = \log (-\beta \cdot
  y'(t))$$ 
  which is exactly the condition in equation \eqref{eq:prob_a}.
\end{proof}

\subsection{Proof of Theorem \ref{thm:upper_bound}}\label{apx:proof_thm_upper_bound}

\paragraph{Structure of the optimal policy in the limit} 
For any fixed $n$ the optimal policy consists of a sequence of thresholds above which we accept each variable. For an i.i.d. instance the thresholds depend only on what fraction of the instance was already processed.  In the limit as $n\rightarrow \infty$, we may have a continuous variable $t \in [0,1]$ that we think of as the fraction of variables already observed.  The optimal policy may then be written as a function of the continuous variable $t$. Hence the optimal policy obtained via dynamic programming and the optimal time-based policy coincide.

\paragraph{Infinitesimal variables} In the limit when $n \rightarrow \infty$ the variables become $\epsilon$-small for $\epsilon \rightarrow 0$ hence the expressions of $w$ and $\tilde{w}$ (equations \eqref{eq:w} and \eqref{eq:tw} in Section \ref{sec:small_notation}) coincide. Similarly the expressions of $R$ and $\tilde{R}$ do coincide.
For that reason the formula for $\ALG$ in Lemma \ref{lemma:policy_bound} holds with equality for $\epsilon=0$. Another by-product of the proof is that the value obtained by a policy $r(t)$ from the interval $[t,1]$ assumed it hasn't terminated by time $t$ is:
\begin{equation}\label{eq:alg_t1}
\ALG([t,1]) =\int_t^1 R(r(s)) \cdot \exp \left( \ \int_t^s \log F(r(u)) du \right) ds
\end{equation}

\paragraph{Equation for the Optimal Policy}
The optimal time-based policy can be obtained by backwards induction following the same priciple of the dynamic programming solution for the optimal policy: at time $t$ one should accept any variable with value larger than what the policy can hope to obtain by using the optimal policy on $[t,1]$. Formally, the optimal policy must satisfy:
\begin{equation}\label{eq:optimal_policy}
r(t) = \int_t^1 R(r(s)) \cdot \exp \left( \ \int_t^s \log F(r(u)) du \right) ds
\end{equation}
It is more convenient to work with it in differential form. Derivating the expression above and replacing the definition of $R$ (equation \eqref{eq:defR}) we obtain the following:
\begin{equation}\label{eq:optimal_policy_diff}
r'(t) = \int_{r(t)}^\infty \log F(u) du \quad \text{ s.t. } \quad r(1) = 0
\end{equation}

Now we show that $r^*(t)$ defined in equation \eqref{eq:optimal_r} satisfies equation \eqref{eq:optimal_policy_diff} in $[q,1]$:
$$\begin{aligned}
\int_{r^*(t)}^\infty \log F(u) du & = \int_{r^*(t)}^{r^*(q)} \log F(u) du + \int_{r^*(q)}^\infty \log F(u) du \\ 
& = -\int_{q}^{t} \log F(r^*(t)) {r^*}'(t) dt + (H- r^*(q))\log p \\
& =  -\int_{q}^{t}  \frac{\log y(t)}{y'(t)} dt + \frac{1}{y'(q)} & \text{(by the defns of } H, r^*, F \text{)} \\
& =  -\int_{q}^{t}  \frac{ y''(t)}{(y'(t))^2} dt + \frac{1}{y'(q)}  & \text{(using } y'' = y' \log y \text{)} \\
& = \frac{1}{y'(t)}=r'(t)
\end{aligned}$$
Note that the value of the optimal policy in interval $[q,1]$ is  $r^*(q)$, hence there is no reason to choose any value other than $H$ in $[0,q]$ which is precisely the value of $r^*(t)$ in that interval.

\paragraph{Comparing $\OPT$ and $\MAX$}
Since $r^*$ is the optimal policy we know that $\ALG([q,1]) = r^*(q)$. The probability that some variable in $[0,q]$ has value $H$ is $\exp \left(\int_0^q \log F(H) ds\right) = \exp(q \log p) = p^q$. Hence the value of the optimal policy is:
$$\OPT_q = (1-p^q) \cdot H + p^q \cdot r^*(q) = (1-p^q) \cdot H - p^q  \int_q^1 \frac{1}{y'(t)} dt$$
The prophet benchmark is:
$$\MAX_q  = (H - r^*(q)) (1-p) + \int_0^{r^*(q)} 1-F(u) du $$
The last integral can be evaluated as follows:
$$\int_0^{r^*(q)} 1-F(u) du = -\int_1^q [1-F(r^*(t))] {r^*}'(t) dt = \int_q^1 \frac{y(t)-1}{y'(t)} dt$$
Taking the ratio $\OPT_q / \MAX_q$ and taking the limit as $q \rightarrow 1$ (and hence $p \rightarrow 0$ and $H \log p \rightarrow 1/y'(0)$) we obtain:
$$\lim_{q \rightarrow 0} \frac{\OPT_q}{ \MAX_q} = \frac{y'(0) \cdot \int_0^1 \frac{1}{y'(t)} dt}{1+y'(0) \int_0^1 \frac{1-y(t)}{y'(t)} dt}$$
The only thing left to show is that the expression on the RHS evaluates to $\beta$. Observing that $y'(0) = -1/\beta$ (see equation \eqref{eq:prob_a}) and re-arranging the terms, what we want to prove is equivalent to:
$$\int_0^1 \frac{1}{y'(t)} \left[ \left( 1-\frac{1}{\beta}\right) - y(t) - \beta y'(t) \right] dt = 0$$
Since $y' =  y (\log y -1) - (\beta^{-1} -1)$ by equation \eqref{eq:prob_b} the expression above simplifies to showing that:
$$\int_0^1 \frac{y(t) \log y(t)}{y'(t)} dt = 1-\beta$$
This can be shown using the fact that $y'' = y' \log y$ together with integration by parts:
$$\int_0^1 \frac{y(t) \log y(t)}{y'(t)} dt = \int_0^1 \frac{y''(t) y(t)}{[y'(t)]^2} dt = -\left.\frac{y(t)}{y'(t)}\right|_0^1 + \int_0^1 y'(t) \cdot \frac{1}{y'(t)} dt = 1-\beta$$

\section{Tightness of Kertz bound for imperfect prophets}\label{app:kertz_tight}

In this section we prove Lemma \ref{lem:nearby_tight} showing that the Kertz bound in Theorem \ref{thm:nearby} cannot be improved.

\begin{proof}[Proof of Lemma \ref{lem:nearby_tight}]
By the Kertz upper bound (see \cite{kertz1986}), for any $\eps$ and sufficiently large $n$ we can construct an i.i.d. prophets instance $\bX$ where $\OPT(\bX) < (\beta + \eps)\MAX(\bX)$. Since this is an i.i.d. prophets instance, we additionally have that $\MAX(\bX') \geq (1 - \frac{r}{n})\MAX(\bX)$ (one way to see this is that since all the random variables are identical, with probability $(n-r)/n$, the maximum value in $\bX$ will be among the variables in $\bX'$). Let $\alpha - \beta = \delta$. If we pick $\eps < \delta / 2$ and $n > r(1 + \delta / 2)$ it follows that $\OPT(\bX) < \alpha \MAX(\bX')$, as desired.
\end{proof}

\section{Missing proofs from Section \ref{sect:optimal_order}}\label{apx:optimal_ordering}

\begin{proof}[Proof of Lemma \ref{lem:opt_policy}]
To show that the optimal policy is stateless, we proceed inductively. The optimal policy when there is only one random variable is stateless (take it). Assume the optimal policy for any collection of $n-1$ random variables is stateless. Now, the optimal policy for a collection of $n$ random variables must start by choosing one of the variables, looking at it, and deciding whether to take it. If we do not take it, then (since the variables are all independent) the optimal continuation is to run the optimal policy for the remaining $n-1$ variables (which, by induction, is stateless). 

To decide whether to take the first value, assume we are examining variable $X_i$, and let $V_i = \OPT_{free}(\bX_{-i})$ be the value of the optimal policy on the remaining $n-1$ other variables. If $X_i \geq V_i$, then we maximize our utility by taking $X_i$; otherwise, we should pass on $X_i$ and receive $V_i$ utility in expectation. Note that this is a threshold policy. By picking the starting variable $X_i$ that leads to the highest expected utility, it follows that the optimal policy is stateless, completing the induction.

Indeed, this argument shows something stronger: at any point, you should accept an item iff its value is larger than the expected value you would receive on the remaining items. This is captured in equation \eqref{eq:opt_tau} -- the RHS is the expected reward of $\Pi$ when started at the $(k+1)$th element (compare with \eqref{eq:utilpi}). 
\end{proof}

\begin{proof}[Proof of Lemma \ref{lem:discretizedthresholds}]
Let $\Pi$ be the optimal policy for a collection $\bX$ of random variables, and let $c_1$ be the smallest integer such that
\[
(1-c_1\eps)\MAX(\bX) \leq \tau(\pi(1)).
\]
Let $\Pi_1$ be the policy that processes the variables in the same order as $\Pi$ except sets the first threshold to $\tau'(\pi(1)) = (1-c_1\eps)\MAX(\bX)$.  Let us compare the expected rewards $U(\Pi)$ and $U(\Pi_1)$. Note that as a consequence of Lemma \ref{lem:opt_policy} and equation \eqref{eq:utilpi}, we have that

\begin{equation}\label{eq:U}
    U(\Pi) = \lambda_1(1-p_1) + \tau(\pi(1))p_1.
\end{equation}

Similarly, we have that (defining $\lambda'_1$ and $p'_1$ analogously to $\lambda_1$ and $p_1$)

\begin{equation}\label{eq:Up}
    U(\Pi_1) = \lambda'_1(1-p'_1) + \tau(\pi(1))p'_1.
\end{equation}

(Note that the term $\tau(\pi(1))$ in \eqref{eq:Up} is correct and should \textit{not} be $\tau'(\pi(1))$; this represents the expected reward from the remainder of the protocol). Now, note that 

$$\lambda'_1(1-p'_1) - \lambda_1(1-p_1) = \E[X_{\pi(1)} \cdot \mathbbm{1}\{\tau'(\pi(1)) \leq X_{\pi(1)} \leq \tau(\pi(1))\}].$$
and that
$$p_1 - p'_1 = \Pr[\tau'(\pi(1)) \leq X_{\pi(1)} \leq \tau(\pi(1))].$$
It follows that

\begin{eqnarray*}
U(\Pi) - U(\Pi_1) &=& \E[(\tau(\pi(1)) - X_{\pi(1)}) \cdot \mathbbm{1}\{\tau'(\pi(1)) \leq X_{\pi(1)} \leq \tau(\pi(1))\}].\\
&\leq & (\tau(\pi(1)) - \tau'(\pi(1)))\Pr[\tau'(\pi(1)) \leq X_{\pi(1)} \leq \tau(\pi(1))] \\
&\leq & (1-p_1)\eps\MAX(\bX).
\end{eqnarray*}

Similarly, we can consider the policy $\Pi_2$ where $\tau(\pi(2))$ is also rounded down to $(1 - c_2\eps)\MAX$ for the nearest integer $c_2$. Using a similar argument to the above, we find that 
\[
U(\Pi_1) - U(\Pi_2) \leq  p_1(1-p_2) \eps \MAX(\bX).
\]
Indeed, generalizing this to $\Pi_i$ for any $1 < i \leq n$, we can show that %
\[
U(\Pi_i) - U(\Pi_{i+1}) \leq  \left(\prod_{j<i}p_i\right)(1-p_{i})\eps \MAX(\bX).
\]

Summing all these inequalities up, we find that $U(\Pi) - U(\Pi_{n}) \leq \eps\MAX(\bX)$. Since $\MAX(\bX) \leq 2\OPT_{\free}(\bX)$ (by the classic worst-order prophet inequality of \cite{KS77}) and $U(\Pi) = \OPT_{\free}$ (since $\Pi$ is optimal), it follows that $U(\Pi_{n}) \geq (1-2\eps)\OPT_{\free}$.
\end{proof}

\begin{proof}[Proof of Lemma \ref{lem:optcoupling}]
Let $\Pi$ be an optimal policy for $\bX$. We will construct a policy $\Pi'$ for $\bX'$ whose performance satisfies \eqref{eq:opt_free_approx} in expectation.

Consider the following policy $\Pi'$. For each random variable in $\bX \setminus \bX'$, sample a virtual value for this variable. Now, run policy $\Pi$ on all $n$ random variables ($\bX'$ and the $r$ virtually sampled ones). If the policy $\Pi$ terminates on a variable in $\bX'$, return it; if it terminates on a variable in $\bX \setminus \bX'$, accept the next variable in $\bX'$ (alternatively, perform arbitrary actions for the remainder of the policy). 

We claim that $\E[U(\Pi')] \geq (1 - r/k)U(\Pi)$, thus proving the desired result. To see this, note that the only events where the execution of policy $\Pi'$ differs from that of policy $\Pi$ are the cases when $\Pi$ chooses an item we removed from $\bX$. Since we are choosing the $r$ items to remove uniformly at random out of a set of size $k$, this happens with probability at most $r/k$. Therefore, with the remaining $(1 - r/k)$ probability we receive utility $U(\Pi)$ in expectation.
\end{proof}

\end{document}